\DeclareMathOperator*{\argmin}{arg\,min}
\newtheorem{prop}{Proposition}
\theoremstyle{definition}
\newtheorem{definition}{Definition}
\newtheorem{theorem}{Theorem}
\newtheorem{corollary}{Corollary} 
\newtheorem{example}{Example} 
\newtheorem{lemma}[theorem]{Lemma}
\newenvironment{claim}[1]{\par\noindent\underline{Claim:}\space#1}{}
\newenvironment{claimproof}[1]{\par\noindent\underline{Proof:}\space#1}{\hfill $\blacksquare$}
\newtheorem{remark}{Remark}
\begin{document}

\begin{titlepage}
	\title{Discovering optimal fermion—qubit mappings through algorithmic enumeration}
	
	\author{Mitchell Chiew}
	\affiliation{DAMTP, Centre for Mathematical Sciences, University of Cambridge, Cambridge CB30WA, UK}
	\email{mlc79@cam.ac.uk}
	\orcid{0009-0007-3562-6010 }
	\author{Sergii Strelchuk}
	\orcid{0000-0001-8390-3034}
	\affiliation{DAMTP, Centre for Mathematical Sciences, University of Cambridge, Cambridge CB30WA, UK}
	\maketitle
	
\end{titlepage}

	\begin{abstract}
	Simulating fermionic systems on a quantum computer requires a high--performing mapping of fermionic states to qubits. A key characteristic of an efficient mapping is its ability to translate local fermionic interactions into local qubit interactions, leading to easy--to--simulate qubit Hamiltonians. 
	
	\textit{All} fermion--qubit mappings must use a numbering scheme for the fermionic modes in order for translation to qubit operations. We make a distinction between the unordered, symbolic labelling of fermions and the ordered, numeric labelling of the qubits to which the fermionic system maps. This separation shines light on a new way to design fermion--qubit mappings by making use of the extra degree of freedom -- the enumeration scheme for the fermionic modes. The purpose of this paper is to demonstrate that this concept allows for notions of fermion--qubit mappings that are \textit{optimal}, with regard to any cost function one might choose. Our main example is the minimisation of the  average number of Pauli matrices in the Jordan--Wigner transformations of Hamiltonians for fermions interacting in square lattice--type arrangements.  In choosing the best ordering of fermionic modes for the Jordan--Wigner transformation, and unlike other popular modifications, our prescription does not cost additional resources such as ancilla qubits.
	
	We demonstrate how Mitchison and Durbin's enumeration pattern is optimal for minimising the average Pauli weight of Jordan--Wigner transformations of systems interacting in square fermionic lattices. This leads to qubit Hamiltonians consisting of terms with average Pauli weights 13.9\% shorter than previously known. Furthermore, by  adding only two ancilla qubits we introduce a new class of fermion--qubit mappings, and reduce the average Pauli weight of Hamiltonian terms by 37.9\% compared to previous methods. For $n$--mode fermionic systems of cellular arrangements of square lattices, we find enumeration patterns which result in $n^{1/4}$ improvement in average Pauli weight over na\"ive enumeration schemes.
\end{abstract}

\section{Introduction}
Simulating physical systems is one of the most promising applications of quantum computers. Fermionic systems are essential components in several fields of theoretical and experimental physics, from quantum physics \cite{wecker2015solving,d2014feynman,friis2013fermionic,barkoutsos2017fermionic} to quantum chemistry and condensed matter \cite{lanyon2010towards,salmhofer2019renormalization,kraus2009quantum} to quantum field theories \cite{moosavian2018faster}. Fermions pose complex, often intractable computational challenges when studied with the aid of classical computers, such as the electronic structure problem \cite{goings2022reliably}, studying properties of gauge theories that govern strong interactions between quarks and gluons \cite{hagiwara2002review}, determining ground state properties of fermionic Hamiltonians \cite{hastings2021optimizing} and many others.

One can break down all quantum algorithms for fermionic simulation into three sequential steps: 1) initialising the quantum register, 2) applying unitary gates to the qubits, and 3) measuring the result to obtain an estimate for the desired molecular property or other quantity of interest. Within this framework, algorithms may encode the fermionic Hamiltonian via first or second quantisation. Fermi--Dirac statistics impose asymmetry on fermionic systems' wavefunctions, and using first quantisation of the system's Hamiltonian, one can incorporate this asymmetry into the qubit basis itself or use the qubits to directly encode the wavefunction into real--time and real--space \cite{aspuru2005simulated,wang2008quantum,kassal2009quantum,kassal2008polynomial}. In contrast, second quantisation encodes the asymmetry into the qubit operators rather than the quantum states \cite{mcclean2014exploiting}, and provides a number of distinct advantages over representations in the first quantisation \cite{verstraete2005mapping,kassal2011simulating}.

Quantum algorithms employing second quantisation require an important fourth, pre-emptive step: the fermion--qubit mapping, which we label 0). In the second quantisation picture, a quantum algorithm must map each term of the fermionic Hamiltonian into a sequence of Pauli matrices acting on qubits. The original problem thus becomes a $k$--local Hamiltonian problem, where $k$ depends on the choice of fermion--qubit mapping.

This work introduces a new approach to defining and designing fermion--qubit mappings, which directly leads to a significant reduction in the complexity of some quantum simulation algorithms. There is practical value in any improvement to the sleekness of quantum simulation algorithms' designs. While the $k$--local Hamiltonian problem is $\QMA$--complete~\cite{kempe2006complexity}, enormous value remains in finding or approximating solutions in the average case--scenarios for practical problems, such as molecular electronic structure \cite{babbush2014adiabatic}. Turning from complexity theory, then, to focus on the more practical elimination of redundant costs in steps 0)--3) of existing fermionic simulation algorithms, there have been a number of recent developments in quantum computing that could make solutions to the above problems feasible \cite{aspuru2005simulated,mcclean2014exploiting,kassal2008polynomial,kassal2009quantum,ortiz2001quantum,babbush2014adiabatic,wang2008quantum}. Many of these approaches \cite{aspuru2005simulated,wang2008quantum} rely on phase estimation \cite{kitaev1995quantum,abrams1999quantum} and thus require an impractically large number of qubits and operations in order to keep the register of the quantum computer coherent \cite{peruzzo2014variational}. Algorithms for near--term quantum computers have sprung up in answer to these challenges, such as the variational quantum eigensolver \cite{peruzzo2014variational,mcclean2016theory,mcclean2014exploiting}, which is only just coming within reach of current technology.

\begin{figure}
	\centering
	\includegraphics[width=\linewidth]{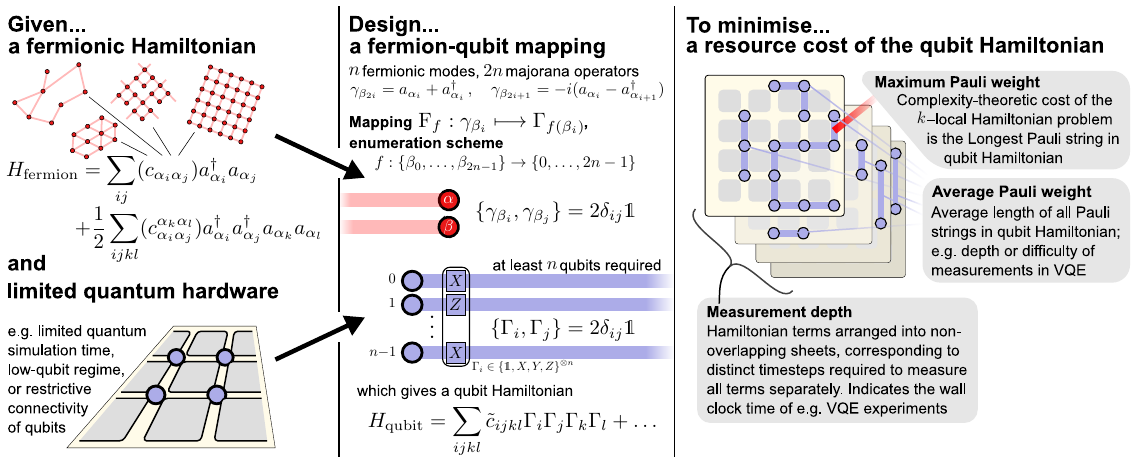}
	\caption{Schematic showing the purpose of a fermion--qubit mapping, the factors that influence its effectiveness, and various cost functions under which one might judge the effectiveness of a mapping.}
	\label{fig:poster-like}
\end{figure}

Some improvements to quantum simulation algorithms come with no caveats: for example, careful parallelisation of Hamiltonian terms during Trotterisation in step 2) of a phase estimation algorithm can cancel out long strings of Pauli matrices, a simple adjustment that does not come at the expense of any additional resources \cite{hastings2015improving}. However, the search for the best fermion--qubit mapping has largely not been met with much progress because the task of simulating fermionic interactions in a complexity--theoretic sense is very difficult to formalise. To bring the problem of fermionic simulation closer to meeting constraints of near--term quantum technology, it is crucially important to establish optimal protocols that strike a balance between competing resources, such as the number of qubits, local quantum operation counts, or quantum circuit depth. This is a complex challenge, as the definition of an optimal protocol may depend on the topology of the physical system to be simulated (e.g.\ the optimal strategy for simulating a fermionic square lattice may be different to that of a molecule), or the limitations of the quantum hardware at hand (e.g.\ only permitting nearest--neighbour interactions between qubits).

The century--old work of Jordan and Wigner \cite{Jordan1993} inspired the idea of fermion--qubit mappings, when Lieb et al.\ used their method to solve the $XY$--model Hamiltonian classically in 1961 \cite{lieb1961two}. In 2001, Ortiz employed the transformation as a fermion--qubit mapping in the first explicit proposal of a quantum simulation of a fermionic Hamiltonian \cite{ortiz2001quantum}. The Jordan--Wigner transformation is intuitive and performs well in simulating nearest--neighbour Hamiltonians acting on 1D chains of fermionic modes, but experiences impracticably large overheads in higher dimensions.

Fermionic systems that permit only local interactions on a 2D or 3D lattice are a major focus of study \cite{barends2015digital, chen2018exact, chen2019bosonization, chen2020exact, chen2023equivalence}. These systems are difficult to simulate, even for quantum computers, because an overwhelming number of local fermionic interactions become non--local -- and hence timely and costly to simulate -- once mapped to the qubit picture. In recent years, there have been a flurry of results tackling this challenge introducing new fermion--qubit mappings as well as generalising the existing ones to higher dimensions \cite{chen2018exact, chen2019bosonization, chen2020exact,  setia2019superfast,jiang2019majorana,phasecraft2020low,steudtner2019quantum,Jiang2020optimalfermionto}. A common theme among these proposals is the use of ancilla qubits \cite{phasecraft2020low, verstraete2005mapping, steudtner2019quantum}. Such mappings can produce local qubit Hamiltonians, and could make small instances of a problem within reach of modest quantum computers.

Our approach makes use of a new degree of freedom in fermion--qubit mappings: the ability to index the fermionic modes in any order, a process we dub \textit{choosing the fermionic enumeration scheme}. While trivial for a string of fermions interacting only with nearest--neighbours modes, the choice of enumeration scheme has the potential to dramatically improve the average locality of the fermion--qubit mappings for fermionic systems in 2D and above. Examining enumeration schemes, and finding the most efficient ones, we show how to reduce various cost functions of the target qubit Hamiltonian relating to scarce physical resources. This requires no additional resources such as additional Hamiltonian terms or ancilla qubits!

Thus, we arrive at the crux of our work: for a given fermionic system and a given quantum computing technology, a fermion--qubit mapping \textit{cannot} be considered optimal unless its fermionic enumeration scheme minimises the scarcest physical resource. Many innovations in fermion--qubit mappings have been in aid of reducing the maximum number of qubits upon which any one term in the Hamiltonian acts \cite{bravyi2002fermionic,jiang2019majorana,steudtner2019quantum,phasecraft2020low} This is an important metric in a complexity--theoretic sense -- it is the `$k$' in `$k$--local Hamiltonian'.

Other metrics may be more practical in near--term quantum algorithms. For example, the number of Pauli measurements in a variational quantum eigensolver depends on the overall number of Pauli matrices appearing in all terms in the qubit Hamiltonian, not just the longest term.
While our method could be tailored to improve many metrics of fermion--qubit mappings, the focus of this work is on improving the average locality of the qubit Hamiltonian terms. The only other work that has used this metric for fermion--qubit mappings is \cite{Jiang2020optimalfermionto}, which we discuss and incorporate into our work in Section \ref{sec:comparison}. 

In a broad class of problem fermionic Hamiltonians with non--local hopping terms, we show that minimising the average Pauli weight of a Jordan--Wigner type mapping is equivalent to the  edgesum problem from graph theory \cite{mitchison1986optimal,garey1974some,garey1978complexity,gareyjohnson1990book}. When compared to the only other ancilla--free mapping for the 2D fermionic lattice -- which uses the S--pattern enumeration scheme, introduced by Verstraete and Cirac~\cite{verstraete2005mapping} -- our method directly reduces the average Pauli weight of the terms in the qubit Hamiltonian by a constant factor ($\approx$\ $13.9\%$). Our optimal fermion--qubit mapping uses a carefully selected enumeration scheme based on a special pattern recognised by Mitchison and Durbin in their seminal work~\cite{mitchison1986optimal}, and culminates in:

{\bf Theorem \ref{thm:physical}} [Informal]
{\it For a fermionic Hamiltonian acting on a system of $n=N^2$ local fermionic modes with interactions only between nearest neighbours on the square $N${}$\times${}$N$ lattice, the Jordan--Wigner transformation that uses the Mitchison--Durbin pattern to enumerate the modes minimises the average Pauli weight of the qubit Hamiltonian.}

The structure of the paper is as follows: in Section~\ref{sec:background} we provide a self--contained introduction to fermion--qubit mappings and identify the role that fermionic enumeration schemes play. We give a new, broad definition that encapsulates all  $n$--mode to $n$--qubit mappings before narrowing our focus to the Jordan--Wigner transformation. We follow with results from complexity theory that will be used to prove our main result.

In Section~\ref{sec:enumeration}, we discuss the maximum Pauli weight and measurement depth of qubit Hamiltonians, which are practical figures of merit worth minimising in fermion--qubit mappings. We present Theorem \ref{thm:physical} for constructing Jordan--Wigner transformations for 2D fermionic lattice systems that they minimise the average Pauli weight of qubit Hamiltonian terms. We argue that our approach can improve simulations for many fermionic systems, and use an heuristic approach to find Jordan--Wigner transformations for certain $n$--mode fermionic systems that provide average Pauli weights shorter than those of naïve alternatives by a factor of $\mathcal{O}(n^{1/4})$.

In Section \ref{sec:aqm}, we explain auxiliary qubit mapping techniques and modify our fermion--qubit mapping to improve Theorem \ref{thm:physical}'s 13.9\% advantage over the Z-- and S--patterns to nearly 38\% using just two ancilla qubits. 

Finally, in Section~\ref{sec:discussion} we discuss open problems and directions for further research. We also mention the qubit routing problem as a potential generalisation of the optimisation problems described in this paper.

\section{Defining fermion--qubit mappings}\label{sec:background}
This section outlines the theory of fermion--qubit mappings:  Section \ref{sec:fqpurpose} describes the motivation, while Section \ref{sec:car} the requirements of a mapping. The naïve definition of the Jordan--Wigner transformation appears in Section \ref{sec:jwnaive}, before we introduce the Jordan--Wigner transformation with fermion enumeration schemes in Section \ref{sec:jw1}, which is the working definition we use for the rest of the paper. We generalise our definition of fermion--qubit mappings in Section \ref{sec:general} to demonstrate that the principle of optimising over enumeration schemes is a valid for improving all fermion--qubit mappings, with a specific example in Section \ref{sec:comparison}. In Section \ref{sec:math} we detail the graph theory related to the problem of finding optimal Jordan--Wigner transformations with fermion enumeration schemes.

\subsection{Goal of fermion--qubit mappings}\label{sec:fqpurpose}

\begin{figure}
	\centering
	\includegraphics[width=0.9\linewidth]{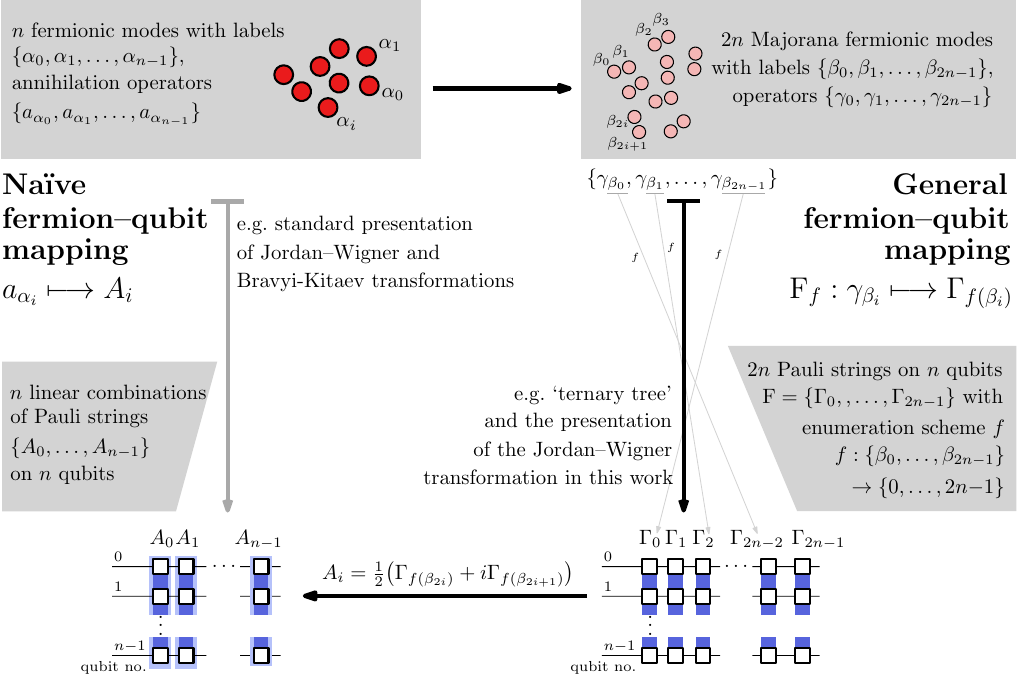}
	\caption{The role of enumeration schemes is hidden in naïve descriptions of fermion--qubit mappings, but becomes apparent in our general definition; it is also recognisable in the ternary tree transformation.}
	\label{fig:fq-mapping}
\end{figure}

Typical second--quantised fermionic Hamiltonians of interest, such as molecular electronic Hamiltonians, describe the energy of quantum  systems with $n$ sites, or \textit{modes}, each of which can be either occupied or unoccupied by a fermion. Use the distinct, unordered, symbolic labels $\{\alpha_0, \alpha_1, \dots, \alpha_{n-1}\}$ to distinguish the $n$ fermionic modes, and denoting the annihilation operator of a fermion in mode $\alpha_i$ by $a_{\alpha_i}$, physical Hamiltonians take the form
\begin{align}
	H_{\text{fermion}} &= \sum_{i,j=0}^{n-1} (c_{\alpha_i \alpha_j}) a_{\alpha_i}^\dagger a_{\alpha_j} \label{eqn:ham}
	 + \frac{1}{2}\sum_{i,j,k,l=0}^{n-1} (c_{\alpha_i \alpha_j}^{\alpha_k \alpha_l}) a_{\alpha_i}^\dagger a_{\alpha_j}^\dagger a_{\alpha_k} a_{\alpha_l}\, , 
\end{align}
where the $c$ are complex coefficients that ensure the hermiticity of the Hamiltonian. As an example, in the Fermi--Hubbard model \cite{hubbard1963electron} the fermions are electrons, and the coefficients $c_{\alpha_i \alpha_j}$ and $c_{\alpha_i \alpha_j}^{\alpha_k \alpha_l}$ are respectively one-- and two--electron overlap integrals. A fermionic Hamiltonian consisting just of quadratic single--particle terms $a_{\alpha_i}^\dagger a_{\alpha_j}$ may easily be diagonalised and its eigenvalues simply read off \cite{nielsen2005fermionic}. The presence of quartic terms $a^\dagger_{\alpha_i} a^\dagger_{\alpha_j} a_{\alpha_k} a_{\alpha_l}$ usually makes simulation of the Hamiltonian classically intractable, and as such an object of interest for prospective quantum algorithms.

The intended use of a fermion--qubit mapping is as a tool to translate from the fermionic picture into Hamiltonians that can be implemented on qubits in a laboratory:
\begin{align}
	&H_{\text{fermion}} \longmapsto  H_{\text{qubit}}  =\  \sum_{ij}( {c}_{\alpha_i \alpha_j}) A_i^\dagger A_j 
	+ \frac{1}{2} \sum_{ijkl} ({c}_{\alpha_i \alpha_j}^{\alpha_k \alpha_l}) A_i^\dagger A_j^\dagger A_k A_l \, . 
\end{align}
A fermion--qubit mapping is a representation of the fermionic algebra in Equation \ref{eqn:ccr1}, characterised by the set $\{A_i\}_{i=0}^{n-1}$ of complex matrix representations of the annihilation operators $\{a_{\alpha_i}\}_{i=0}^{n-1}$.

The operators should be straightforward to write in the basis of $n$--qubit Pauli operators  $\{ \{\mathds{1}, X, Y, Z\}^{\otimes n} \}$, to be practical for real--world quantum technology. We use the notation  $X_i = \sigma^x_i$, $ Y_i= \sigma^y_i$, $Z_i$  $=$ $\sigma^z_i$ for the Pauli operators on the qubit with label $i$.

A further practical caveat is that the fermionic vaccuum state $\ket{\Omega}$, which uniquely satisfies $a_{\alpha_i}\ket{\Omega} = 0$ for all $i$, becomes a product state such as $\ket{0}^{\otimes n}$ on the qubits. The name for this property is \textit{product--preserving}. The popular fermion--qubit mappings -- Jordan--Wigner, Bravyi--Kitaev and ternary tree -- all satisfy this constraint \cite{miller2023bonsai}.

\subsection{Canonical anticommutation relations}\label{sec:car}

As in \cite{bravyi2002fermionic}, our convention is to treat fermionic modes as either occupied or unoccupied, treating any spin--up and spin-down modes as distinct. Thus, a system of $n$ identical fermionic modes inhabits a $2^n$--dimensional complex--valued Hilbert space.

Given a quantum system with $n$ fermionic modes, its Hamiltonian $H_\text{fermion}$ is of the form of Equation \ref{eqn:ham}. The \textit{canonical anticommutation relations} fully describe the fermionic operators:
\begin{align} 
	\{a_{\alpha_i}, a_{\alpha_j}\} = 0\, , \, \{a_{\alpha_i}^\dagger, a_{\alpha_j}^\dagger\} = 0\, , \,  \{a_{\alpha_i}, a_{\alpha_j}^\dagger\} &= \delta_{ij} \mathds{1}\, . \label{eqn:ccr1}
\end{align}

For each fermionic label $\alpha_i \in \{\alpha_0, \dots \alpha_{n-1}\}$, there are two labels  $\beta_{2i}, \beta_{2i+1} \in \{\beta_0, \dots, \beta_{2n-1}\}$ which relate the operators $a_{\alpha_i}$ and $a_{\alpha_i}^\dagger$ via the \textit{Majorana operators}:
\begin{align}
	\gamma_{\beta_{2i}} &= a_{\alpha_i} + a^\dagger_{\alpha_i} \label{eqn:majorana1} \\
	\gamma_{\beta_{2i+1}} &= -i(a_{\alpha_i} - a_{\alpha_i}^\dagger)\, . \label{eqn:majorana2}
\end{align}
The Majorana operators $\{\gamma_{\beta_0}, \gamma_{\beta_1}, \dots, \gamma_{\beta_{2n-1}}\}$ form an alternative set of fermionic operators, equivalent to the complete set of creation and annihilation operators $\{a_{\alpha_0}, a^\dagger_{\alpha_0}, \dots, a_{\alpha_{n-1}}, a^\dagger_{\alpha_{n-1}} \}$. Each of the Majorana operators is its own Hermitian conjugate, and, by construction, neither $\gamma_{\beta_{2i}}$ nor $\gamma_{\beta_{2i+1}}$ are involved in the definition of any other $a_{\alpha_j}$ for $j \neq i$. The Majoranic equivalent to the anticommutation relations in Equation \ref{eqn:ccr1} is thus simply
\begin{align}
	\gamma_{\beta_i} = \gamma_{\beta_i}^\dagger\, , \quad 
	\{\gamma_{\beta_i}, \gamma_{\beta_j}\} = 2 \delta_{ij} \mathds{1}\, .\label{eqn:majoranacc}
\end{align}
for $i,j \in \{0,1,\dots,2n-1\}$.

In Sections \ref{sec:enumeration} and \ref{sec:aqm}, we use the language of creation and annihilation operators $a_\alpha^\dagger, a_{\alpha}$, since they are the ingredients of the Jordan--Wigner transformation. However, Majorana operators are necessary to explore the complete picture of enumeration schemes' role within fermion--qubit mappings, as we explain in Section \ref{sec:general}.

\subsection{The standard (naïve) definition of the Jordan--Wigner transformation}\label{sec:jwnaive}

Suppose that the labels for the fermionic modes were ordered via the scheme  $\alpha_0 \mapsto 0, \alpha_1 \mapsto 1,\dots,\alpha_{n-1}\mapsto n-1$. Then, with the notion of order for the fermionic modes, we could define the system's Fock space with the occupancy number basis $\{\ket{j_{0},j_{1},\dots,j_{{n-1}}} : j_{i} \in \{0,1\}\}$, where $j_{i}$ denotes the occupancy of the $i$th fermionic mode. The annihilation operators act as
\begin{align}
	&a_{\alpha_i} |j_0 \dots \underbrace{0}_{\makebox[0pt]{\footnotesize $i$th mode}}\dots j_{{n-1}}\rangle = 0 \label{eqn:ferm1} \\
	&a_{\alpha_i} \vert j_{0} \dots \underbrace{1}_{\makebox[0pt]{\footnotesize $i$th mode}}\dots j_{{n-1}}\rangle  \label{eqn:ferm2}  =  (-1)^{\sum_{k=1}^{i-1}j_k} \vert j_{0}\dots\underbrace{0}_{\makebox[0pt]{\footnotesize $i$th mode}}\dots j_{{n-1}}\rangle\, ,  
\end{align}
while $a^\dagger_{\alpha_i}$ acts as the Hermitian conjugate of $a_i$. Equations \ref{eqn:ferm1} and \ref{eqn:ferm2} are equivalent to Equation \ref{eqn:ccr1} \cite{nielsen2005fermionic}. This formulation allows us to define the Jordan--Wigner transformation in the original way \cite{Jordan1993}, as a map from an $n$--mode fermionic system to an $n$--qubit system, where the qubits also have ordered labels $0,1,\dots,n-1$:

\begin{definition} \emph{(Naïve Jordan--Wigner transformation.)} \label{defn:naivejw}
	The \textit{naïve Jordan--Wigner transformation} for a system with $n$ fermionic modes is a $2^n$--dimensional representation $\overline{\text{JW}}_{\text{naïve}}$ of the fermionic algebra. In particular, $\overline{\text{JW}}_{\text{naïve}}$ restricts to a bijection between the annihilation operators and a set $\overline{\text{JW}}=\{A_0,A_1,\dots,A_{n-1}\}$:
	\begin{align}
		\overline{\text{JW}}_{\text{naïve}} &:  a_{\alpha_i} \longmapsto A_i   =  \left(\bigotimes_{k=0}^{i-1}Z_k\right)\frac{1}{2} \left(X + iY \right)_{i}\, \, . \label{eqn:aminus} 
	\end{align}
\end{definition}

To verify that this is a valid fermion--qubit mapping, i.e.\ that the $A_i$ replicate Equation \ref{eqn:ccr1}, observe that $\{Z_i,X_i\} = \{Z_i,Y_i\} = 0$. Therefore, the $\{A_i\}_{i=0}^{n-1}$ that characterise $\overline{\text{JW}}_{\text{naïve}}$ satisfy
\begin{align}
	\{A_i^{(\dagger)},A_j^{(\dagger)}\}=0\, , \, \{A_i,A_j^\dagger\} = \delta_{ij}\mathds{1} \, . \label{eqn:fqmapping}
\end{align}

\subsection{A definition for the Jordan--Wigner transformation that incorporates enumeration schemes} \label{sec:jw1}

The standard description of the Jordan--Wigner transformation does not make clear that one of its inherent components is an \textit{enumeration scheme} for the fermionic modes, a bijective mapping from unordered fermionic labels to the natural number labels of qubits. The following definition makes the role of the enumeration scheme explicit:

\begin{definition} \label{defn:jwfenum}
	The \textit{Jordan--Wigner transformation with a fermionic enumeration scheme} for a system with $n$ fermionic modes is a $2^n$--dimensional representation $\overline{\text{JW}}_f$ of the fermionic algebra, equipped with a bijective enumeration scheme $f$ for the fermionic modes:
	\begin{equation}
		f:\{\alpha_0,\dots,\alpha_{n-1}\} \longrightarrow \{0,\dots,n-1\}\, . \label{eqn:fenumerationscheme}
	\end{equation}
	In particular, $\overline{\text{JW}}_f$ restricts to a bijection between the annihilation operators and a set $\overline{\text{JW}}=\{A_0, \dots, A_{n-1}\}$:
	\begin{align}
		\overline{\text{JW}}_f &: a_{\alpha_i} \longmapsto \, A_{f(\alpha_i)}\, , \label{eqn:naivejw}
	\end{align}
	where $A_i$ has the same expression as in Definition \ref{defn:naivejw}. 
\end{definition}

The purpose of generalising the Jordan--Wigner transformation to Definition \ref{defn:jwfenum} is to demonstrate that there is complete freedom in labelling the fermionic modes: no matter the ordering $f$, we recover the canonical anticommutation relations of Equation \ref{eqn:ccr1}. The operators $A_i$ are still drawn from the set $\overline{\text{JW}}$, which satisfies Equation \ref{eqn:fqmapping}. For example, the enumeration scheme $f(\alpha_i) = i$ recovers the naïve mapping from Definition \ref{defn:naivejw}. 

The topic of discussion of Sections \ref{sec:enumeration} and \ref{sec:aqm} is defining the degree of freedom $f$ in $\overline{\text{JW}}_f$ and how to exploit it for material gain. Before we come to that, first let us fully generalise the concept of fermion--qubit mappings.

\subsection{A general definition for fermion--qubit mappings that incorporates enumeration schemes} \label{sec:general}

As it turns out, the freedom of choice to associate each of the annihilation operators $a_{\alpha_i}$ with a unique linear combination of of Pauli strings $A_{f(\alpha_i)}$ in Definition \ref{defn:jwfenum} does not capture the full degree of freedom in the choice of enumeration schemes in general. The Majorana fermionic operators $\{\gamma_{\beta_0},  \dots, \gamma_{\beta_{2n-1}}\}$ from Equations \ref{eqn:majorana1} and \ref{eqn:majorana2} are equivalent building blocks for any fermionic system described by Equation \ref{eqn:ccr1}, and have the benefit of obeying the more concise anticommutation relations of Equation \ref{eqn:majoranacc}.

We propose that the broadest definition of a fermion--qubit mapping from $n$ modes to $n$ qubits is as follows:
\begin{definition} \label{defn:generalmapping} \emph{(General fermion--qubit mapping.)} A \textit{fermion--qubit mapping} for a system with $n$ fermionic modes system is a $2^n$--dimensional representation $\text{F}_f$ of the fermionic algebra, equipped with a bijective enumeration scheme $f$ for the Majorana modes
	\begin{align}
		f:\{\beta_0,\dots,\beta_{2n-1}\} \longrightarrow \{0,1,\dots,2n-1\}\, .
	\end{align}
	The restriction of $\text{F}_f$ to the Majorana operators is a bijection to $2n$ pairwise--anticommuting Pauli strings $\text{F}=\{\Gamma_0,\dots,\Gamma_{2n-1}\}$, with $\Gamma_i \in \{\mathds{1}, X, Y, Z\}^{\otimes n}$. That is,
	\begin{align}
		\text{F}_f & : \gamma_{\beta_i}  \longmapsto \Gamma_{f(\beta_i)}\, . \label{eqn:truemapping}
	\end{align}
\end{definition}

Since the elements $\Gamma_i$ of $\text{F}$ are Pauli strings, they are Hermitian. Thus, with their pairwise--anticommutation,
\begin{equation}
	\Gamma_i = \Gamma_i^\dagger \, , \quad \{\Gamma_i,\Gamma_j\}=2\delta_{ij}\mathds{1} \label{eqn:gammas}
\end{equation}
for all $i,j=0,1,\dots,2n-1$, reproducing the fermionic system described by Equation \ref{eqn:majoranacc} as required.

\begin{corollary}
	The restriction of the fermion--qubit mapping $\text{F}_f$ to the annihilation operators $\{a_{\alpha_0}, \dots, a_{\alpha_{n-1}}\}$ is a bijection to a set $\{A_0,\dots,A_{n-1}\}$ defined by
	\begin{align}
		\text{F}_f : a_{\alpha_i} & \longmapsto A_i = \frac{1}{2} \left( \Gamma_{f(\beta_{2i})} + i \Gamma_{f(\beta_{2i+1})} \right)\, ,
	\end{align}
	for $i=0,1,\dots,n-1$. This arises from Equations \ref{eqn:majorana1} and \ref{eqn:majorana2}, which give for $i=0,1,\dots,n-1$
	\begin{equation}
		a_{\alpha_i} = \frac{1}{2}\left(\gamma_{\beta_{2i}} +i \gamma_{\beta_{2i+1}} \right)\, .
	\end{equation}
	The $A_i$ satisfy Equation \ref{eqn:fqmapping} by construction.
\end{corollary}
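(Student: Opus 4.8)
The plan is to deduce everything from the definition of the Majorana operators together with the linearity of $\text{F}_f$ and the Clifford--type relations \ref{eqn:gammas} already established for the strings $\Gamma_k$; no new machinery is needed.

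First I would invert Equations \ref{eqn:majorana1} and \ref{eqn:majorana2}. Since $i\gamma_{\beta_{2i+1}} = a_{\alpha_i} - a_{\alpha_i}^\dagger$, adding this to $\gamma_{\beta_{2i}} = a_{\alpha_i} + a_{\alpha_i}^\dagger$ gives $\gamma_{\beta_{2i}} + i\gamma_{\beta_{2i+1}} = 2 a_{\alpha_i}$, which is precisely the displayed identity $a_{\alpha_i} = \frac{1}{2}(\gamma_{\beta_{2i}} + i\gamma_{\beta_{2i+1}})$. Because $\text{F}_f$ is, by Definition \ref{defn:generalmapping}, a linear map on the complex span of the $2n$ creation and annihilation operators --- equivalently, on the complex span of the $2n$ Majorana operators --- applying $\text{F}_f$ to both sides and using $\text{F}_f : \gamma_{\beta_k} \mapsto \Gamma_{f(\beta_k)}$ yields $\text{F}_f(a_{\alpha_i}) = \frac{1}{2}(\Gamma_{f(\beta_{2i})} + i\Gamma_{f(\beta_{2i+1})})$, which is the claimed $A_i$; it lies in $G_n$ since it is a complex combination of Pauli strings.

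Next I would verify Equation \ref{eqn:fqmapping}. Each $\Gamma_k$ is Hermitian, so $A_i^\dagger = \frac{1}{2}(\Gamma_{f(\beta_{2i})} - i\Gamma_{f(\beta_{2i+1})})$. Expanding $\{A_i, A_j^\dagger\}$ and $\{A_i, A_j\}$ by bilinearity produces, in each case, the four terms $\{\Gamma_{f(\beta_{2i})}, \Gamma_{f(\beta_{2j})}\}$, $\{\Gamma_{f(\beta_{2i})}, \Gamma_{f(\beta_{2j+1})}\}$, $\{\Gamma_{f(\beta_{2i+1})}, \Gamma_{f(\beta_{2j})}\}$, $\{\Gamma_{f(\beta_{2i+1})}, \Gamma_{f(\beta_{2j+1})}\}$. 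Here I would invoke that $f$ is a bijection on $\{\beta_0, \dots, \beta_{2n-1}\}$: the labels $f(\beta_{2i})$ and $f(\beta_{2i+1})$ are always distinct, and if $i \neq j$ all four labels are distinct, so by \ref{eqn:gammas} every cross term vanishes. The surviving $i=j$ terms, with the signs supplied by $A_i^\dagger$ versus $A_i$, give $\{A_i, A_i^\dagger\} = \frac{1}{4}(2\mathds{1} + 2\mathds{1}) = \mathds{1}$ and $\{A_i, A_i\} = \frac{1}{4}(2\mathds{1} - 2\mathds{1}) = 0$; taking the Hermitian conjugate of the latter yields $\{A_i^\dagger, A_j^\dagger\} = 0$ as well. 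Bijectivity of $a_{\alpha_i} \mapsto A_i$ onto $\{A_0, \dots, A_{n-1}\}$ then follows from injectivity: were $A_i = A_j$ for some $i \neq j$, the relation $\{A_i, A_j^\dagger\} = 0$ would force $\{A_i, A_i^\dagger\} = 0$, contradicting $\{A_i, A_i^\dagger\} = \mathds{1}$.

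I do not expect a genuine obstacle: the statement is an immediate corollary of Definition \ref{defn:generalmapping} and Equation \ref{eqn:gammas}. The only step deserving care is the bookkeeping that makes the phrase ``by construction'' precise --- being explicit that the bijectivity of $f$ guarantees the distinctness of the relevant Majorana indices, which is exactly what annihilates the cross terms in the anticommutators.
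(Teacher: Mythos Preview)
Your proposal is correct and follows exactly the approach the paper takes: the paper's ``proof'' is just the two sentences embedded in the corollary statement itself, namely the inversion of Equations \ref{eqn:majorana1}--\ref{eqn:majorana2} and the remark that the anticommutation relations hold ``by construction.'' You have simply unpacked that phrase by writing out the bilinear expansion of the anticommutators and using \ref{eqn:gammas} together with the bijectivity of $f$, which is precisely the intended content; your additional remark on injectivity of $a_{\alpha_i}\mapsto A_i$ is a welcome clarification the paper omits.
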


Figure \ref{fig:fq-mapping} visualises the relation between the operators $A_i$ arising from a general fermion--qubit mapping in Definition \ref{defn:generalmapping} and the $A_i$ operators that arise from naïve mappings such as the Jordan--Wigner transformation in Definition \ref{defn:naivejw}. The most general form of the Jordan--Wigner transformation is thus:
\begin{example} \emph{(General Jordan--Wigner transformation)} \label{exm:jw}
	The \textit{general Jordan--Wigner transformation} for a system with $n$ fermionic modes is a fermion--qubit mapping $\text{JW}_f$ which restricts to a bijection between the Majorana operators and a set of $2n$ pairwise--anticommuting Pauli strings  $\text{JW} = \{\Gamma_0, \dots , \Gamma_{2n-1}\}$, defined for $i=0,\dots,n-1$ by
	\begin{align}
		\Gamma_{2i} &= \left( \bigotimes_{k=0}^{i-1} Z_k \right) X_i\, , \label{eqn:generalJWmajorana1}
		\\ \Gamma_{2i+1} &=  \left( \bigotimes_{k=0}^{i-1} Z_k \right) Y_i\, . \label{eqn:generalJWmajorana2}
	\end{align}
	That is,
	\begin{align}
		\text{JW}_f &: \gamma_{\beta_{i}} \longmapsto \Gamma_{f(\beta_i)}\, .
	\end{align}
	Note that the Pauli strings $\Gamma_i$ have length $\mathcal{O}( n)$.
	The annihilation operators of the general Jordan--Wigner transformation map to:
	\begin{align}
		\text{JW}_f : a_{\alpha_i}  \longmapsto A_i = \frac{1}{2} \left(\Gamma_{f(\beta_{2i})} + i \Gamma_{f(\beta_{2i+1})} \right) \, , \label{eqn:generaljw}
	\end{align}
	which satisfy Equation \ref{eqn:fqmapping} by construction.
\end{example}

\begin{remark}
	If the fermionic enumeration scheme for the Majorana modes $f$ satisfies $f(\beta_{2i+1}) = f(\beta_{2i})+1$ for all $i=0,\dots,n-1$, then ${\text{JW}}_f$ reduces to $\overline{\text{JW}}_f^{\,^{\,}}$ from Definition \ref{defn:jwfenum}. Thus, the Jordan--Wigner transformation with a fermionic enumeration scheme in Definition \ref{defn:jwfenum} does not capture the true degree of freedom in enumeration schemes -- it is only a special case of the general Jordan--Wigner transformation described in Example \ref{exm:jw}.
\end{remark}

\begin{remark} \emph{(Notation.)}
	The most general form of a fermion--qubit mapping, which we write as $\text{F}_f$, maps Majorana operators to Pauli strings. We use the overline notation $\overline{\text{F}}_{f}$ for mappings that enumerate the $n$ fermionic modes. The expression $\overline{\text{F}}_{\text{naïve}}$ denotes mappings that unnecessarily force a specific enumeration scheme such as $f(\alpha_i) = i$, with the aim to highlight the naïveté. For example, the relation between these different classes of mappings for the Jordan--Wigner transformation is:
	\begin{align}
		\overline{\text{JW}}_{\text{naïve}} &= \overline{\text{JW}}_f \text{ where } f(\alpha_i) = i \, ; \\
		\overline{\text{JW}}_f &= \text{JW}_{g} \text{  where  } g(\beta_{2i})=f(\alpha_i) \text{ and} \\ & \qquad \qquad \qquad \quad  g(\beta_{2i+1})=f(\alpha_{i}+1) \, . \nonumber
	\end{align}
\end{remark}

Definition \ref{defn:generalmapping} encapsulates all forms of the Jordan--Wigner, Bravyi--Kitaev \cite{bravyi2002fermionic}, ternary tree \cite{Jiang2020optimalfermionto}, and other $n$--qubit transformations, and thus allows a fair comparison between them as in Section \ref{sec:comparison} and \cite{nextsteps}. By expanding the definition of the pairwise--anticommuting Pauli strings to $\Gamma_i \in \{ \mathds{1}, X, Y, Z\}^{\otimes m}$ for $m \geq n$ such that \ref{eqn:gammas} still holds, the general definition here could be extended to include mappings with ancilla qubits such as \cite{verstraete2005mapping, steudtner2019quantum, phasecraft2020low}; we leave such extension for future work.

\subsection{Comparison between different mapping types and the notion of an optimal fermion--qubit mapping}\label{sec:comparison}

Through the lens of Definition \ref{defn:generalmapping}, the feature that identifies a mapping as being of Jordan--Wigner type is that it associates Majorana fermionic operators with a set of Pauli strings $\text{JW}$ of the forms in Equations \ref{eqn:generalJWmajorana1} and \ref{eqn:generalJWmajorana2}.

In the  literature, the search for more efficient mappings beyond the Jordan--Wigner transformation resulted in the Bravyi--Kitaev transformation \cite{bravyi2002fermionic}, which yields exponentially shorter Pauli strings in the asymptotic limit. The Bravyi--Kitaev transformation does not outperform the Jordan--Wigner on modest fermionic systems, however, and has a similar T--gate count \cite{tranter2018comparison}. Aside from its intuitive definition, the Jordan--Wigner mapping has also gained widespread use because it demands far fewer degrees of connectivity from the qubit architecture \cite{steudtner2019quantum}.

We can construct the most general definition for other types of fermion--qubit mappings, where the identifying feature of each mapping type is the unique set of Pauli strings to which it maps the Majorana fermionic operators. For example, the most general definition for the Bravyi--Kitaev transformation \cite{bravyi2002fermionic} is:

\begin{example} \emph{(General Bravyi--Kitaev transformation.)} \label{exm:bk}
	The \textit{general Bravyi--Kitaev transformation} for a system with $n$ fermionic modes is a fermion--qubit mapping $\text{BK}_f$ which restricts to a bijection between the Majorana operators and a set of $2n$ pairwise--anticommuting Pauli strings  $\text{BK} =\{\Gamma_0, \dots, \Gamma_{2n-1}\}$, defined for $i = 0, \dots, n-1$ by
	\begin{align}
		\Gamma_{2i} &= X_{U(i)} X_{i} Z_{P(i)}\, , \\
		\Gamma_{2i+1} &=\begin{cases} X_{U(i)} Y_i Z_{P(i)}\, ,  & i \text{ even,} \\ X_{U(i)} Y_i Z_{R(i)} \, , & i \text{ odd.}  \end{cases}
	\end{align}
	That is,
	\begin{align}
		\text{BK}_f &: \gamma_{\beta_{i}} \longmapsto \Gamma_{f(\beta_{i})}\, 
	\end{align}
	Here, using the notation of Seeley et al.\ \cite{seeley2012bravyi}, the sets $U(i)$, $P(i)$ and $R(i)$ are subsets of $\{0,1,\dots, n-1\}$ of size $\sim \log_2n$.
	
	Much like the Jordan--Wigner transformation, this general definition of the Bravyi--Kitaev transformation is more extensive than the usual naïve presentation, which simply maps annihilation operators to their qubit equivalents $\overline{\text{BK}}$ with the trivial enumeration scheme $f:\alpha_i \mapsto  i$:
	\begin{align}
		\overline{\text{BK}}_{\text{naïve}} : a_{\alpha_i} &\longmapsto A_i = \frac{1}{2} \left( \Gamma_{2i} + i \Gamma_{2i+1} \right) \, .    
	\end{align}
\end{example}

\begin{example} \emph{(Ternary tree transformation.)} \label{exm:tt}
	Unlike the naïve versions of the Bravyi--Kitaev and Jordan--Wigner transformations, the base definition of the ternary tree transformation \cite{Jiang2020optimalfermionto} is already similar to Definition \ref{defn:generalmapping}: it is a map $\text{TT}_f$ that identifies the Majorana fermionic operators $\{\gamma_{\beta_0}, \dots, \gamma_{\beta_{2n-1}}\}$ with Hermitian, pairwise--anticommuting Pauli strings $\text{TT} = \{\Gamma_0, \dots, \Gamma_{2n-1}\}$ defined by
	\begin{align}
		\text{TT} = \big\{&\Gamma_0 = X_0 X_1 X_4 ... X_{\frac{n-1}{3}}\, , \label{eqn:ttgammas} \\
		& \Gamma_1 =  X_0 X_1 X_4 ... Y_{\frac{n-1}{3}},  \nonumber \\
		&  \Gamma_2 = X_0 X_1 X_4 ... Z_{\frac{n-1}{3}}\, , \nonumber \\
		& \dots \, ,  \nonumber \\ 
		& \Gamma_{2n-2} = Z_0 Z_3 Z_{12} ... X_{n-1}\, , \nonumber \\
		&\Gamma_{2n-1} =  Z_0 Z_3 Z_{12} ... Y_{n-1} \big\}\, . \nonumber
	\end{align}

	\begin{figure}
		\centering
		\includegraphics[width=\linewidth]{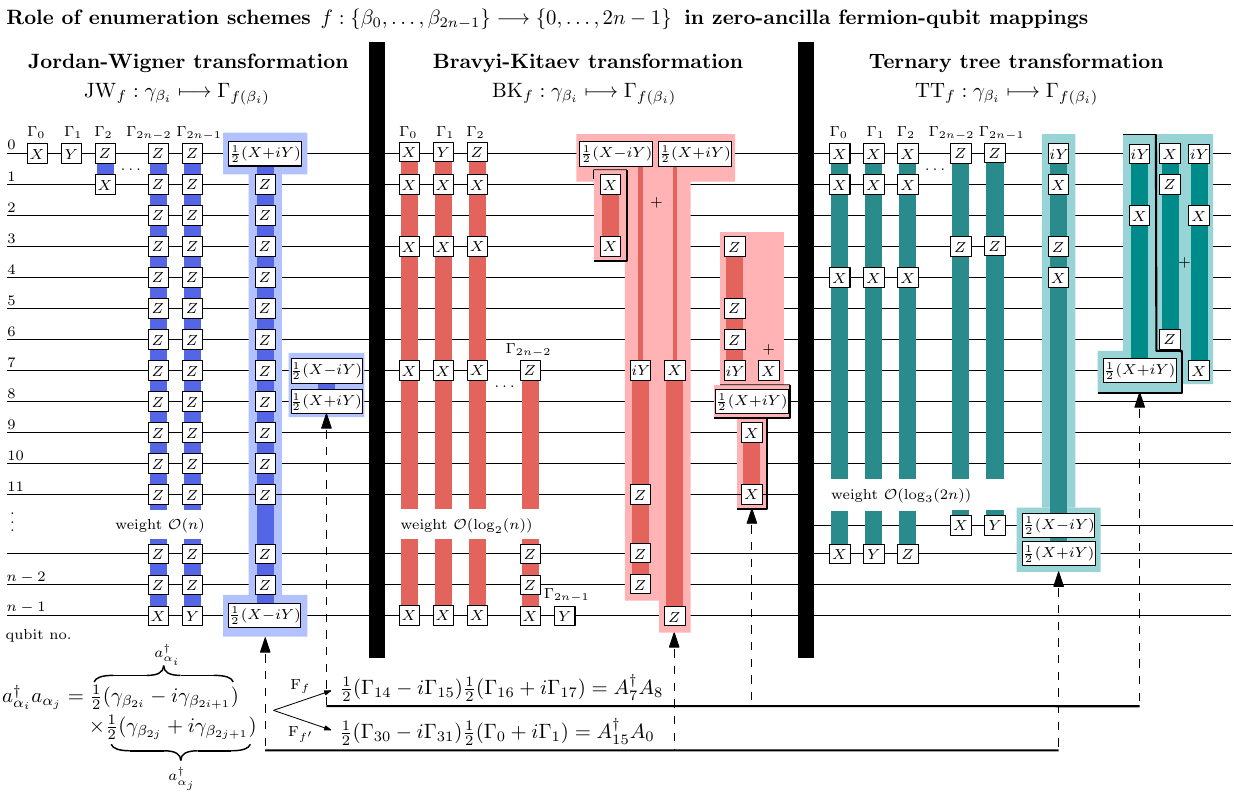}
		\caption{A general fermion--qubit mapping $\text{F}_f$ for a system with $n$ fermionic modes described by Majorana operators $\{\gamma_{\beta_0}, \gamma_{\beta_1}, \dots, \gamma_{\beta_{2n-1}}\}$ consists of a set of pairwise--anticommuting Pauli strings $\text{F}=\{\Gamma_0, \Gamma_1, \dots, \Gamma_{2n-1}\}$ along with an enumeration scheme $f : \{\beta_0, \dots, \beta_{2n-1}\} \mapsto \{0,1,\dots,2n${\,}$-${\,}$ 1\}$. This diagram displays three well--known fermion--qubit mappings: the Jordan--Wigner, Bravyi--Kitaev, and ternary tree mappings, and shows how the enumeration scheme dictates the properties physical operators, which are products of $a_{\alpha_i}^\dagger = \frac{1}{2} (\gamma_{\beta_{2i}} - i \gamma _{\beta_{2i+1}})$ and $a_{\alpha_j} = \frac{1}{2} (\gamma_{\beta_{2j}} + i \gamma _{\beta_{2j+1}})$ for  $i,j \in \{0,1,\dots,n-1\}$. Suppose that $f$ and $f'$ are two distinct Majorana enumeration schemes with $f(\{\beta_{2i},\beta_{2i+1},\beta_{2j},\beta_{2j+1} \})$ $=$ $\{14,15,16,17\}$ respectively, and $f'(\{\beta_{2i},\beta_{2i+1},\beta_{2j},\beta_{2j+1} \})$ $=$ $\{30,31,0,1\}$ respectively. The interaction terms $A_7^\dagger A_8$ and $A^\dagger_{15} A_0$ are noticeably distinct under $f$ compared to $f'$, for all three fermion--qubit mapping types.}
		\label{fig:allmappings}
	\end{figure}

	We mention the mapping here for the purpose of making its dependence on an enumeration scheme $f$ explicit, rather than implicit as in its initial presentation in \cite{Jiang2020optimalfermionto}:
	\begin{align}
		\text{TT}_f & : \gamma_{\beta_i} \longmapsto \Gamma_{f(\beta_{i})}\, .
	\end{align}
	Strictly speaking, Equation \ref{eqn:ttgammas} only describes $\text{TT}$ if $2n+1$ is a power of 3. The mapping can be altered slightly to accommodate other values of $n$ without suffering any ill effects: namely, the remarkable $\sim \log_3 (2n)$ Pauli weight of each of the $\Gamma_i$ qubit operators, which is the provably optimal average Pauli weight of any such set of Pauli strings $\text{F}$ in an $n$--fermion to $n$--qubit mapping \cite{Jiang2020optimalfermionto}.
\end{example}

Examples \ref{exm:jw}--\ref{exm:tt} demonstrate that the Pauli strings $\Gamma_i$ in the Jordan--Wigner, Bravyi--Kitaev and ternary tree mappings have maximum weight $n$, $\sim \log_2(n)$ and $\sim \log_3(2n)$. This might incline one to declare the Bravyi--Kitaev transformation to be an improvement on Jordan--Wigner, and the ternary tree mapping to be the best of all possible mappings.

However, recall from Section \ref{sec:fqpurpose} that the goal of any fermion--qubit mapping is to simulate a fermionic Hamiltonian $H_{\text{fermion}}$ with physical interaction terms of the form $a^\dagger_{\alpha_i} a_{\alpha_j}$ and $a^\dagger_{\alpha_i} a^\dagger_{\alpha_j} a_{\alpha_k} a_{\alpha_l}$. These terms are linear combinations of products of particular Majorana operators $\gamma_{\beta_i}$. Given a fermion--qubit mapping $\text{F}_f$, the terms in the qubit Hamiltonian $H_{\text{qubit}}$ are of the form $A_i^\dagger A_j$ and $A^\dagger_i A^\dagger _j A_k A_l$, which are linear combinations of products of the Pauli strings $\Gamma_i \in \text{F}=\{\Gamma_0, \dots, \Gamma_{2n-1}\}$. Even for the ternary tree mapping, the Pauli weight of the physical interaction terms varies depending on the enumeration scheme $f$, as Figure \ref{fig:allmappings} demonstrates.

Moreover, there are a multitude of properties of the qubit Hamiltonian $H_{\text{qubit}}$ that one might want to minimise depending on the quantum technology at hand. For example, the \textit{average Pauli weight} or the \textit{Pauli measurement depth} of the terms in $H_\text{qubit}$ are quantities that can determine the resource cost in near--term algorithms such as the variational quantum eigensolver. The \textit{maximum Pauli weight} of any one term in $H_\text{qubit}$ is the `$k$' in its $k$--local Hamiltonian problem, and hence a measure of computational complexity. Thus, it could be an appropriate target to minimise for long--term algorithms such as those involving phase estimation.

Therefore, we argue that the notion of an optimal fermion--qubit mapping is only well--defined given these two contexts: the problem Hamiltonian $H_{\text{fermion}}$, and a physical resource cost $C$ to minimise in $H_\text{qubit}$. 
Given these inputs, the total search space for an optimal fermion--qubit mapping $\text{F}_f$ of the form given in Definition \ref{defn:generalmapping} is characterised by:
\begin{enumerate}
	\item The set of mutually--anticommuting Pauli strings $\text{F}=  \{\Gamma_0, \dots, \Gamma_{2n-1}\}$, determined by the type of fermion--qubit mapping, e.g.\ Jordan--Wigner, Bravyi--Kitaev, or ternary tree; and,
	\item The enumeration scheme $f:\{\beta_{0}, \dots, \beta_{2n-1}\} \rightarrow \{0,\dots, 2n-1\}$.
\end{enumerate}
A brute--force search over all the enumeration schemes for one set of Pauli strings $\text{F}$ must reckon with a $(2n)!$--size dataset; once we include the search over all mapping types for $\text{F}$, the task will become even more unwieldy.


In this work, we argue that it \textit{is} possible to make meaningful headway in the search for an optimal fermion--qubit mapping. In this paper, we propose the approach of fixing the mapping type $\text{F}$ and searching for the optimal fermion enumeration scheme $f$.

\begin{definition}\label{defn:opttype} \emph{(Optimal fermion--qubit mapping of type $\emph{F}$.)}
	Let $H_{\text{fermion}}$ be a fermionic Hamiltonian on $n$ modes, let $\text{F}$ be a fermion--qubit mapping type,  and suppose we possess a qubit architecture with some limiting resource. Then for any fermion--qubit mapping $\text{F}_f$, let $C=C(f)$ be the cost function of the qubit Hamiltonian $H_{\text{qubit}}\coloneqq H_{\text{qubit}}(f) = \text{F}_f(H_{\text{fermion}})$ with respect to that resource. The \textit{$C$--optimal fermion--qubit mapping of type} F \textit{for $H_\emph{fermion}$} is the fermion--qubit mapping $\text{F}_{f^*}$, where $f^*$ is a Majorana fermionic enumeration scheme satisfying
	\begin{align}
		f^* = \argmin_{f} C(f)\, .
	\end{align}
\end{definition}

In Section \ref{sec:enumeration}, we find optimal fermion--qubit mappings of Jordan--Wigner type $\overline{\text{JW}}_f$ for various common problem Hamiltonians $H_\text{fermion}$, and various common cost functions $C$ such as the average and maximum Pauli weight of terms in $H_{\text{qubit}}$. In Section \ref{sec:aqm}, we loosen Definition \ref{defn:jwfenum} of the Jordan--Wigner transformation $\overline{\text{JW}}_f$ to find fermion--qubit with two ancilla qubits that outperform the optimal mappings from Section \ref{sec:enumeration}.

\subsection{Complexity--theoretical preliminaries}\label{sec:math}

\begin{figure}
	\centering
	\includegraphics[width=\linewidth]{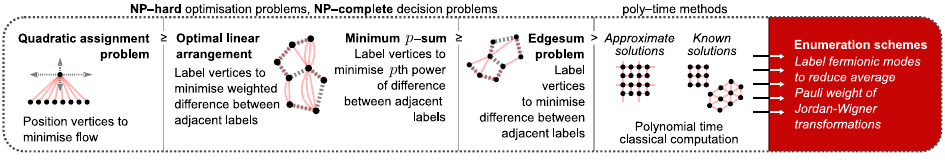}
	\caption{Complexity--theoretic hierarchy of problems generalising the edgesum problem.}
	\label{fig:overview}
\end{figure}

This section introduces the notation and complexity theoretic problems that underpin the results of Section \ref{sec:enumeration}. Awareness of these problems is necessary to find optimal fermion--qubit mappings of Jordan--Wigner type as described in Definition \ref{defn:opttype}. Figure \ref{fig:overview} gives a summary of the hierarchy of these problems, all of which start with some variation of the following ingredients:

\begin{enumerate}
	\item A \textbf{graph} $G=(V,E)$ with $|V|=n$ vertices and edge set $E \subseteq V \times V$, and
	\item a \textbf{weight function} $w:E \longrightarrow \mathds{R}$ which assigns a value $w(\alpha,\beta)$ to the edge $(\alpha,\beta)$ between vertices $\alpha,\beta \in V$, and
	\item a list $L$ of possible \textbf{locations} for the vertices, and
	\item a \textbf{distance function} $d:L\times L \rightarrow \mathds{R}$ describing the spatial separation between the locations.
\end{enumerate}

All optimisation problems in this section have \NP--complete decision versions, and are thus \NP--hard. They share the objective of finding an injective \textit{assignment function} $f:V\longrightarrow L$ to place the vertices in locations so as to minimise a cost function. We call $f$ a \textit{vertex enumeration scheme} when $L =\{0,1,\dots,n-1\}$.

The problems in this section appear in order of descending complexity, in that subsequent problems are special cases of earlier problems. The following problem was introduced by Koopmans and Beckmann \cite{koopmans1957assignment}

\begin{tcolorbox}
	[ title={\bf QUADRATIC ASSIGNMENT}]
	
	{\bf INSTANCE}: Graph $G=(V,E)$, weight function $w:E \rightarrow \mathds{R}$, vertex locations $L$, distance function $d:L\times L \rightarrow \mathds{R}$.\\
	{\bf PROBLEM}: Find the location assignment function $f:V\rightarrow L$ in such a way as to minimise the assignment function
	\begin{align}
		C(f)=\sum_{(\alpha,\beta)\in E} w(\alpha,\beta) \cdot d\left(f(\alpha),f(\beta)\right)\, .
	\end{align}
\end{tcolorbox}


The optimal linear arrangement problem was first studied by Garey and Johnson ~\cite{garey1974some}. Note that it is different to the ``optimal linear arrangement" in their book~\cite{gareyjohnson1990book}.
\begin{tcolorbox}
	[ title={\bf OPTIMAL LINEAR ARRANGEMENT}]
	{\bf INSTANCE}: Graph $G=(V,E)$, weight function $w:E\rightarrow \mathds{Z}$.\\
	{\bf PROBLEM}: Find the enumeration scheme $f:V\rightarrow \{0,1,\dots,n-1\}$ that minimises the assignment function
	\begin{equation}
		C(f) = \sum_{(\alpha,\beta) \in E} w(\alpha,\beta) \cdot \left\vert f(\alpha) - f(\beta) \right \vert\, .
	\end{equation}
\end{tcolorbox}
This is a special case of the quadratic assignment problem: the weight function $w$ is  restricted to integer values, the vertex locations $L$ are $\{1,2,\dots,n\}$, and the distance metric is $d(i,j)=\left\vert i - j\right\vert$.

The minimum $p$--sum problem was studied by Mitchison and Durbin~\cite{mitchison1986optimal}, Garey and Johnson~\cite{garey1978complexity}, and Juvan and Mohar~\cite{JUVAN1992153}.
\begin{tcolorbox}
	[ title={\bf MINIMUM $p$--SUM}]
	{\bf INSTANCE}: Graph $G=(V,E)$, integer $p \in \mathds{Z}$.\\
	{\bf PROBLEM}: Find the enumeration scheme $f:V\rightarrow \{0,1,\dots,n-1\}$ that minimises the assignment function
	\begin{align}
		C^p(f) = \left( \sum_{(\alpha,\beta) \in E} \left\vert f(\alpha) - f(\beta) \right\vert^p \right)^{1/p}\, .
	\end{align}
\end{tcolorbox}
This is a special case of the optimal linear arrangement problem, because its weight function is effectively $w(\alpha, \beta)=|f(\alpha)-f(\beta)|^{p-1}\in \mathds{Z}$. The decision version of the problem is \NP--complete for $p=1$ \cite{garey1974some}, $p=2$ \cite{george1997spectralenvelope} and $p\rightarrow \infty$ \cite{horton1997optimal}. One could also consider the broader class of problems where $p \in \mathds{R}^+$, as done by Mitchison and Durbin \cite{mitchison1986optimal}. These problems are likely to be at least as hard as their integer--$p$ equivalents. 

\subsubsection{Special cases of minimum \texorpdfstring{$p$}{p}--sum}\label{sec:np}

The edgesum problem was introduced in~\cite{garey1974some}, and the study of this problem is the key ingredient to our optimal Jordan--Wigner transformations in Section \ref{sec:enumeration}:
\begin{tcolorbox}
	[ title={\bf EDGESUM (or SIMPLE OPTIMAL LINEAR ARRANGEMENT, MINIMUM 1--SUM)}]
	{\bf INSTANCE}: Graph $G=(V,E)$.\\
	{\bf PROBLEM}: Find the enumeration scheme $f: V \rightarrow \{0,1,\dots,n-1\}$ that minimises the assignment function
	\begin{align}
		C(f)=\sum_{(\alpha,\beta)\in E} \left\vert f(\alpha) - f(\beta) \right \vert \, .
	\end{align}
\end{tcolorbox}

\begin{figure}
	\centering
	\includegraphics[width=\textwidth]{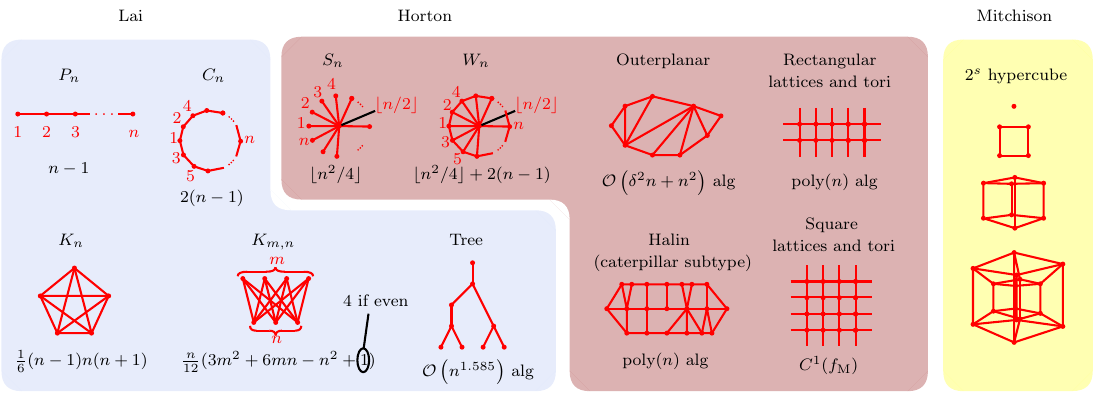}
	\caption{Graph families with known solutions to their edgesum problems. The formula for the minimum edgesum appears beneath each graph's image, if it exists, or the cost of the best--known classical algorithm for solving that graph's edgesum problem if not. References for the solution of these graphs appear in Lai \cite{lai1999survey}, Horton \cite{horton1997optimal}, and Mitchison's \cite{mitchison1986optimal} works as indicated.}
	\label{fig:solvedgraphs}
\end{figure}

The decision version of this problem is known to be $\NP$--complete via a reduction to the simple max--cut problem  \cite{garey1974some}. Figure \ref{fig:solvedgraphs} surveys the graphs with known solutions, as of the date of publication of Horton's PhD thesis \cite{horton1997optimal} which cites a solution for outerplanar graphs by Frederickson and Hambrusch \cite{frederickson1988planar}. Other sources include a survey by Lai \cite{lai1999survey}, and solution for tree graphs by Chung \cite{chung1984optimal}.

As an example of a comparable problem, consider the following:
\begin{tcolorbox}
	[ title={\bf BANDWIDTH (or MINIMUM $\infty$--SUM)}]
	{\bf INSTANCE}: Graph $G=(V,E)$.\\
	{\bf PROBLEM}: Find the enumeration scheme $f: V\rightarrow \{0,1,\dots,n-1\}$ that minimises cost function
	\begin{align}
		C^\infty(f) = \max_{(\alpha,\beta) \in E} \left \vert f(\alpha) - f(\beta) \right \vert \, .
	\end{align}
\end{tcolorbox}
This is the minimum $p$--sum problem as $p\rightarrow \infty$. The decision version of the bandwidth problem is $\NP$--complete for e.g.\ an arbitrary tree graph \cite{garey1978complexity}; conversely, Saxe proves that the decision problem as to whether the bandwidth is less than or equal to $k=O(1)$ is efficiently solvable \cite{doi:10.1137/0601042}.

	\section{Optimal fermion--qubit mappings of Jordan--Wigner type}\label{sec:enumeration}

This section demonstrates the key idea of our approach: using enumeration schemes to optimise fermion--qubit mappings for practical cost functions.
Section \ref{sec:objective} establishes the scope of the search for optimal Jordan--Wigner mappings in this work, defining optimality in terms of practical cost functions for quantum computers. Section \ref{sec:hams} sets out the broad class of problem Hamiltonians that this search will consider. Section \ref{sec:enum1} derives expressions for the cost functions of the search, which are related to problems in graph theory. Section \ref{sec:mpattern} states Theorem \ref{thm:physical} for minimising the average Pauli weight of a Jordan--Wigner transformation of a system of fermions interacting in a square lattice, with proof in Section \ref{sec:proofs}. In Section \ref{sec:cellular}, we consider an heuristic approach to minimising the average Pauli weight of a Jordan--Wigner transformation of cellular lattice fermionic systems. In Section \ref{sec:pg1}, we consider the task of minimising the average $p$th power of Pauli weight of a Jordan--Wigner transformation of square lattice fermionic systems, for $p>1$.


\subsection{Objective} \label{sec:objective}

In practice, it is very difficult to find the optimal fermion--qubit mapping for a given Hamiltonian $H_\text{fermion}$ as defined in Definition \ref{defn:opttype}. Within the scope of this paper, instead consider the subproblem of finding the optimal Jordan--Wigner transformation with a fermionic enumeration scheme as in Definition \ref{defn:jwfenum}:

\begin{definition}\label{defn:optjw}
	\emph{(Optimal Jordan--Wigner transformations with fermionic enumeration schemes.)} Let $H_\text{fermion}$ be a fermionic Hamiltonian on $n$ modes. Let $C=C(f)$ be the cost function of the qubit Hamiltonian $H_{\text{qubit}} = H_{\text{qubit}}(f) \coloneqq \overline{\text{JW}}_f(H_\text{fermion})$ with respect to some scarce physical resource. The \textit{$C$--optimal Jordan--Wigner transformation with a fermionic enumeration scheme for $H_\emph{fermion}$} is the fermion--qubit mapping $\overline{\text{JW}}_{f^*}$, where $f^*$ is a fermionic enumeration scheme satisfying
	\begin{align}
		f^* = \argmin_f C(f) \, .
	\end{align}
\end{definition}

In this work, we discuss optimal Jordan--Wigner transformations with fermionic enumeration schemes for three example cost functions:

\begin{example}\label{exm:apv} \emph{($C=\text{APV}$, average Pauli weight.)}
	Given a fermionic Hamiltonian $H_{\text{fermion}}$, a \textit{Jordan--Wigner mapping with minimum average Paul weight} is a fermion--qubit mapping $\overline{\text{JW}}_{f^*}$ where the fermionic enumeration scheme $f^*$ satisfies
	\begin{align}
		f^* &=  \argmin_f \big( \text{APV}(H_\text{qubit}(f)) \big) \, ,
	\end{align}
	where $\text{APV}(H_\text{qubit})$ is the average Pauli weight of all terms in $H_\text{qubit}(f) = \overline{\text{JW}}_f(H_\text{fermion})$.
	
	The average Pauli weight of a qubit Hamiltonian could be a limiting resource in near--term algorithms such as the variational quantum eigensolver (VQE), where $\text{APV}(H_{\text{qubit}})$ represents the total number of single--qubit measurements involved in each step of the VQE.
\end{example}

\begin{example}
	\label{exm:mpv}
	\emph{($C=\text{MPV}$, maxium Pauli weight.)}
	As in Example \ref{exm:apv}, except that $f^*$ satisfies
	\begin{align}
		f^* &= \argmin_f \big( \text{MPV}(H_\text{qubit}(f)) \big) \, ,
	\end{align}
	where $\text{MPV}(H_{\text{qubit}})$ is the maximum Pauli weight of any one term in the qubit Hamiltonian.
	
	The maximum Pauli weight of $H_{\text{qubit}}$ is a complexity--theoretic measure of the cost of simulating $H_{\text{qubit}}$ via long--term algorithms such as phase estimation. In the $k$--local Hamiltonian problem for $H_{\text{qubit}}$, the quantity $\text{MPV}(H_{\text{qubit}})$ is the value $k$.
\end{example}

\begin{example}
	\label{exm:mdepth}
	\emph{($C=\text{MD}$, Pauli measurement depth)}
	As in Example \ref{exm:apv}, except that $f^*$ satisfies
	\begin{align}
		f^* = \argmin_f \big( \text{MD}(H_{\text{qubit}}(f)) \big) \, ,
	\end{align}
	where $\text{MD}(H_\text{qubit})$ is the minimum number of groupings of the terms in $H_\text{qubit}$ such that each group contains no terms that act on shared qubits.
	
	The quantity $\text{MD}(H_\text{qubit})$ corresponds to the number of distinct timesteps required to run all Hamiltonian terms in parallel, indicating, for example, the total clock time required to perform all measurements during an iteration of a variational quantum eigensolver algorithm.
\end{example}

\subsection{Problem Hamiltonians}\label{sec:hams}

The terms in a general fermionic Hamiltonian $H_{\text{fermion}}$ as specified in Equation \ref{eqn:ham} represent quadratic one--particle and quartic two--particle interactions. The Jordan--Wigner transformation $\overline{\text{JW}}_f$ maps each of the one--particle terms to 
\begin{align}
	&\overline{\text{JW}}_f : a^\dagger_{\alpha_i} a_{\alpha_i} \longmapsto  \frac{1}{2}\left(\mathds{1}-Z\right)_{f(\alpha_i)}\\
	&\overline{\text{JW}}_f:	a^\dagger_{\alpha_i} a_{\alpha_j} \longmapsto  \label{eqn:simplehopping} \frac{1}{2} \left(X-iY\right)_{f(\alpha_i)} \left(\bigotimes_{k=f(\alpha_i)}^{f(\alpha_j)-1} Z_k \right)\frac{1}{2} \left(X + iY\right)_{f(\alpha_j)} \, ,
\end{align}
where $f(\alpha_i)<f(\alpha_j)$ in Equation \ref{eqn:simplehopping}. However, the term $c_{\alpha_i \alpha_j}a^\dagger_{\alpha_i} a_{\alpha_j}$ will always appear in $H$ alongside its conjugate term $(c_{\alpha_ i \alpha_j})^*a^\dagger_{\alpha_j} a_{\alpha_i}$. The expression $c_{\alpha_i \alpha_j}a^\dagger_{\alpha_i} a_{\alpha_j} +(c_{\alpha_ i \alpha_j})^*a^\dagger_{\alpha_j} a_{\alpha_i}$ is the \textit{hopping term} between fermionic modes $\alpha_i$ and $\alpha_j$. The hopping terms of $H_\text{fermion}$ lead to non--local terms in the qubit Hamiltonian of the form
\begin{align} 
	\overline{\text{JW}}_f : c_{\alpha_i \alpha_j}a^\dagger_{\alpha_i} a_{\alpha_j} &+(c_{\alpha_ i \alpha_j})^*a^\dagger_{\alpha_j} a_{\alpha_i} \label{eqn:pairs}   \\
	&\longmapsto  \frac{\Re(c_{\alpha_i \alpha_j})}{2} \left(\bigotimes_{k=f(\alpha_i)+1}^{f(\alpha_j)-1} Z_{k} \right)   \times  \left(X_{f(\alpha_i)} \otimes X_{f(\alpha_j)} + Y_{f(\alpha_i)} \otimes Y_{f(\alpha_j)} \right) \nonumber  \\
	&	\quad + \frac{\Im(c_{\alpha_i \alpha_j})}{2} \left(\bigotimes_{k=f(\alpha_i)+1}^{f(\alpha_j)-1} Z_k\right)  \nonumber  \times  \left( X_{f(\alpha_i)} \otimes Y_{f(\alpha_j)} +Y_{f(\alpha_i)} \otimes X_{f(\alpha_j)}\right)\, . \nonumber 
\end{align}
In the case where the coefficients of the conjugate one--particle interaction terms are $c_{ij}=(c_{ij})^*=1$, the transformation of the hopping term is simply
\begin{align}
	& \overline{\text{JW}}_f	: a^\dagger_{\alpha_i} a_{\alpha_j} + a^\dagger_{\alpha_j} a_{\alpha_i} \longmapsto  	\label{eqn:grp}   \frac{1}{2} \left(\bigotimes_{k=f(\alpha_i)+1}^{f(\alpha_j)-1} Z_k \right) (X_{f(\alpha_i)} \otimes X_{f(\alpha_j)} + Y_{f(\alpha_i)} \otimes Y_{f(\alpha_j)})\, .
\end{align}
Figure \ref{fig:fullhopping} illustrates the conversion of these simpler hopping terms into qubit Hamiltonian terms.

\begin{figure}
	\centering
	\includegraphics[width=0.9\linewidth]{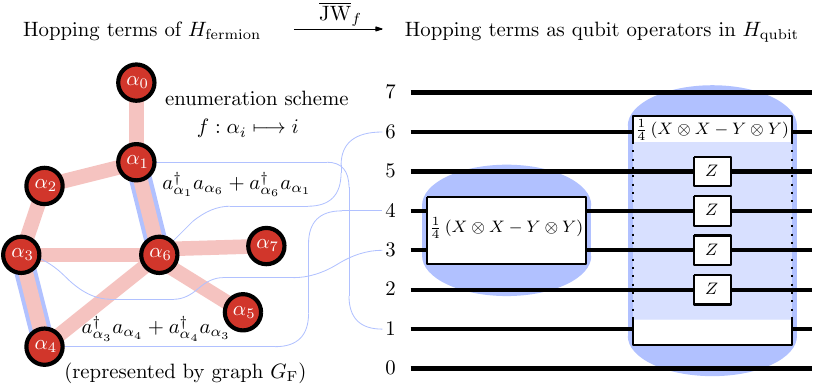}
	\caption{Jordan--Wigner transformation with a fermionic enumeration scheme $\overline{\text{JW}}_f$, in this diagram with the naïve enumeration scheme $f: \alpha_i \mapsto i$. For any enumeration scheme $f$, the fermion--qubit mapping $\overline{\text{JW}}_f$ transforms hopping terms $a_{\alpha_i}^\dagger a_{\alpha_j} + a_{\alpha_j}^\dagger a_{\alpha_i} \mapsto A_{f(\alpha_i)}^\dagger A_{f(\alpha_j)} + A_{f(\alpha_j)}^\dagger A_{f(\alpha_i)}$.}
	\label{fig:fullhopping}
\end{figure}

Literature on quantum simulation algorithms tends to permit quartic terms in the fermionic Hamiltonian that transform into local qubit operations. For example, Verstraete and Cirac \cite{verstraete2005mapping} and Derby and Klassen \cite{phasecraft2020low} consider terms of the form
\begin{align}
	& \overline{\text{JW}}_f: 	a_{\alpha_i}^\dagger a_{\alpha_i} a^\dagger_{\alpha_j} a_{\alpha_j} \longmapsto \label{eqn:quartic} \frac{1}{4} (\mathds{1} - Z)_{f(\alpha_i)} (\mathds{1}-Z)_{f(\alpha_j)}\, . 
\end{align}
The quartic terms make diagonalisation of the fermionic Hamiltonian $H_{\text{fermion}}$ exponentially difficult to solve on a classical computer. However, they become simple 1-- or 2--local Pauli operations after the Jordan--Wigner transformation maps them into the local Pauli operations in Equation \ref{eqn:quartic}. Thus, a fermionic Hamiltonian $H_{\text{fermion}}$ consisting of self--interactions $a^\dagger_{\alpha_i} a_{\alpha_i}$, hopping terms, and quartic terms of the form $a^\dagger_{\alpha_i} a_{\alpha_i} a^\dagger_{\alpha_j} a_{\alpha_j}$ transforms under $\overline{\text{JW}}_f$ to a qubit Hamiltonian $H_{\text{qubit}}$ where the only non--local terms are of the form in Equation \ref{eqn:simplehopping}. Moreover, the hopping terms are the only Pauli strings with weight dependent on the enumeration scheme $f$ of the Jordan--Wigner mapping.

Consider fermionic Hamiltonians of the general form
\begin{align}
	H_\text{fermion} &= \sum_{i,j=0}^{n-1} (c_{\alpha_i \alpha_j}) a^\dagger_{\alpha_i} a_{\alpha_j}  + \sum_{i,j=0}^{n-1}(c'_{\alpha_i \alpha_j}) a^\dagger _{\alpha_i} a_{\alpha_i} a_{\alpha_j}^\dagger  a_{\alpha_j}   \\
	& \qquad  \qquad \qquad \qquad \qquad  \text{(hereon referred to as Eqn.\ \ref{eqn:quartic} terms)} \nonumber \\
	& = \sum_{\text{pairs }(i,j)} \left( (c_{\alpha_i \alpha_j} ) a^\dagger_{\alpha_i} a_{\alpha_j} + (c_{\alpha_i \alpha_j}^*) a_{\alpha_i} a_{\alpha_j}^\dagger \right)   +  \text{Eqn.\ \ref{eqn:quartic} terms} \\ 
	&= \sum_{(\alpha, \beta) \in E} \left( (c_{\alpha \beta} ) a_{\alpha}^\dagger a_\beta + (c_{\alpha \beta} ^* ) a_\alpha a_\beta^\dagger \right) \label{eqn:problemhamtemp} + \text{Eqn.\ \ref{eqn:quartic} terms}\, , 
\end{align}
where $E$ is the edge set of the graph $G_\text{F}(V,E)$ with $V=\{\alpha_0, \dots, \alpha_{n-1}\}$ and $(\alpha, \beta) \in E \subseteq V \times V$ if and only if $c_{\alpha \beta} \neq 0$. 

For convenience of notation, assume that $c_{\alpha \beta} = 1$ for all $(\alpha, \beta) \in E$, so that only hopping terms of the form in Equation \ref{eqn:simplehopping} are of concern. If our task were to find the optimal fermion--qubit mapping as in Definition \ref{defn:opttype}, this assumption would reduce the variety of problem Hamiltonians described by Equation \ref{eqn:problemhamtemp}. However, focused as we are on finding optimal Jordan--Wigner transformation as defined in Examples \ref{exm:apv}--\ref{exm:mdepth}, the assumption $c_{\alpha \beta} = 1$ comes with no loss of generality in the form of the problem Hamiltonians. This is because the weights of the Pauli strings of $H_\text{qubit} = \overline{\text{JW}}_f(H_\text{fermion})$ in Equation \ref{eqn:pairs} with coefficients $\Re(c_{\alpha \beta})$ and $\Im(c_{\alpha \beta})$ are equal. Properties that are only concerned with the length of Hamiltonian terms such as $\text{APV}$, $\text{MPV}$ and $\text{MD}$ are unchanged by setting all nonzero $c_{\alpha \beta}$ to 1. For the remainder of this paper, we can thus assume problem Hamiltonians take the form:
\begin{align}
	H_\text{fermion} &= \sum_{(\alpha, \beta) \in E} \left( a^\dagger_\alpha a_\beta + a_\alpha a^\dagger _\beta \right) \label{eqn:problemham} + \text{Eqn.\ \ref{eqn:quartic} terms} \, . 
\end{align}

This assumption would not hold if instead the task were to find, for example, the optimal enumeration scheme for Bravyi--Kitaev type mappings $\overline{\text{BK}}_f$. This is because the Bravyi--Kitaev transformation does not transform the hopping terms of $H_{\text{fermion}}$ in Equation \ref{eqn:problemhamtemp} into a sum of Pauli strings of equal length \cite{seeley2012bravyi,nextsteps}.

\subsection{Expressions for cost functions of qubit Hamiltonians}\label{sec:enum1}

For problem Hamiltonians of the form in Equation \ref{eqn:problemham}, there is a simple relation for the Pauli weight of the Jordan--Wigner transformation of hopping terms:
\begin{align}
	\text{Pauli weight}\big(\overline{\text{JW}}_f(a_{\alpha}^\dagger a_\beta &+ a_\alpha a_\beta^\dagger) \big) \label{eqn:pauliweight}  = |f(\alpha) - f(\beta)| + 1\, , 
\end{align}
since the Pauli strings of $\overline{\text{JW}}_f(a_{\alpha}^\dagger a_\beta + a_\alpha a_\beta^\dagger)$ contain $(|f(\alpha) - f(\beta)| - 1)$ $Z$ matrices and two $X$ or $Y$ matrices, from Equation \ref{eqn:simplehopping}.

Recall from Section \ref{sec:math} that $C^p(f)$, the $p$--sum of a graph $G(V,E)$ with vertex enumeration scheme $f$, is 
\begin{align}
	C^p(f) \coloneqq \bigg(\sum_{(\alpha,\beta) \in E} |f(\alpha)-f(\beta)|^p \bigg)^{1/p}\, .
\end{align}
This notation, along with Equation \ref{eqn:pauliweight}, allows for simple expressions for the cost functions $\text{APV}$ and $\text{MPV}$ from Examples \ref{exm:apv} and \ref{exm:mpv}, respectively. 

\textbf{Average Pauli weight (Example \ref{exm:apv}).}
The formula for the average Pauli weight of a Jordan--Wigner transformation  $\overline{\text{JW}}_f$ of a problem Hamiltonian $H_\text{fermion}$ of the form in Equation \ref{eqn:problemham} is
\begin{align}\label{eqn:avgpauli}
	\text{APV}(f) \coloneqq \frac{C^1(f)}{|G_{\mathrm{F}}|}+1\, ,
\end{align}
where $G_\text{F}$ is the number of edges in $G_{\mathrm{F}}$. In fact, Equation \ref{eqn:avgpauli} neglects the contribution of the quartic terms from Equation \ref{eqn:quartic}  to the average Pauli weight of $H_\text{fermion}$, because they are independent of the fermion enumeration scheme $f$ and hence have no effect in determining the optimal enumeration scheme $f^*$ as described in Example \ref{exm:apv}. The extra term of 1 accounts for the fact that the weight of a Pauli string between qubits with labels $i$ and $j$ is $|i-j|+1$.

\begin{figure}
	\centering
	\includegraphics[width=\linewidth]{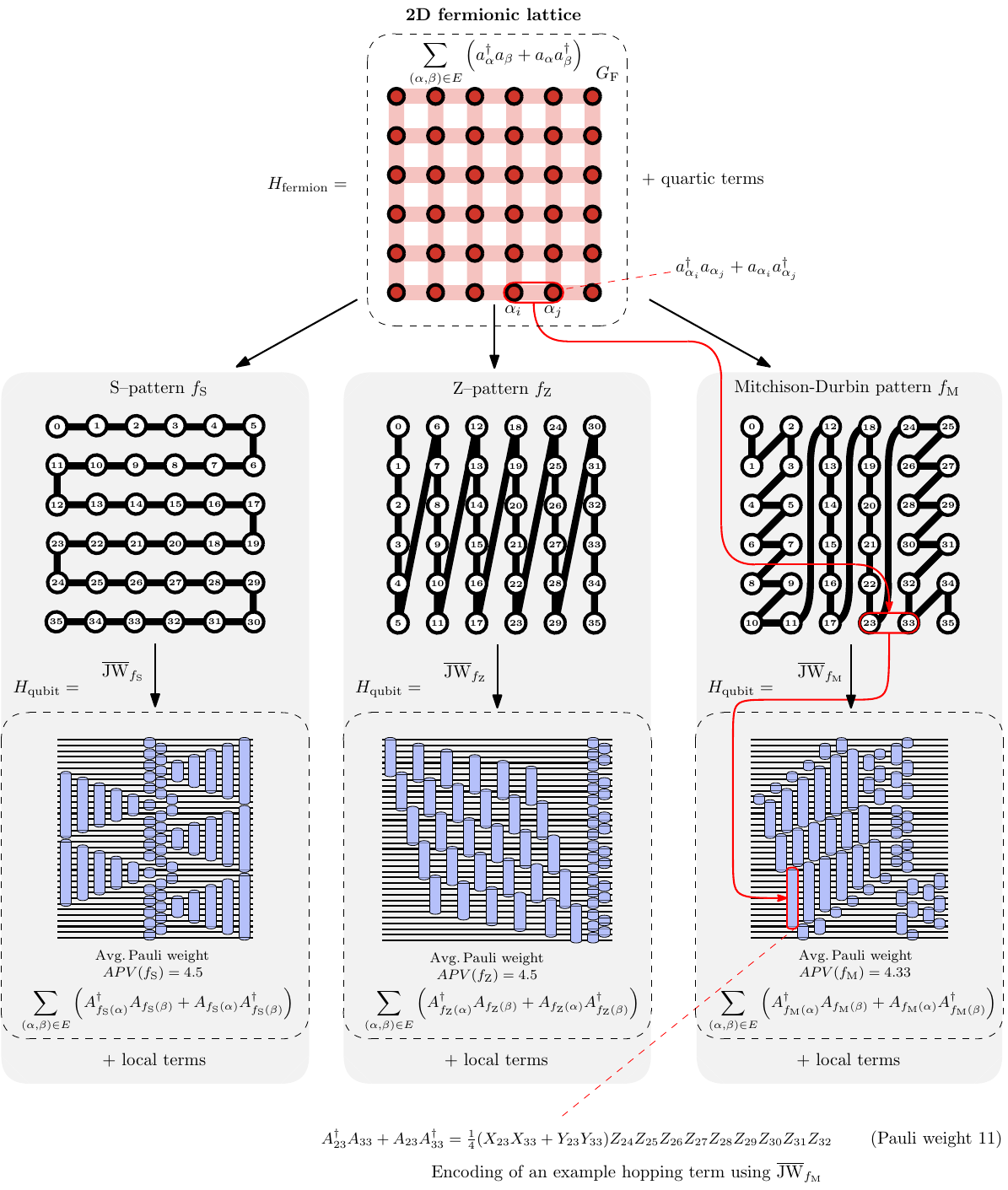}
	\caption{The Jordan--Wigner transformation $\overline{\text{JW}}_f$ produces qubit Hamiltonians that depend on the choice of enumeration scheme $f$. If the fermionic Hamiltonian contains hopping terms that form a square lattice, the Mitchison--Durbin pattern produces qubit Hamiltonians with hopping terms of the minimum possible average weight.}
	\label{fig:latticehop}
\end{figure}

Figure \ref{fig:latticehop} lists the average Pauli weight of the Jordan--Wigner transformation $\overline{\text{JW}}_f$ of a fermionic Hamiltonian $H_\text{fermion}$ with fermionic interaction graph $G_\text{F}$ equal to the 6$\times$6 lattice for three enumeration schemes: the S--pattern $f_\mathrm{S}$, the Z--pattern $f_{\mathrm{Z}}$, and the Mitchison--Durbin pattern, $f_{\mathrm{M}}$, which underlies the results of Section \ref{sec:mpattern}.

Finding Jordan--Wigner transformations with minimum average Pauli weight is equivalent to finding enumeration scheme of $G_\text{F}$ that minimise $C^1(f)$, which is the \NP--hard edgesum problem from Section \ref{sec:math}. In Section \ref{sec:mpattern}, we present Theorem \ref{thm:physical} for an enumeration scheme $f_{\mathrm{M}}$ that minimises the average Pauli weight of a Jordan--Wigner transformation of $H_\text{fermion}$ where if $G_\mathrm{F}$ is the $N${}$\times${}$N$ square lattice. Through Corollary \ref{cor:comparison}, this is an improvement of up to $\approx 13.9$\% upon existing methods.

\textbf{Average $p$th power of Pauli weight.} One might want to penalise the qubit Hamiltonian via some other measure, such as the average $p$th power of the Pauli weights where $p \in \mathds{R}^+$. Minimising this property corresponds to the \NP--hard minimum $p$--sum problem of minimising $C^p(f)$ from Section \ref{sec:math}.
In Section \ref{sec:pg1}, we present numerical results for the minimum $p$--sum problem to illustrate how one might optimise Jordan--Wigner transformations for this objective function.

\textbf{Maximum Pauli weight (Example \ref{exm:mpv})}
The cost function for the maximum Pauli weight of a Jordan--Wigner transformation $\overline{\text{JW}}_f$ is
\begin{align}
	\text{MPV}(f) = C^{\infty}(f) + 1\, ,
\end{align}
which corresponds to the \NP--hard bandwidth problem from Section \ref{sec:math}. However, for fermionic Hamiltonians $H_\text{fermion}$ with hopping terms on a square--lattice $G_\text{F}$, it is straightforward to see that any fermionic enumeration scheme with $C^\infty(f) = N$  will minimise the maximum Pauli weight of the qubit Hamiltonian. One such enumeration scheme is the Z--pattern $f_\text{Z}$.

\textbf{Pauli measurement depth (Example \ref{exm:mdepth}).} Finding Jordan--Wigner transformations with minimum Pauli measurement depths is beyond the scope of this work. Figure \ref{fig:paulidepth} demonstrates the variation of Pauli measurement depth of $H_\text{qubit} = \overline{\text{JW}}_f(H_\text{fermion})$, where the fermionic interaction graph $G_\text{F}$ is the 6$\times$6 lattice, for different enumeration schemes: $f_\mathrm{S}$, $f_{\mathrm{Z}}$, and $f_{\mathrm{M}}$.

\textbf{Other properties.} One may also consider other cost functions. For example, if the qubits have a connectivity graph $G_\mathrm{Q}$, it may be desirable to make the Pauli strings in $H_\text{qubit}$ as local as possible on $G_\mathrm{Q}$. A discussion of this optimisation appears in Section \ref{sec:routing}.

\begin{figure}
	\centering
	\includegraphics[width=0.5\linewidth]{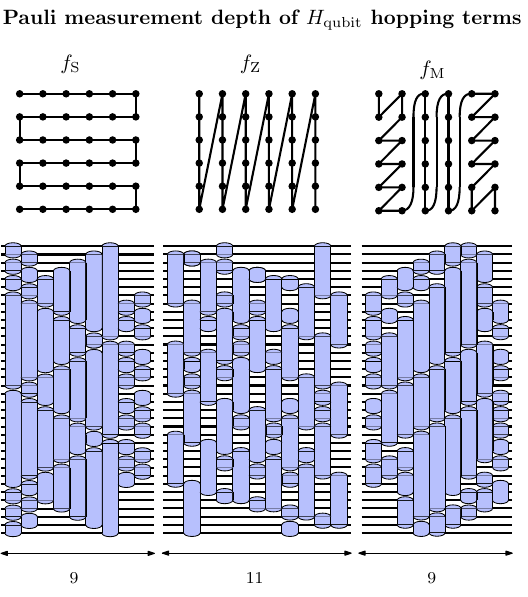}
	\caption{
		Comparison of the Pauli measurement depth of the hopping terms in qubit Hamiltonians $H_\text{qubit} = \overline{\text{JW}}_f(H_\text{fermion})$ produced by three different enumeration schemes on a 6$\times$6 square lattice of fermions.
	}
	\label{fig:paulidepth}
\end{figure}

\subsection{Jordan--Wigner transformations of square--lattice fermionic Hamiltonians with minimum average Pauli weight}\label{sec:mpattern}

Consider fermionic problem Hamiltonians $H_\text{fermion}$ of the form in Equation \ref{eqn:problemham} with interaction graphs $G_\text{F}$ equal to the $N${}$\times${}$N$ square lattice. Prior to this work, the default way to enumerate the fermionic modes, regardless of target cost function, has been to number them row--by--row in either a zig-zagging or snake-like pattern, as in Figure \ref{fig:paulidepth} variations~\cite{verstraete2005mapping,steudtner2019quantum}.
The solution to the edgesum problem for $G_\text{F}$ is related to the work of Graeme Mitchison and Richard Durbin, from their studies of the organisation of nerve cells in the brain cortex~\cite{mitchison1986optimal}. Figure \ref{fig:mitchisondurbin} displays the Mitchison--Durbin pattern $f_\text{M}$, which is dramatically different to these more conventional patterns. The proof of our main result  below, Theorem \ref{thm:physical}, makes use of this curious arrangement to construct a Jordan--Wigner transformation with minimum average Pauli weight.

\begin{figure}
	\centering
	\includegraphics[width=0.5\linewidth]{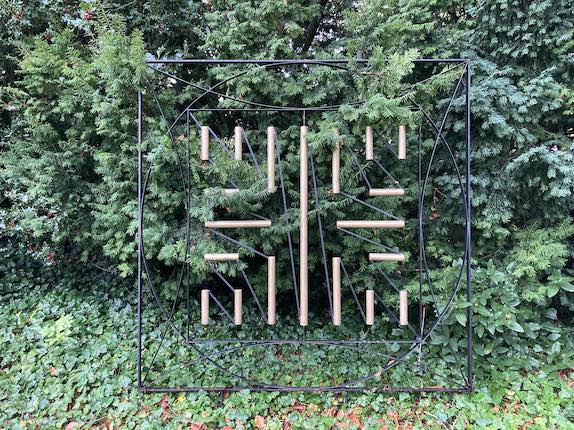}
	\caption{Graeme Mitchison's sculpture of the optimal numbering of a 7$\times$7 array. Reproduced with the permission of Richard Durbin.}
	\label{fig:mitchisondurbin}
\end{figure}

\begin{theorem}\label{thm:physical} \emph{(Jordan--Wigner transformations of square--lattice fermionic Hamiltonians with minimum average Pauli weight.)}
	Given a system of $n=N^2$ fermionic modes, suppose that the system has a Hamiltonian $H_\text{fermion}$ of the form in Equation \ref{eqn:problemham} with a square--lattice interaction graph $G_\text{F}$. Then, the fermion--qubit mapping $\overline{\text{JW}}_{f_\text{M}}$ is a Jordan--Wigner transformation with minimum average Pauli weight for $H_\text{fermion}$, where $f_\text{M}$ is the Mitchison--Durbin pattern.
\end{theorem}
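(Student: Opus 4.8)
The plan is to translate the statement about average Pauli weight into a purely graph-theoretic claim and then settle that claim by an edge-isoperimetric argument. By Equation~\ref{eqn:avgpauli}, $APV(f)=C^1(f)/|G_\text{F}|+1$, and since the edge count $|G_\text{F}|$ of the $N\times N$ lattice is fixed, minimising the average Pauli weight is exactly the edgesum problem: find $f$ minimising $C^1(f)=\sum_{(\alpha,\beta)\in E}|f(\alpha)-f(\beta)|$. The quartic and self-interaction terms of $H_\text{fermion}$ contribute an $f$-independent amount to the average (as already noted after Equation~\ref{eqn:avgpauli}), so they can be ignored. It therefore suffices to prove that the Mitchison--Durbin pattern $f_\text{M}$ solves the edgesum problem on the square grid.

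First I would record the standard ``layer-cake'' identity for linear arrangements. For an enumeration $f:V\to\{0,\dots,n-1\}$ and $1\le k\le n-1$, let $S_k(f)=f^{-1}(\{0,\dots,k-1\})$ be the set of the first $k$ vertices and let $c_f(k)=|E(S_k(f),\,V\setminus S_k(f))|$ count the edges crossing that cut. An edge $(\alpha,\beta)$ with $f(\alpha)<f(\beta)$ is counted by $c_f(k)$ for precisely the $|f(\alpha)-f(\beta)|$ values $k\in\{f(\alpha)+1,\dots,f(\beta)\}$, so
\[
C^1(f)=\sum_{k=1}^{n-1}c_f(k).
\]
This reduces the task to controlling all the cuts of $f$ at once.

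Next comes the lower bound. Let $\beta(k)$ denote the minimum edge boundary $|E(S,V\setminus S)|$ over all $k$-element subsets $S$ of the $N\times N$ grid graph, i.e.\ the edge-isoperimetric profile of the grid. For any $f$ we have $c_f(k)\ge\beta(k)$, hence $C^1(f)\ge\sum_{k=1}^{n-1}\beta(k)$. The crux of the argument, and the step I expect to be the main obstacle, is to show this bound is attained by $f_\text{M}$: that the prefix sets $S_k(f_\text{M})$ are, simultaneously for every $k$, edge-isoperimetric optimisers. This is essentially the content of Mitchison and Durbin's analysis~\cite{mitchison1986optimal}. It has two ingredients: (i) an explicit description of the prefixes of $f_\text{M}$ as a nested family of ``staircase'' regions that grow diagonal by diagonal; and (ii) a compression/shifting argument showing that an arbitrary $k$-subset of the grid can be transformed --- by successively pushing cells toward a fixed corner along rows, columns and anti-diagonals --- into such a staircase region without ever increasing its edge boundary. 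The delicate point is that the edge-isoperimetric optimisers of the grid can be chosen to form a single nested chain $S_1\subset S_2\subset\dots\subset S_n$ (this is special to the grid and not automatic in general), which is exactly what allows one enumeration to be optimal at every cut; one also has to check the compression steps are mutually consistent across all $k$, and handle the geometry of the grid's corner/boundary diagonals carefully.

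Finally, combining the two halves gives $C^1(f_\text{M})=\sum_{k=1}^{n-1}\beta(k)=\min_f C^1(f)$, so $f_\text{M}$ minimises the edgesum; by the reduction of the first paragraph, $\overline{\text{JW}}_{f_\text{M}}$ has minimum average Pauli weight among all Jordan--Wigner transformations with fermionic enumeration schemes for $H_\text{fermion}$, which is the claim.
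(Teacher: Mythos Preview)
Your reduction to the edgesum problem and the layer-cake identity $C^1(f)=\sum_{k=1}^{n-1}c_f(k)$ are both correct, and the isoperimetric lower bound $C^1(f)\ge\sum_k\beta(k)$ follows. The gap is in the matching upper bound: the edge-isoperimetric optima of the $N\times N$ grid do \emph{not} form a nested chain, so no enumeration --- in particular not $f_\text{M}$ --- can have $c_f(k)=\beta(k)$ for every $k$. For small $k$ the optimal $k$-sets are corner quasi-squares (boundary $\approx 2\sqrt{k}$), while for $k$ near $N^2/2$ they are full-height rectangles (boundary $N$); the $\tfrac{N}{2}\times\tfrac{N}{2}$ corner square and the $N\times\tfrac{N}{4}$ rectangle both have $N^2/4$ vertices and boundary $N$, yet neither contains the other, and there is no nested family interpolating between the two regimes. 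Concretely, for $N=6$ one computes $\sum_{k=1}^{35}\beta(k)=198$ while $C^1(f_\text{M})=200$, so your lower bound is strictly below the true minimum. Your description of $f_\text{M}$ as ``growing diagonal by diagonal'' is also not what the Mitchison--Durbin pattern does; it first grows corner squares up to side $x\approx(1-1/\sqrt{2})N$, then fills the left $x$ columns downward, then sweeps the middle columns, and so on --- a shape governed by a parameter $x$ that would have no role if the argument were purely isoperimetric.

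The paper's proof (an expanded version of Mitchison and Durbin's) takes a different route. It first shows that one may restrict to enumerations that are monotone in both rows and columns; for such $f$ the edgesum telescopes to a linear functional of the labels on the four sides of the grid (Equation~\ref{eqn:lrtb}). This recasts the problem as choosing the shapes of two regions $U_f$, $V_f$ (the vertices labelled before the bottom-left corner and after the top-right corner) so as to maximise and minimise certain boundary label sums, and then optimising the filling within each region. A sequence of exchange lemmas pins down the optimal shapes and finally the optimal value of $x$. The argument is a direct structural optimisation rather than an isoperimetric comparison, precisely because the naive isoperimetric bound is not tight.
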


\begin{proof}
	Proof in Section \ref{sec:proofs}.
\end{proof}

\begin{remark}\label{cor:edgesumnp}
	Finding Jordan--Wigner transformations with minimum average Pauli weight for arbitrary Hamiltonians $H_\text{fermion}$ of the form in Equation \ref{eqn:problemham} is $\NP$--hard. By extension, this applies to the general class of Hamiltonians in Equation \ref{eqn:ham}.
\end{remark}

Remark \ref{cor:solved} details all known scenarios to date where the optimal fermionic enumeration scheme is solvable in \poly$(n)$ time.

\begin{remark}\label{cor:solved}\emph{(Solutions for other graph types $G_\mathrm{F}$.)}
	If the Hamiltonian $H_\text{fermion}$ is as defined in Theorem \ref{thm:physical}, and if $G_{\mathrm{F}}$ belongs to any of the graph families in Figure \ref{fig:solvedgraphs}, then a classical computer can efficiently find the Jordan--Wigner transformation with minimum average Pauli weight for $H_\text{fermion}$.
\end{remark}



\begin{corollary} \label{cor:comparison}Using the Mitchison--Durbin pattern $f_{\mathrm{M}}$ in a Jordan--Wigner transformation for a fermionic system with square--lattice--interacting hopping terms produces Pauli strings in the qubit Hamiltonian $H_\text{qubit}$ with an average weight of $\frac{1}{3}\left(4-\sqrt{2}\right) \approx 0.86$ times the corresponding average Pauli weight that the Z--pattern $f_\mathrm{Z}$ and the S--pattern $f_\mathrm{S}$ produce. 
\end{corollary}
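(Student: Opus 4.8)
The plan is to reduce the statement to a comparison of edgesums through Equation \ref{eqn:avgpauli}. For a fermionic Hamiltonian of the form in Equation \ref{eqn:problemham} whose interaction graph $G_\mathrm{F}=(V,E)$ is the $N\times N$ square lattice, the average Pauli weight produced by $\overline{\text{JW}}_f$ is $APV(f) = C^1(f)/|E| + 1$ with $|E| = 2N(N-1)$. Hence the ratio $APV(f_\mathrm{M})/APV(f_\mathrm{S})$ is fixed once the three edgesums $C^1(f_\mathrm{M})$, $C^1(f_\mathrm{S})$, $C^1(f_\mathrm{Z})$ are known, and the entire argument amounts to evaluating them.

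First I would compute the two naïve edgesums directly. Under the Z-pattern the $N(N-1)$ horizontal edges each join consecutive labels (contributing $1$ apiece) while the $N(N-1)$ vertical edges each join labels differing by $N$, so $C^1(f_\mathrm{Z}) = N(N-1)(1+N) = N^3 - N$. Under the S-pattern the horizontal edges again contribute $N(N-1)$ in total, and for each of the $N-1$ consecutive row pairs the vertical-edge label differences run over $1,3,5,\dots,2N-1$, which sum to $N^2$; thus $C^1(f_\mathrm{S}) = N(N-1) + (N-1)N^2 = N^3 - N = C^1(f_\mathrm{Z})$. Therefore $APV(f_\mathrm{S}) = APV(f_\mathrm{Z}) = (N^3-N)/\big(2N(N-1)\big) + 1 = \tfrac12 N + \tfrac32$, consistent with Table \ref{tab:mappings}.

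Second I would import the value of $C^1(f_\mathrm{M})$ obtained in the proof of Theorem \ref{thm:physical}: the Mitchison--Durbin pattern attains the minimum grid edgesum, whose leading behaviour is $C^1(f_\mathrm{M}) = \tfrac13(4-\sqrt2)\,N^3 + O(N^2)$, equivalently $APV(f_\mathrm{M}) = \tfrac16(4-\sqrt2)\,N + O(1)$. Dividing by the naïve value then yields $APV(f_\mathrm{M})/APV(f_\mathrm{S}) \to \tfrac13(4-\sqrt2) \approx 0.86$ as $N\to\infty$, which is the assertion; the comparison is necessarily understood to leading order in $N$, since for small $N$ (e.g.\ $N=2$) the optimal numbering and the S-pattern can cost the same.

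The main obstacle is pinning down the exact constant $\tfrac13(4-\sqrt2)$ in the asymptotics of $C^1(f_\mathrm{M})$; this is the quantitative core of Theorem \ref{thm:physical}, resting on Mitchison and Durbin's analysis of the diagonal numbering of a square array, and here it is used as a black box. Everything else is the elementary row-and-column bookkeeping above together with a single limit.
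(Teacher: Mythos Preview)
Your proposal is correct and follows essentially the same route as the paper: reduce to edgesums via Equation \ref{eqn:avgpauli}, compute $C^1(f_\mathrm{S})=C^1(f_\mathrm{Z})=N^3-N$ directly, import the asymptotic $C^1(f_\mathrm{M})\sim\tfrac13(4-\sqrt2)N^3$ from the proof of Theorem \ref{thm:physical}, and take the ratio in the large-$N$ limit. One small terminological slip: the Mitchison--Durbin pattern is \emph{not} the ``diagonal numbering'' (the diagonal pattern $f_\mathrm{D}$ is a separate scheme in Section \ref{sec:pg1}); otherwise your argument matches the paper's, including the observation that the $0.86$ factor is asymptotic.
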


\begin{proof}
	On an $N${}$\times${}$N$ lattice, the Mitchison--Durbin pattern $f_\mathrm{M}$ and the S--pattern $f_\mathrm{S}$ have edgesums
	\begin{align}
		C^1(f_\mathrm{Z}) &= N^3 - N \, ,\\
		C^1(f_\mathrm{S}) &= N^3-N \, ,\\ 
		\label{eqn:mitchsum}
		C^1(f_\mathrm{M}) &= N^3-xN^2+2x^2N-\frac{2}{3}x^3+N^2 -xN -2N+\frac{2}{3}x \, ,
	\end{align} 
	respectively. The edgesum $C^1(f_\mathrm{Z})$ is straightforward to calculate; see Appendix \ref{sec:s-pattern} for the derivation of $C^1(f_\mathrm{S})$, and Section \ref{sec:proofs} for the derivation of $C^1(f_\mathrm{M})$. In Equation \ref{eqn:mitchsum}, the value of $x$ is the closest integer to $N-\frac{1}{2}\sqrt{2 N^2-2 N+\frac{4}{3}}$. Using Equation \ref{eqn:avgpauli} and the number of hopping terms $|G_\text{F}|=2N(N-1)$, the average Pauli weights are
	\begin{align}
		\text{APV}(f_\mathrm{Z}) = \text{APV}(f_\mathrm{S}) &= \frac{1}{2}N+\frac{3}{2}
	\end{align}
	and, for large $N$,
	\begin{align}
		\text{APV}(f_\mathrm{M}) &\approx \frac{1}{6}(4-\sqrt{2}) N + \frac{1}{12}(20+\sqrt{2})  = 0.43N+1.78\, .
	\end{align}
	For small $N$, explicit calculation verifies that  $\text{APV}(f_{\mathrm{S}})>\text{APV}(f_\mathrm{M}) $ for $N \geq 6$.
	The ratio of average Pauli weights is thus $\text{APV}(f_\mathrm{M})/\text{APV}(f_\mathrm{S}) =\text{APV}(f_\mathrm{M})/\text{APV}(f_\mathrm{Z}) \approx\frac{1}{3}(4-\sqrt{2})\approx0.86$.
\end{proof}

We can conclude from Corollary \ref{cor:comparison} that, simply by labelling the fermionic modes using the Mitchison--Durbin pattern rather than the S--pattern as proposed in \cite{verstraete2005mapping} or Z--pattern, we can produce a qubit Hamiltonian with terms that are 13.9\% more local on average. Our proposal immediately translates to a reduction by the same amount in the number of single--qubit measurements required in a VQE protocol. Even in the case of simulating small fermionic systems, this method provides a worthwhile advantage. For example, in Figure \ref{fig:latticehop} where $N=6$, the ratio of the average Pauli weight using $f_\mathrm{M}$ versus $f_\mathrm{S}$ is $4.33/4.5 \approx 0.96$. That is, even for the $6${}$\times${}$6$ lattice, applying Theorem \ref{thm:physical} will reduce the number of Pauli measurements by 4\%. 

For $2\leq N \leq 100$, Figure \ref{fig:apv2} shows the resulting average Pauli weights of various enumeration schemes on $N${}$\times${}$N$ lattices. Even for small lattices, the Mitchison--Durbin pattern can yield a meaningful reduction: for $N=20$, $C^1(f_\mathrm{M})/C^1(f_\mathrm{S})=\frac{7140}{7980}\approx 89.5\%$.

\begin{figure}
	\centering
	\includegraphics[width=\linewidth]{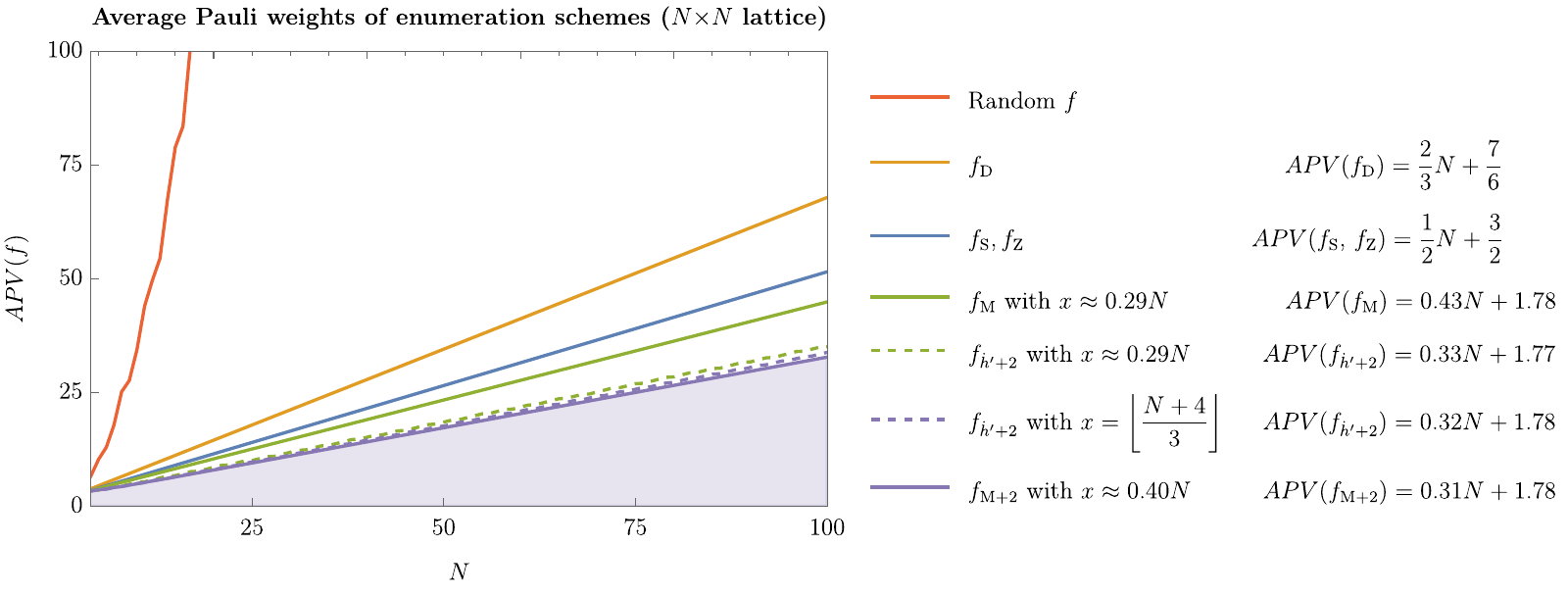}
	\caption{Average Pauli weights $\text{APV}(f)$ of enumeration schemes $f$
		on $N${}$\times${}$N$ lattices. Random $f$ is a randomly--generated enumeration scheme.}
	\label{fig:apv2}
\end{figure}

\subsection{Reducing the average Pauli weight for cellular fermionic lattices}\label{sec:cellular}

Theorem \ref{thm:physical} reduces the average Pauli weight of qubit Hamiltonians by making judicious use of the solutions to the edgesum problem. Remark \ref{cor:solved} refers to the scarce number of other graph families for $G_\text{F}$ for which edgesum solutions are known. It is tempting to think that there is not much use for our approach if the fermionic interaction graph $G_\mathrm{F}$ does not belong to one of these families. In this section, however, we show example fermionic Hamiltonians for which the enumeration schemes arising from approximate edgesum solutions can provide order--of--magnitude reductions in the average Pauli weight of the qubit Hamiltonians.

Consider fermionic Hamiltonians of the form in Equation \ref{eqn:problemham} with hopping terms between modes such that $G_\text{F}$ is an $(n${}$\times${}$n)${}$\times${}$(N${}$\times${}$N)$ \textit{cellular arrangement} of square lattices, where each $n${}$\times${}$n$ sub-lattice connects to adjacent sub-lattices via a single edge. Here we use the cellular pattern in Figure \ref{fig:cellular-patterns}, where the connections are from each $n${}$\times${}$n$ lattice's top left vertex to the two closest vertices from neighbouring $n${}$\times${}$n$ lattices.

It is possible to enumerate the fermionic modes with the Z--pattern $f_\mathrm{Z}$ or the S--pattern $f_\mathrm{S}$ from Section \ref{sec:mpattern}, treating the whole graph $G_\text{F}$ as a single square lattice. As Figure \ref{fig:cellular-patterns} shows, another way of enumerating the modes is to enumerate each sub-lattice locally before moving on to the next, progressing through the entire graph via an S-- or Z--pattern. We call these two enumeration procedures $f_{\mathrm{S}'}$ and $f_{\mathrm{Z}'}$, respectively.

The edgesums of these schemes on the cellular arrangement graph are:
\begin{align}
	C^1(f_\mathrm{S}) &= \begin{cases}
		(Nn)^3-Nn - N(N{-}1)(n{-}1) - \sum_{k=0}^{N{-}1} \sum_{i=kn+2}^{(k+1)n} (2i{-}1)\, , &  n \text{ even} \\
		(Nn)^3-Nn - N(N{-}1)(n{-}1) - \sum_{k=0}^{N{-}1} \sum_{i=kn+1}^{(k+1)n-1} (2i{-}1)\, , & n \text{ odd,}
	\end{cases}\\
	C^1(f_\mathrm{Z}) &= (Nn)^3-Nn - N(N-1)(n-1) - nN^2(n-1)(N-1)\,  , \\
	C^1(f_{\mathrm{Z}'}) &= \begin{cases}
		N^3n^2{+}n^3N^2  {-}n^2N^2  {-} 2n N^2 {-} n^2N {+} 2N^2  {+}2nN {+} n^2 {-}2N{-}n \, , & n \text{ even} \\
		N^3n^2 {+} n^3N^2 {-} n^2N^2 {-}nN^2 {-} n^2N {+} N^2 {+}2nN {+} n^2 {-}2N{-}2n{+}1 \, , & n \text{ odd,}
	\end{cases}
\end{align}
and while the value of $C^1(f_{\mathrm{S}'})$ also depends on the parities of $n$ and $N$, it is strictly greater than $C^1(f_\mathrm{Z'})$.

\begin{figure}
	\centering
	\includegraphics[width=\linewidth]{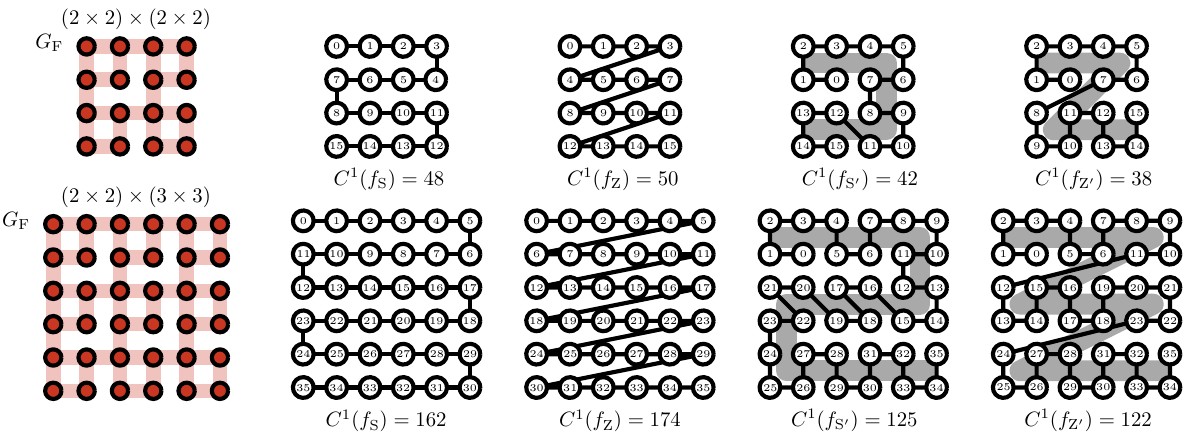}
	\caption{Edgesums for some enumeration schemes of $(n${}$\times${}$n) \times(N${}$\times N)$ cellular arrangement patterns.}
	\label{fig:cellular-patterns}
\end{figure}

By using $f_{\mathrm{Z}'}$ rather than $f_\mathrm{Z}$, the edgesum of a large cellular arrengement graph reduces by an order of magnitude:
\begin{align}
\lim_{N\rightarrow \infty} \frac{C^1(f_{\mathrm{Z}'})}{C^1(f_\mathrm{Z})} &= \frac{n^2}{n^3-n^2+n} = \mathcal{O}\left(\frac{1}{n}\right)\, ; \\
\lim_{n \rightarrow \infty} \frac{C^1(f_{\mathrm{Z}'})}{C^1(f_\mathrm{Z})} &= \frac{1}{N}\, .
\end{align}
If $n=\mathcal{O}(N)$, then the system has $\mathcal{O}(N^4)$ fermionic modes. From choosing $f_{\mathrm{Z}'}$ rather than $f_\mathrm{Z}$, the $\mathcal{O}(1/N)$ factor reduction in the average Pauli weight of the qubit Hamiltonian is thus proportional to the fourth root of the number of fermionic modes. This is much more striking than the constant--factor improvement proffered by Corollary \ref{cor:comparison}, even though there is no proof that $f_{\mathrm{Z}'}$ is an optimal enumeration scheme for this type of fermionic interaction graph $G_\text{F}$.

\subsection{Reducing the average \texorpdfstring{$p$}{p}th power of Pauli weight for a square lattice}\label{sec:pg1}
Section \ref{sec:enum1} introduced the average $p$th power of Pauli weight for $p>1$ as a cost function for the qubit Hamiltonian. For a fermionic Hamiltonian $H_\text{fermion}$ with interaction graph $G_\mathrm{F}$, consider the task of finding an enumeration scheme $f$ that minimises the $p$--sum
\begin{align}
C^p(f) = \left(\sum_{(u,v) \in G_\text{F}}\vert f(u) - f(v) \vert^p\right)^{1/p}\, .
\end{align}

\begin{figure}
\centering
\includegraphics[width=\linewidth]{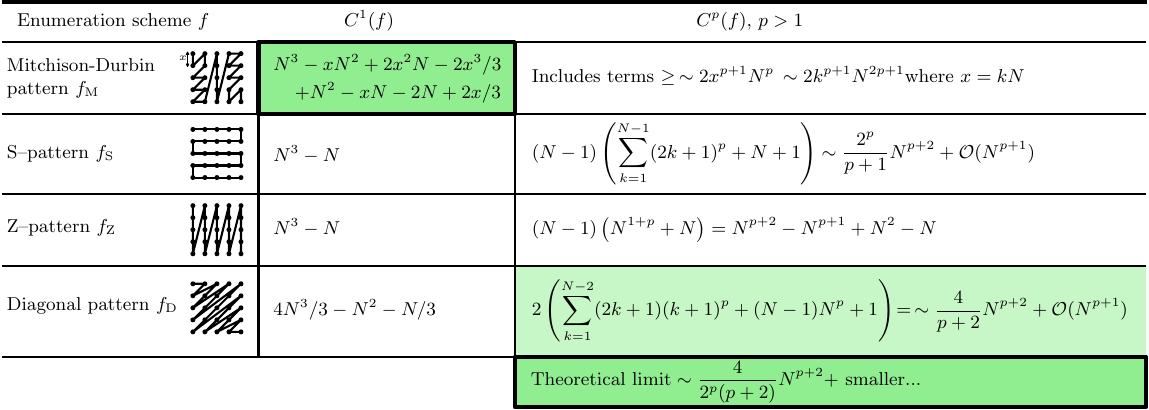}
\caption{Results for the minimum $p$--sum problem for the square lattice for $p \geq 1$. While the Mitchison--Durbin pattern $f_\mathrm{M}$ minimises $C^p(f)$ for $p=1$, it performs the worst of all the options for $p > 1$. The diagonal pattern $f_\mathrm{D}$ performs better than the rest for $p>2$.}
\label{fig:p2table}
\end{figure}

While the $p=1$ and $p=\infty$ cases correspond to metrics that seem physically motivated, we are not yet aware if current quantum algorithm designers would consider penalising finite powers $p>1$ of Pauli weight.
The $p=1$ case is the edgesum problem and via Remark \ref{cor:edgesumnp} is \NP--hard. George and Pothen showed that the minimum 2--sum problem is also \NP--hard  \cite{george1997spectralenvelope}. They develop spectral approaches to solving the minimum 1--sum and 2--sum problems for \textit{any} graph $G_\mathrm{F}$. Separately, Mitchison and Durbin \cite{mitchison1986optimal} explored the minimum $p$--sum problem for square lattices, and conclude with Proposition \ref{prop:psum}:

\begin{prop} \label{prop:psum}	\emph{(Mitchison and Durbin's results for the minimum $p$--sum problem on an $N${}$\times${}$N$ lattice \cite{mitchison1986optimal})} Let $G$ be the $N${}$\times${}$N$ square lattice. Then the following are the lower limits for the value of $C^p(f)$:
\begin{enumerate}[label=\emph{(\alph*)}]
	\item If $0 < p < \frac{1}{2}$, then there exists an enumeration scheme $f$ with $C^p(f) = \mathcal{O}(N^2)$.
	\item If $\frac{1}{2}<p<1$, then $C^p(f) \displaystyle \geq \frac{1}{2p+1}N^{1+2p}+\mathcal{O}\left(N^{2p}\right)$. \label{propb}
	\item If $p=1$, then $f_\mathrm{M}$ gives the minimum value as shown in Theorem \ref{thm:physical}.
	\item If $p>1$, then $C^p(f) \displaystyle \geq \frac{4}{2^p(p+2)}N^{p+2}$. \label{propd}
\end{enumerate}
The lower bounds in \ref{propb} and \ref{propd} are theoretical limits, and to date there are no known enumeration schemes that achieve their low coefficients. However, the ``diagonal pattern" $f_{\mathrm{D}}$ has edgesum
\begin{align}
	C(f_{\mathrm{D}}) \approx \frac{4}{p+2}N^{p+2} + \mathcal{O}(N^{p+1})\, ,
\end{align}
which has the same order of magnitude in $N$ as the theoretical limit in \ref{propd}.
\end{prop}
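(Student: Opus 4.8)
Throughout I take $C^p(f)=\sum_{(\alpha,\beta)\in E}|f(\alpha)-f(\beta)|^p$, the unrooted form of the $p$--sum, since that is the convention under which all four parts of Proposition~\ref{prop:psum} and the $f_\mathrm{D}$ estimate are mutually consistent (and the one used by Mitchison and Durbin~\cite{mitchison1986optimal}). Part~(c) then requires nothing new: it is exactly Theorem~\ref{thm:physical}, which names $f_\mathrm{M}$ as the minimiser of $C^1$ on the $N${}$\times${}$N$ lattice. For the remaining parts the common device is the prefix--cut identity. Writing $\ell_e=|f(\alpha)-f(\beta)|$ for an edge $e=(\alpha,\beta)$, and $S_t=f^{-1}(\{0,\dots,t-1\})$, $e_t=|\partial_E S_t|$, each lattice edge of length $\ell$ lies in exactly $\ell$ of the cuts, so $C^1(f)=\sum_{t=1}^{n-1}e_t$ with $n=N^2$. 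The edge--isoperimetric profile of the grid then yields $C^1(f)=\Omega(N^3)$ for \emph{every} $f$; this is the estimate underlying Theorem~\ref{thm:physical} and is the only fact from part~(c) that the rest of the proof needs.

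For the lower bounds (b) and (d) I would start from $C^1(f)\ge c_1N^3$ (valid for all $f$ by Theorem~\ref{thm:physical}, with $c_1\to\tfrac13(4-\sqrt2)$) together with the trivial bounds $1\le\ell_e\le n-1$ and $|E|=2N(N-1)$. For $p>1$ (part~(d)), peel off the long edges: $\sum_e\ell_e^p=\sum_e\ell_e\,\ell_e^{p-1}\ge L^{p-1}\!\sum_{\ell_e\ge L}\ell_e\ge L^{p-1}\big(c_1N^3-L\,|E|\big)$, and choosing $L=\Theta(N)$ gives $\sum_e\ell_e^p=\Omega(N^{p+2})$. For $\tfrac12<p<1$ (part~(b)), use concavity of $x\mapsto x^p$: over all length multisets with $\sum_e\ell_e=c_1N^3$ and $\ell_e\in[1,n-1]$, $\sum_e\ell_e^p$ is minimised by a ``bang--bang'' configuration with $\approx c_1N^3/n=\Theta(N)$ edges of length $n-1$ and the rest of length $1$, whose value is $\Theta(N\cdot N^{2p})=\Theta(N^{1+2p})$ once $p>\tfrac12$ (for $p\le\tfrac12$ the residual $|E|=\Theta(N^2)$ dominates, consistent with (a)). These crude estimates already give the stated exponents; extracting the sharp coefficients $\tfrac{4}{2^p(p+2)}$ and $\tfrac1{2p+1}$ amounts to replacing the counting steps by integrals against the \emph{exact} isoperimetric profile of the $N${}$\times${}$N$ grid.

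The matching upper bounds come from two explicit numberings. For (a) --- and, by the same computation, the $\mathcal{O}(N^{1+2p})$ ceiling for $\tfrac12<p<1$ --- take a quadrant--recursive (Hilbert or Morton) enumeration. An edge first separated at recursion level $j$ --- there are $\sim N2^j$ of these --- has $\ell_e=\mathcal{O}(N^2/4^j)$, so $\sum_e\ell_e^p\lesssim\sum_{j=1}^{\log_2 N}N2^j\,(N^2/4^j)^p=N^{1+2p}\sum_{j=1}^{\log_2 N}2^{j(1-2p)}$; the geometric sum is $\mathcal{O}(N^{1-2p})$ (total $\mathcal{O}(N^2)$) for $p<\tfrac12$, $\mathcal{O}(\log N)$ (total $\mathcal{O}(N^2\log N)$) at $p=\tfrac12$, and $\mathcal{O}(1)$ (total $\mathcal{O}(N^{1+2p})$) for $p>\tfrac12$. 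For the diagonal pattern $f_\mathrm{D}$ --- number the lattice by anti--diagonals, left--to--right within each --- every lattice edge joins consecutive anti--diagonals $d,d+1$ and has $\ell_e$ equal to $s_d$ or $s_d+1$ with $s_d:=\min(d+1,2N-1-d)$, and there are $\approx2s_d$ such edges, so $\sum_e\ell_e^p\approx2\sum_{d=0}^{2N-2}s_d^{\,p+1}\approx\tfrac{4}{p+2}N^{p+2}+\mathcal{O}(N^{p+1})$, which has the order of the lower bound in (d).

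The hard part is neither the constructions nor part~(c), both of which are routine, but pinning down the sharp constants in (b) and (d). This needs the precise edge--isoperimetric profile of the $N${}$\times${}$N$ grid (of Bollob\'as--Leader type) rather than the crude $\Omega(N^3)$ bound, and for (b) a genuine argument that the bang--bang relaxation is asymptotically optimal --- that an enumeration cannot convert its unavoidable $\Omega(N^3)$ of total edge length into many moderate jumps more cheaply than into a few maximal ones. Making that optimality rigorous, as opposed to merely the scaling exponents, is exactly where Mitchison and Durbin's analysis does its work, and is the step I expect to be most involved.
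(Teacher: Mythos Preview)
The paper does not prove Proposition~\ref{prop:psum} at all: its entire proof is the single line ``See Propositions 2--4 of \cite{mitchison1986optimal}.'' So there is no argument in the paper against which to compare yours; what you have written is a genuine (and substantially more informative) proof sketch, not a reproduction of anything the authors supply.

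Your sketch is essentially sound at the level of exponents, and you are right to flag the convention issue: the statement as printed only makes sense with the unrooted $p$--sum $\sum_e\ell_e^p$, not the paper's own $\big(\sum_e\ell_e^p\big)^{1/p}$, so your first paragraph is doing real work. The prefix--cut identity plus the grid isoperimetric profile is exactly the device Mitchison and Durbin use, and your Hilbert/Morton computation for (a) and the anti--diagonal count for $f_\mathrm{D}$ are correct. One caveat: your lower bound for (d) via the ``peeling'' estimate $L^{p-1}(c_1N^3-L|E|)$ gives only $\Omega(N^{p+2})$ and cannot reach the constant $4/(2^p(p+2))$; likewise your bang--bang relaxation for (b) is a lower bound on the \emph{relaxed} problem, not on $C^p(f)$ itself (concavity pushes the minimum to extremes only once you have dropped the constraint that the $\ell_e$ arise from an actual enumeration). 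You acknowledge both of these limitations in your final paragraph, correctly identifying that the sharp constants require the exact isoperimetric profile rather than the crude $\Omega(N^3)$ input --- which is precisely what Mitchison and Durbin's Propositions 2--4 do and what the paper is outsourcing.
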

\begin{proof}
See Propositions 2--4 of \cite{mitchison1986optimal}.
\end{proof}


\begin{remark}\label{rem:psum}
\emph{(Best known results for the minimum $p$--sum problem on a square lattice, $p \geq 1$)}

Figure \ref{fig:p2table} provides a table of known results for four enumeration schemes $f_{\mathrm{S}}$, $f_{\mathrm{M}}$, $f_{\mathrm{Z}}$, and $f_{\mathrm{D}}$ (the diagonal pattern) in the regimes $p=1$ and $p>1$. The diagonal pattern is the solution to the bandwidth problem (from Section \ref{sec:math}) on rectangles \cite{mitchison1986optimal}, and is the best--performing pattern for the minimum $p$--sum problem for $p>1$, $N \rightarrow \infty$.
\end{remark}

\section{Improving beyond optimality with ancilla qubits} \label{sec:aqm}

The fermion--qubit mappings in Definitions \ref{defn:naivejw}--\ref{defn:opttype}, Examples \ref{exm:jw}--\ref{exm:tt}, and in Section \ref{sec:enumeration} are all mappings between $n$--mode fermionic systems and $n$--qubit systems. Theorem \ref{thm:physical}, for example, details the Jordan--Wigner mapping with minimum average Pauli weight when using $N^2$ qubits to simulate a fermionic system on $N^2$ modes. However, there is a storied interplay between the qubit count of a fermion--qubit mapping and the locality of its Hamiltonian. As both qubits and quantum gates are costly with current technology, the tradeoff begs the question: what is the maximum benefit a mapping could gain by using an extra one or two qubits?

In Section \ref{sec:aqmintro}, we describe Steudtner and Wehner's template for auxiliary qubit mappings \cite{steudtner2019quantum}. In Section \ref{sec:auxiliary}, we give an example of this process in the form of a new auxiliary qubit mapping. We augment the Mitchison--Durbin pattern to produce an auxiliary qubit mapping $\overline{\text{JW}}_{f_{\text{M}+2}}$ which achieves a 37.9\% reduction over the average Pauli weight of $\overline{\text{JW}}_{f_\text{S}}$, using only two extra qubits. Thus, taking advantage of the Mitchison--Durbin pattern we see that a constant number of additional qubits almost triples the advantage provided by Theorem \ref{thm:physical}, thus improving our earlier results `beyond optimality'.

\subsection{Background: auxiliary qubit mappings} \label{sec:aqmintro}

Moll et al.\ \cite{moll2016optimizing} show how to reduce the number of qubits required by a Jordan--Wigner transformation at the cost of introducing more terms into the molecular Hamiltonian. Auxiliary qubit mappings~\cite{verstraete2005mapping,steudtner2019quantum,phasecraft2020low,jiang2019majorana,whitfield2016} show the converse: that by introducing more qubits into a simulation of $H_\text{fermion}$, it is possible to vastly simplify the qubit Hamiltonian $H_\text{qubit}$. Verstraete and Cirac demonstrated a fermion--qubit mapping between a fermionic Hamiltonian $H_\text{fermion}$ with a square--lattice interaction graph $G_\text{F}$ \cite{verstraete2005mapping} and a qubit Hamiltonian $H_\text{qubit}$ on $2N^2$ qubits consisting of only local terms. That is, rather than producing long strings of Pauli terms of weight $\mathcal{O}(N)$ that wind around the qubit lattice (as in, for example, Figure \ref{fig:enumvsembed}), the auxiliary qubit mappings produce operations of weight $\mathcal{O}(1)$.

Steudtner and Wehner's formulation of auxiliary qubit mappings \cite{steudtner2019quantum} is thus: suppose that a fermion--qubit mapping produces the qubit Hamiltonian
\begin{align}
	H_\text{fermion} \longmapsto H_\text{q} \coloneqq H_{\text{qubit}} = \sum_{h \in \Lambda} c_h h\, ,
\end{align}
where $\Lambda \subseteq \{\mathds{1},X,Y,Z\}^{\otimes n} $, and the $c_h \in \mathds{C}$ ensure that $H_\text{q}$ is Hermitian. Denote by the subscript `dat' the $n$--qubit data system upon which the Hamiltonian $H_\text{q}$ acts. Given a state $\ket{\psi}_\text{dat}$ of the data system, the goal is to reduce the complexity of simulating the time--evolved state $e^{-iH_\text{q}t}\ket{\psi}_{\text{dat}}$. An auxiliary qubit mapping achieves this goal by instead mapping 
\begin{equation}
	H_\text{fermion} \longmapsto \widetilde{H}_\text{q}\, ,
\end{equation}
where $\widetilde{H}_\text{q}$ is a simplified qubit Hamiltonian that acts on both the data register and ancilla qubits in an auxiliary register, denoted `aux'. The following three steps detail how to construct Steudtner and Wehner's type of auxiliary qubit mapping.

\begin{enumerate}
	\item Choose a subset $\Lambda_{\text{non-loc}} \subseteq \Lambda$ of terms in the Hamiltonian $H_\text{q}$. These are the terms which will be shorter in $\widetilde{H}_\text{q}$. For each $h \in \Lambda_{\text{non-loc}}$, choose a Pauli string $p^h$ where $h=h_{\text{loc}} \, p^h$ such that the weight of $h_\text{loc}$ is less than that of $h$. In total, only use $r$ distinct $p$--strings. That is, several $h \in \Lambda_\text{non-loc}$ may contain the same $p^h$.
	
	Denote the set of all $p^h$ by $\{p_i\}_{i=1}^r$. That is, for any $h \in \Lambda_{\text{non-loc}}$, then $p^h = p_i$ for some $i \in \{1,\dots,r\}$.
	
	\item Introduce $r$ ancilla qubits in the auxiliary register, and devise a unitary mapping \begin{align} V:\ket{\psi}_{\text{dat}}\ket{0}^{\otimes r}_{\text{aux}} \longmapsto |\widetilde{\psi}\rangle_{\text{dat,aux}}\end{align} where $\ket{\smash{\widetilde{\psi}}}_{\text{dat,aux}}$ is such that for each $i \in \{1,\dots,r\}$, there exists $\sigma_i \in \{\mathds{1},X,Y,Z\}$ acting on the $i$th auxiliary qubit such that
	\begin{align}
		((p_i)_\text{dat} \otimes (\sigma_i)_\text{aux}) |\widetilde{\psi}\rangle_{\text{dat,aux}} = \ket{\smash{\widetilde{\psi}}}_{\text{dat,aux}}\, . \label{eqn:stabs}
	\end{align}
	The $p_i \otimes \sigma_i$ are the \textit{stabilisers} of the combined data and auxiliary system.
	
	\textbf{Requirement 1:}
	The choice of $\{p_i\}_{i=1}^r$ must allow for the existence of $\ket{\smash{\widetilde{\psi}}}_{\text{dat,aux}}$ obeying Equation \ref{eqn:stabs}. A necessary but insufficient condition is that $[p_i,p_j]=0$ for all $i,j \in \{1,\dots,r\}$.
	
	\item
	Modify each $h \in \Lambda$ via
	\begin{align}
		h_{\text{dat}} \longmapsto h_{\text{dat}} \otimes \kappa^h_{\text{aux}}\, ,
	\end{align}
	where $\kappa^h$ is a Pauli string acting on the auxiliary qubits such that
	\begin{align}
		[h_\text{dat} \otimes (\kappa^h)_\text{aux}, \, (p_i)_\text{dat} \otimes (\sigma_i)_\text{aux}]=0
	\end{align}
	for all $i \in \{1,\dots,r\}$.
	The rationale for this step is as follows: Suppose that $h=h_\text{loc} \, p^h$, where $p^h = p_i$. Then, we can multiply each $h = h_\text{loc} \, p^h \in \Lambda_{\text{non-loc}}$ by its corresponding stabiliser, thus producing a lower--weight Hamiltonian term, $\big((h_{\text{loc}})_\text{dat} \otimes (\sigma_i)_\text{aux}\big)$:
	\begin{align}
		 (h_\text{dat} \otimes \mathds{1}_{\text{aux}}) \ket{\smash{\widetilde{\psi}}}_{\text{dat,aux}}  & =        \big((h_{\text{loc}} \, p^h)_{\text{dat}} \otimes \mathds{1}_{\text{aux}}\big) \ket{\smash{\widetilde{\psi}}}_{\text{dat,aux}} \\
		& =   \big(\left(h_{\text{loc}} \, p_i\right)_{\text{dat}} \otimes \mathds{1}_{\text{aux}}\big) \ket{\smash{\widetilde{\psi}}}_{\text{dat,aux}}  \\
		& = \big(\left(h_{\text{loc}} \, p_i\right)_{\text{dat}} \otimes \mathds{1}_{\text{aux}}\big) \label{eqn:stabgen} \big( (p_i)_\text{dat} \otimes (\sigma_i)_{\text{aux}} \big) \ket{\smash{\widetilde{\psi}}}_{\text{dat,aux}} \\
		& = \big((h_{\text{loc}})_\text{dat} \otimes (\sigma_i)_\text{aux}\big) \ket{\smash{\widetilde{\psi}}}_{\text{dat,aux}}\, .
	\end{align}
	To replicate the time evolution $e^{-iH_{\text{q}}t} \ket{\psi}_{\text{dat}}$ of the system, it is necessary to be able to generate any stabiliser $((p_j)_\text{dat} \otimes (\sigma_j)_\text{aux})$ at the step in Equation \ref{eqn:stabgen} and commute it past \textit{any} Hamiltonian term $h_\text{dat} \otimes \mathds{1}_{\text{aux}}$ where $h$ can now be any element of $\Lambda$. If $[h_\text{dat} \otimes \mathds{1}_\text{aux}, \, (p_j)_\text{dat} \otimes (\sigma_j)_\text{aux}] =[h,p_j]\neq 0$, then application of the Hamiltonian term produces a state
	\begin{align} 
		|\widetilde{\psi}\,\smash{'}\rangle_{\text{dat aux}} = (h_\text{dat} \otimes \mathds{1}_{\text{aux}}) \ket{\smash{\widetilde{\psi}}}_{\text{dat,aux}}
	\end{align} that no longer satisfies Equation \ref{eqn:stabs} for all $i$, and thus it is impossible to generate and apply more stabilisers as required by Equation \ref{eqn:stabgen}.
	
	\textbf{Requirement 2:}
	Crucially, the $\kappa^h$ must be such that they preserve the time evolution of the data system. That is,
	\begin{align}
		&V^\dagger (h_\text{dat} \otimes (\kappa^h)_\text{aux}) \ket{\smash{\widetilde{\psi}}}_{\text{dat,aux}}  =   h\ket{\psi}_{\text{dat}} \otimes \ket{0}^{\otimes r}_{\text{aux}}\, .
	\end{align}
	Otherwise, the system would not retain any useful information.
	The requirement for the existence of $\kappa^h$ for $h \in \Lambda$ places a restriction on $V$.
\end{enumerate}

\begin{figure}
	\centering
	\includegraphics[width=0.8\linewidth]{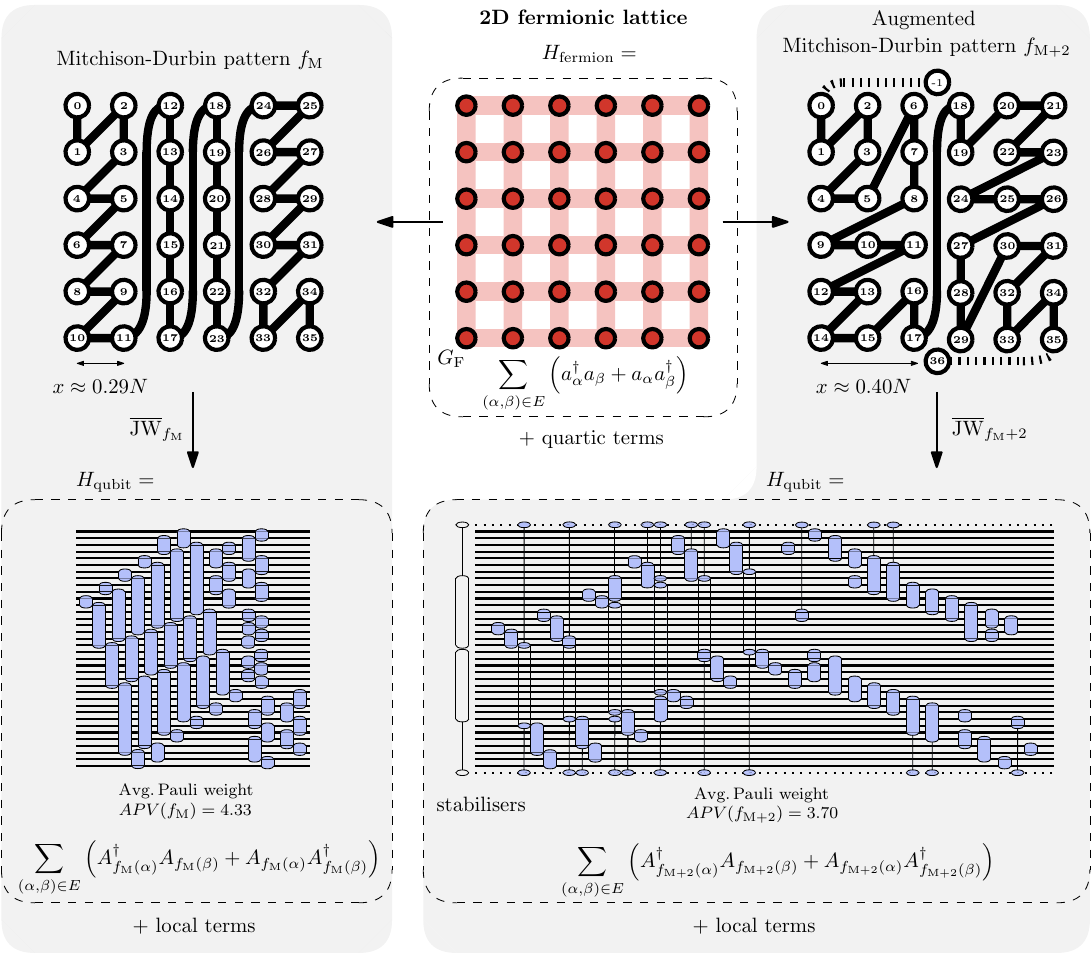}
	\caption{Comparison between the average Pauli weights of Jordan--Wigner mappings generated by the Mitchison--Durbin pattern $f_\mathrm{M}$ (left) and our auxiliary qubit pattern $f_{\mathrm{M}+2}$ (right). The bottom right image shows the reduction of the Pauli weight of long hopping terms after multiplication by stabilisers. Terms that partially overlap the stabilisers must also be adjusted.}
	\label{fig:6x6aug}
\end{figure}

After satisfying the above three steps, observe that
\begin{align} \label{eqn:shortcut}
	& \left(e^{-iH_{\text{q}}t} \ket{\psi}_{\text{dat}}\right) \otimes \ket{0}^{\otimes r}_\text{aux}  = \left(V^\dagger e^{-i\widetilde{H}_{\text{q}}t} \right)\ket{\smash{\widetilde{\psi}}}_\text{dat,aux}\, ,
\end{align}
where the effective qubit Hamiltonian $\widetilde{H}_{\text{q}}$ achieves the goal of reducing the Pauli weight of the strings in $\Lambda_\text{non-loc}$, and acts on both the data and auxiliary systems:
\begin{align}
	\widetilde{H}_{\text{q}} = &\sum_{h \in \Lambda_\text{non-loc}} c_h \big((h_\text{loc})_\text{dat} \otimes (\kappa^h \sigma^h)_\text{aux}\big)  \label{eqn:morelocal} + \sum_{h \in \Lambda \backslash \Lambda_\text{non-loc}} c_h (h_\text{dat} \otimes (\kappa^h)_\text{aux})\, . 
\end{align}
In Equation \ref{eqn:morelocal}, the Pauli matrix $\sigma^h$ is the matrix $\sigma_i$ from the stabiliser $p_i \otimes \sigma_i$ corresponding to $h = h_\text{loc} \, p^h$ via $p^h= p_i$. 
Assuming the cost of implementing $V$ is insignificant compared to cost of implementing time evolution by $\widetilde{H}_{\text{q}}$, then Equation \ref{eqn:shortcut} shows the advantage of the auxiliary qubit mapping: $V^\dagger e^{-i\widetilde{H}_{\text{q}}t}$ is less costly to implement than $e^{-iH_\text{q}t}$.

Note that this is not the most general form of auxiliary qubit mapping. An example of one that does not fit this template is the Verstraete--Cirac mapping, which uses stabilisers that are of the form $p_\text{dat} \otimes \tau_\text{aux}$ where $\tau_{\text{aux}}$ acts on more than one qubit \cite{verstraete2005mapping}.

\subsection{Using 2 ancilla qubits to reduce average Pauli weight by 37.9\% compared to the S--pattern}
\label{sec:auxiliary}

By modifying the proposal in Theorem \ref{thm:physical}, it is possible to use techniques inspired by the Mitchison--Durbin pattern to produce a `super--efficient' fermion--qubit mapping $\overline{\text{JW}}_{f_\text{M}+2}$ for problem Hamiltonians $H_\text{fermion}$, with interaction graphs $G_\text{F}$ equal to the $N${}$\times${}$N$ square lattice, using just $N^2+2$ qubits. This auxiliary Jordan--Wigner transformation has an average Pauli weight approximately 37.9\% less than that of the original $\overline{\text{JW}}_{f_\text{S}}$ presented in \cite{verstraete2005mapping}, and 27.9\% less than our result $\overline{\text{JW}}_{f_\text{M}}$ presented in Theorem \ref{thm:physical}, which was provably optimal for $n$--mode to $n$--qubit Jordan--Wigner transformations of the form in Definition \ref{defn:jwfenum}.

At this point, we direct our reader to Appendix \ref{sec:augmentedapp} for the full creation process of $f_{\mathrm{M}+2}$, and derivation of its average Pauli weight
\begin{align}
	\text{APV}\left(f_{\mathrm{M}+2}\right)  &\approx 0.31N+1.68\, .
\end{align}

Comparing our auxiliary qubit mapping to the S--pattern of \cite{verstraete2005mapping} and the Mitchison--Durbin pattern of Theorem \ref{thm:physical},
\begin{align}
	\lim_{N\rightarrow \infty} \frac{\text{APV}(f_{\mathrm{M}+2})}{\text{APV}(f_\text{S})} &\approx 0.62 \label{eqn:m+2s} \\
	\lim_{N\rightarrow \infty} \frac{\text{APV}(f_{\mathrm{M}+2})}{\text{APV}(f_\text{M})} &\approx 0.72\, , \label{eqn:m+2m}
\end{align}
yielding a 37.9\% improvement over the Z-- and S--patterns.

Figure \ref{fig:apv2} shows the average Pauli weights of fermion--qubit mappings $\overline{\text{JW}}_f$ using all the enumeration schemes $f$ for the $N${}$\times${}$N$ fermionic lattice discussed in this paper.

Figure \ref{fig:6x6aug} concludes by demonstrating the advantage of our auxiliary qubit mapping $\overline{\text{JW}}_{\text{M}+2}$ for simulating fermionic Hamiltonians $H_\text{fermion}$ with $G_\text{F}$ equal to the 6$\times$6 lattice. When $N=6$ and $x=3$, the qubit Hamiltonian $\widetilde{H}_\text{q}$ has a total Pauli weight of 222\footnote{This differs from direct substitution into Equation \ref{eqn:tpv} in Appendix \ref{sec:augmentedapp}, as the 6$\times$6 lattice has no region $D$ and some hopping terms benefit from multiplication by both stabilisers.}. Using $\overline{\text{JW}}_{\text{M}+2}$ produces a qubit Hamiltonian $\widetilde{H}_\text{q}$ with an average Pauli weight of $\frac{222}{60} = 3.70$, which is a 14.6\% improvement on the result $\frac{260}{60}=4.33$ of $\overline{\text{JW}}_{f_\text{M}} (H_\text{fermion})$, which is optimal for $n$--mode to $n$--qubit mappings. It is also a 17.8\% improvement on the average of $\frac{270}{60} = 4.5$ for $\overline{\text{JW}}_{f_\text{Z}} (H_\text{fermion})$ and $\overline{\text{JW}}_{f_\text{S}} (H_\text{fermion})$.

\section{Discussion}\label{sec:discussion}

\subsection{Summary of results}

By identifying the inherent place of fermion enumeration schemes in all $n$--fermionic mode to $n$--qubit mappings, we identified novel methods for creating new and improved mappings, demonstrating two main results.

The first (Section \ref{sec:enumeration}): when used as enumeration schemes for fermionic modes in Jordan--Wigner transformations of a broad class of $n$--mode problem fermionic Hamiltonians, solutions to well--known graph problems can minimise practical cost functions of the qubit Hamiltonians. Theorem \ref{thm:physical} illustrates the effect for fermionic Hamiltonians with hopping terms forming a square lattice of interactions, and identifies an $n$--qubit Jordan--Wigner transformation that uses the Mitchison--Durbing pattern to produce a qubit Hamiltonian with the minimum possible average Pauli weight. This is an improvement of 13.9\% compared to previous methods using the S--pattern. Even for fermionic interaction graphs without known edgesum solutions, classical heuristic techniques may provide enumeration schemes that drastically reduce the average Pauli weight, as in the case of cellular arrangements in Section \ref{sec:cellular}.

The second result combines our fermionic enumeration technique with the common strategy of employing auxiliary qubits to further preserve locality of interactions. In Section \ref{sec:aqm}, we provide an improved Jordan--Wigner transformation for fermionic Hamiltonians with square--lattice interactions. Unlike all other auxiliary qubit mappings in the literature, this mapping requires only two auxiliary qubits, regardless of the number of fermionic modes $n$, and thus can be compared with the results of Theorem \ref{thm:physical}, which it improves upon by reducing the average Pauli weight by 37.9\% compared to previous methods using the S--pattern.

Summarising our results, Table \ref{tab:mappings} compares Verstraete and Cirac's S--pattern to our optimal ancilla--free mapping using $f_\mathrm{M}$, and our low--ancilla mapping $f_{\mathrm{M}+2}$. Auxiliary qubit mappings with scalable numbers of ancilla qubits are included for comparison. 

\subsection{Nonlinear qubit architectures and qubit routing} \label{sec:routing}

Most results involving the Jordan--Wigner and Bravyi--Kitaev transformations rely on qubits with connectivity constraints. In particular, Verstraete and Cirac \cite{verstraete2005mapping} pursue a Jordan--Wigner type auxiliary qubit mapping to transform a local Hamiltonian on a fermionic lattice to a local Hamiltonian on a qubit lattice. Subsequent works have followed suit, usually working towards preserving geometric locality between the fermions and qubits, where the assumption is that the qubit architecture is the same as the fermionic graph \cite{phasecraft2020low,whitfield2016,steudtner2019quantum,jiang2019majorana}.

Fermionic enumeration schemes influence the distribution of quantum gates throughout nonlinear qubit architectures. To see this, consider decomposing a fermion--qubit mapping from a fermionic graph $G_\mathrm{F}$ to a qubit graph $G_\mathrm{Q}$ into two components: 1) the \textit{enumeration} of the fermionic modes, which projects the fermionic connectivity graph $G_\mathrm{F}$ onto a 1D array of qubits, and then 2) the \textit{embedding} of the 1D array of qubits into the qubit architecture $G_\mathrm{Q}$. Figure \ref{fig:embedding} visualises the distinction between the two processes.

Verstraete and Cirac's approach is to introduce the S--pattern to enumerate the fermions, and to use an S--pattern embedding to weave Pauli strings into the qubit lattice, as in the top row of Figure \ref{fig:enumvsembed}. In this work we emphasise that these are two distinct processes, and the choice of fermionic enumeration scheme can be done in an entirely separate way to the method of the embedding process. Indeed, as the second row of Figure \ref{fig:enumvsembed} shows, the path of the fermionic enumeration scheme need not follow the connectivity of the qubit architecture at all to provide a valid mapping.

As mentioned in Section \ref{sec:fmcost}, it is possible to conceive of metrics for qubit Hamiltonians that depend on qubit architectures. For example, the qubit routing problem concerns the distribution of Pauli strings on a nonlinear qubit connectivity $G_{\mathrm{Q}}$ \cite{routing2019}. The cost function for qubit routing should penalise strings of Pauli gates that spread through $G_{\mathrm{Q}}$ sparsely, and reward clustered Pauli strings. This poses a different problem to minimising the average Pauli weight of the qubit Hamiltonian, as the optimal enumeration scheme for the fermion--qubit mapping will need to cluster the Pauli strings of the qubit Hamiltonian while also minimising their weight. We expect that finding optimal enumeration schemes for cost functions that incorporate qubit architecture is a much more difficult family of problems than simply minimising a cost function of the Pauli weight as we have done here.

Sections \ref{sec:enumeration} and \ref{sec:aqm} only concerned Jordan--Wigner transformations of the form in Definition \ref{defn:jwfenum}. A more general class of mappings to search over would be the general Jordan--Wigner transformation in Example \ref{exm:jw}. The potential of using enumeration schemes as a method to define optimal Bravyi--Kitaev or ternary--tree mappings of the forms in Example \ref{exm:bk} and \ref{exm:tt} is completely unexplored and the subject of our upcoming work \cite{nextsteps}.

\begin{figure}
	\centering
	\includegraphics[width=0.8\linewidth]{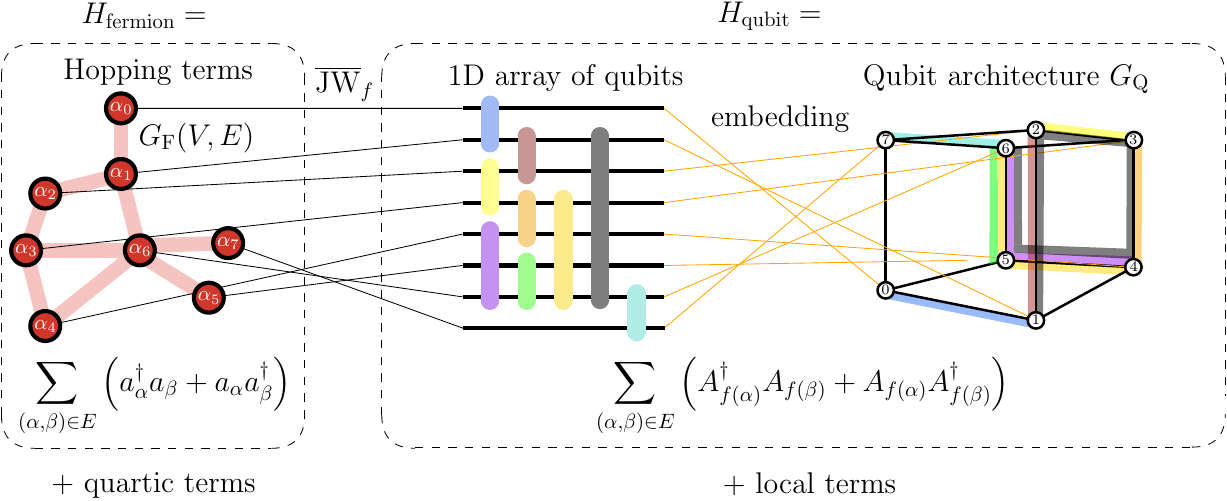}
	\caption{Fermionic enumeration techniques can influence the layout of Pauli strings on nonlinear qubit architectures.}
	\label{fig:embedding}
\end{figure}

\begin{figure}
	\centering
	\includegraphics[width=\linewidth]{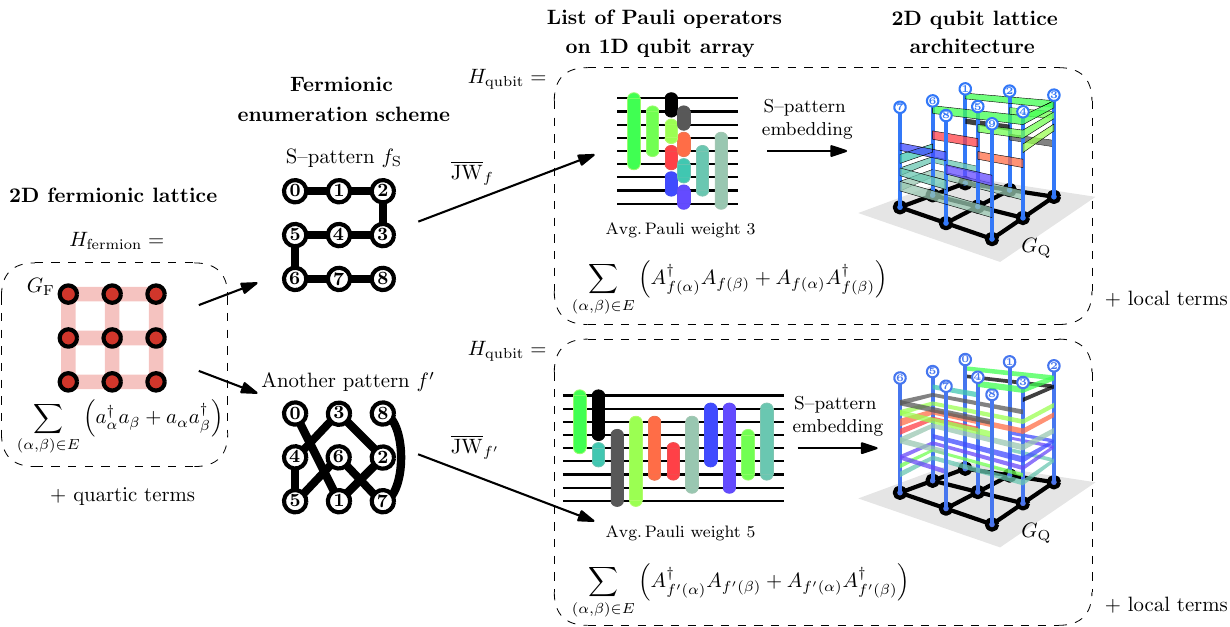}
	\caption{When working with elaborate qubit architectures such as the 2D lattice, the fermionic enumeration scheme need not follow the underlying qubit connectivity. This figure highlights the freedom to choose any scheme to enumerate fermions, yielding averages for the Pauli weights independent of the actual qubit connectivity.}
	\label{fig:enumvsembed}
\end{figure}

\renewcommand\arraystretch{2}
\begin{minipage}{\linewidth}
	\centering
	\captionof{table}{Fermion--qubit mappings for $H_\text{fermion}=N{\times}N$ square lattice} \label{tab:mappings} 
	\resizebox{\textwidth}{!}{
		\begin{tabular}{lcccccc}\toprule  & $\overline{\text{JW}}_{f_{\text{S}}}$ \cite{verstraete2005mapping} & $\overline{\text{JW}}_{f_\text{M}}$ (\ref{sec:enumeration}) & $\overline{\text{JW}}_{f_\text{M}+2}$ (\ref{sec:aqm}) & BK superfast \cite{bravyi2002fermionic} & VC \cite{verstraete2005mapping} & AQM \cite{steudtner2019quantum} \\ & (S--pattern) & (pattern from \cite{mitchison1986optimal}) & (+2 qubits) & & & \\ \hline
			Qubit number & $N^2$ & $N^2$ & $N^2+2$ & $2N^2-2N$ & $2N^2$ & $2N^2-N$ \\
			Qubit/mode ratio & 1 & 1 & $1+\frac{2}{N^2}$ & $2-\frac{2}{N}$ & 2 & $2-\frac{1}{N}$\\
			Avg.\ $X{...}X$  weight & $\frac{1}{2}N+\frac{3}{2}$ & $0.43N+1.78$ & $0.31N+1.78$ & $\mathcal{O}(1)$ &$\mathcal{O}(1)$ & $\mathcal{O}(1)$ \\ \hline
	\end{tabular}}
	\bigskip 
	\caption*{Comparison between fermion--qubit mappings based on the number of qubits required and the average Pauli weight of hopping terms  of the form $XZZ...ZX$ in the qubit Hamiltonian. In this case, the comparison is for fermionic Hamiltonians of $N^2$--mode systems where hopping terms form a square lattice of interactions. The mapping from Section \ref{sec:enumeration} gives the minimum average Pauli weight without the assistance of ancilla qubits, and the mapping from Section \ref{sec:aqm} improves this result while also maintaining a qubit/mode ratio of $1+\mathcal{O}\left({\frac{1}{N^2}}\right)$. In comparison, BK superfast, VC and AQM use twice as many qubits as ancilla--free mappings to reduce the weight of Hamiltonian terms, albeit to terms of weight $\mathcal{O}(1)$. Which mapping is best ultimately depends on factors such as how many qubits are available.}
\end{minipage}

{\bf Acknowledgements}
MC is funded by Cambridge Australia Allen and DAMTP Scholarship and the Royal Society PhD studentship. SS acknowledges support from the Royal Society University Research Fellowship scheme.

\newpage
\appendix
\section{Edgesum of the S--pattern}\label{sec:s-pattern}

\begin{lemma}\label{lem:ssum}
	On an $N${}$\times${}$N$ lattice, the S--pattern $f_\mathrm{S}$ has edgesum $C^1(f_\mathrm{S})=N^3-N$.
\end{lemma}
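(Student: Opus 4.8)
The plan is to compute the edgesum directly from an explicit formula for $f_\mathrm{S}$, after splitting the edge set of the $N\times N$ lattice into its horizontal and vertical edges. I would label the lattice sites by pairs $(r,c)$ with $r,c\in\{0,\dots,N-1\}$ (row and column), and write the S--pattern as $f_\mathrm{S}(r,c)=rN+c$ when $r$ is even and $f_\mathrm{S}(r,c)=rN+(N-1-c)$ when $r$ is odd, so that row $r$ always receives the consecutive block of labels $\{rN,\dots,rN+N-1\}$, traversed left--to--right or right--to--left according to the parity of $r$. The lattice has $N(N-1)$ horizontal edges and $N(N-1)$ vertical edges, matching $|G_\mathrm{F}|=2N(N-1)$.

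First I would dispatch the horizontal edges. Along any fixed row the labels are consecutive integers, so the horizontal edge joining $(r,c)$ and $(r,c+1)$ contributes exactly $|f_\mathrm{S}(r,c)-f_\mathrm{S}(r,c+1)|=1$. Summing over all $N$ rows and the $N-1$ horizontal edges in each row gives a total horizontal contribution of $N(N-1)=N^2-N$.

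Next I would handle the vertical edges, organised into the $N-1$ ``slabs'' lying between consecutive rows $r$ and $r+1$. A one--line computation from the formula above shows that the vertical edge in column $c$ of such a slab has length $2N-1-2c$ when $r$ is even and $2c+1$ when $r$ is odd; in both cases summing over the $N$ columns of the slab gives $\sum_{c=0}^{N-1}(2N-1-2c)=\sum_{c=0}^{N-1}(2c+1)=N^2$, independently of the parity of $r$. Hence the total vertical contribution is $(N-1)N^2=N^3-N^2$, and adding the two pieces yields $C^1(f_\mathrm{S})=(N^2-N)+(N^3-N^2)=N^3-N$, as claimed.

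The only place that needs care is the parity bookkeeping in the vertical slabs, since the per--column edge length depends on whether the lower row of the slab is even or odd; the observation that removes this as a genuine obstacle is that the column--sum over a slab equals $N^2$ in either case, so the final answer is insensitive to the parity of $N$ and no case split on $N$ is required. Everything else reduces to evaluating an arithmetic series.
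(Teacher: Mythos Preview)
Your proof is correct and follows essentially the same approach as the paper: both split the edgesum into horizontal and vertical contributions, observe that every horizontal edge contributes $1$ (giving $N(N-1)$ in total), and note that the vertical edge lengths between any two adjacent rows run through the odd numbers $1,3,\dots,2N-1$ and hence sum to $N^2$ per slab, yielding $(N-1)N^2$. Your version is a bit more explicit about the parity bookkeeping via the closed-form formula for $f_\mathrm{S}$, but the underlying computation is identical.
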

\begin{proof}
	Figure \ref{fig:s-patternnormal} displays the S--pattern enumeration procedure for an $N${}$
	\times${}$N$ square lattice, as well as the differences between adjacent vertices' indices.
	
	Regardless of whether $N$ is odd or even, the cost is thus
	\begin{align}
		C^1(f_\mathrm{S}) &= (N-1) \times \text{no. of rows} + \sum_{i=1}^{N} (2i-1) \times (\text{no. of rows} - 1)\\
		&=  N^2-N + (N-1) (N(N+1)-N) \\
		&=  N^2 - N - N^2 + N + N(N^2-1) \\
		&=  N^3-N\, . \qedhere
	\end{align}
\end{proof}

The average Pauli weight for the S--pattern on a square lattice is then
\begin{align}
	\text{APV}(f_\mathrm{S})&=\frac{C^1(f_{\mathrm{S}})}{|E|}+1 = \frac{N^3-N}{2N(N-1)}+1 = \frac{1}{2}N+\frac{3}{2} . \label{eqn:spatavg}
\end{align}

\begin{figure}[hbtp]
	\centering
	\includegraphics[width=0.8\linewidth]{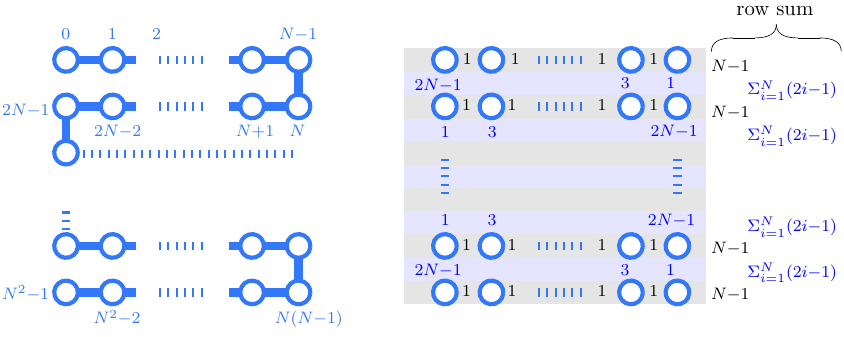}
	\caption{\textbf{Top:} S--pattern enumeration on the $N${}$\times${}$N$ square lattice. \textbf{Bottom:} Differences between vertex labels using the S--pattern enumeration, with row totals on the far right.}
	\label{fig:s-patternnormal}
\end{figure}

\section{Proof of Theorem \ref{thm:physical}} \label{sec:proofs}
This section contains a more elaborate version of the original proof \cite{mitchison1986optimal} that the Mitchison--Durbin pattern $f_\mathrm{M}$ solves the edgesum problem for an $N${}$\times{}${}$N$ lattice. Figure \ref{fig:proofsummary} chronicles the proof, which starts with an arbitrary enumeration scheme for the $N^2$ vertices before performing five sequential modifications (Proposition \ref{prop:order} and Lemmas \ref{lem:maxUV}--\ref{lem:mitchcost}) which exhaust all possible ways to reduce the edgesum, to give the Mitchison--Durbin pattern $f_\mathrm{M}$.

The edgesum problem is the minimum--$p$--sum problem described in Section \ref{sec:math} with $p=1$. The first part of the proof, Proposition \ref{prop:order}, holds for all $p \geq 1$; the rest of the proof concerns only the case $p=1$.

Given a graph $G=(V,E)$ and an enumeration scheme $f : V \rightarrow \{0,1,\dots,N^2-1\}$ for $G$, the cost function $C^p(f)$ for the minimum $p$--sum problem is
\begin{align}
	\left(C^p(f)\right)^p &= \sum_{(\alpha,\beta) \in E} \vert f(\alpha) - f(\beta) \vert^p \, . \label{eqn:psumcost}
\end{align}
In the case that $G$ is a square lattice, the edge set $E$ consists of pairs of horizontally and vertically adjacent vertices, leading to horizontal and vertical contributions:
\begin{align}
	\left(C^p(f)\right)^p \label{eqn:psum}  &= \hspace{-2em} \sum_{\substack{(\alpha, \beta) \in E; \\ \text{horizontally adjacent}}} \hspace{-2em} \vert f(\alpha) - f(\beta) \vert^p 
	+ \hspace{-2em} \sum_{\substack{(\alpha, \beta) \in E; \\ \text{vertically adjacent}}} \hspace{-2em} \vert f(\alpha) - f(\beta) \vert^p\, . 
\end{align}

\begin{figure}
	\centering
	\includegraphics[width=\linewidth]{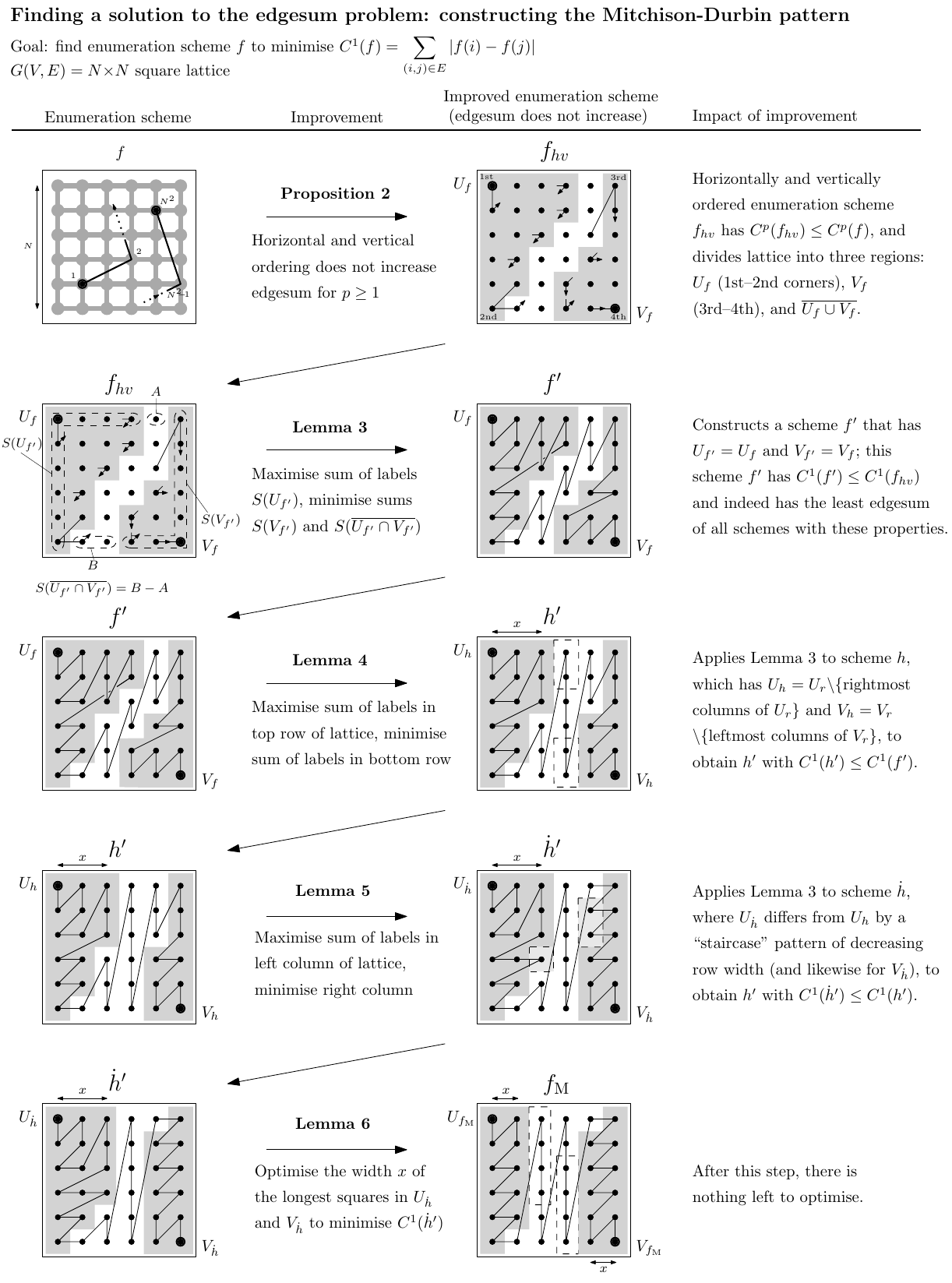}
	\caption{Overview of the proof that the Mitchison--Durbin pattern is a solution to the edgesum problem for the square lattice.}
	\label{fig:proofsummary}
\end{figure}

\begin{definition}
	Let $G=(V,E)$ be the $N${}$\times${}$N$ square lattice, and let $f$ be an enumeration scheme for $G$. Say $f$ is \textit{horizontally ordered} if, for any vertex $\alpha \in V$, $f(\beta)>f(\alpha)$ for all vertices $\beta$ to the right of $\alpha$, and say $f$ is \textit{vertically ordered} if $f(\beta)>f(\alpha)$ for all vertices $\beta$ below $\alpha$. 
\end{definition}

\begin{figure}
	\centering
	\includegraphics[width=0.6\linewidth]{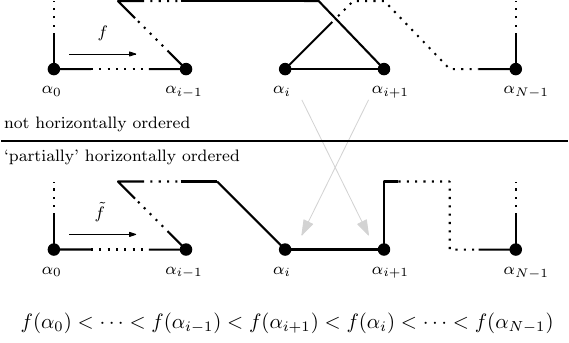}
	\caption{Swapping the labels of vertices in a row of the $N${}$\times${}$N$ square lattice as part of the process of constructing a horizontally ordered enumeration scheme.}
	\label{fig:inequalities}
\end{figure}

\begin{figure}
	\centering
	\includegraphics[width=0.9\linewidth]{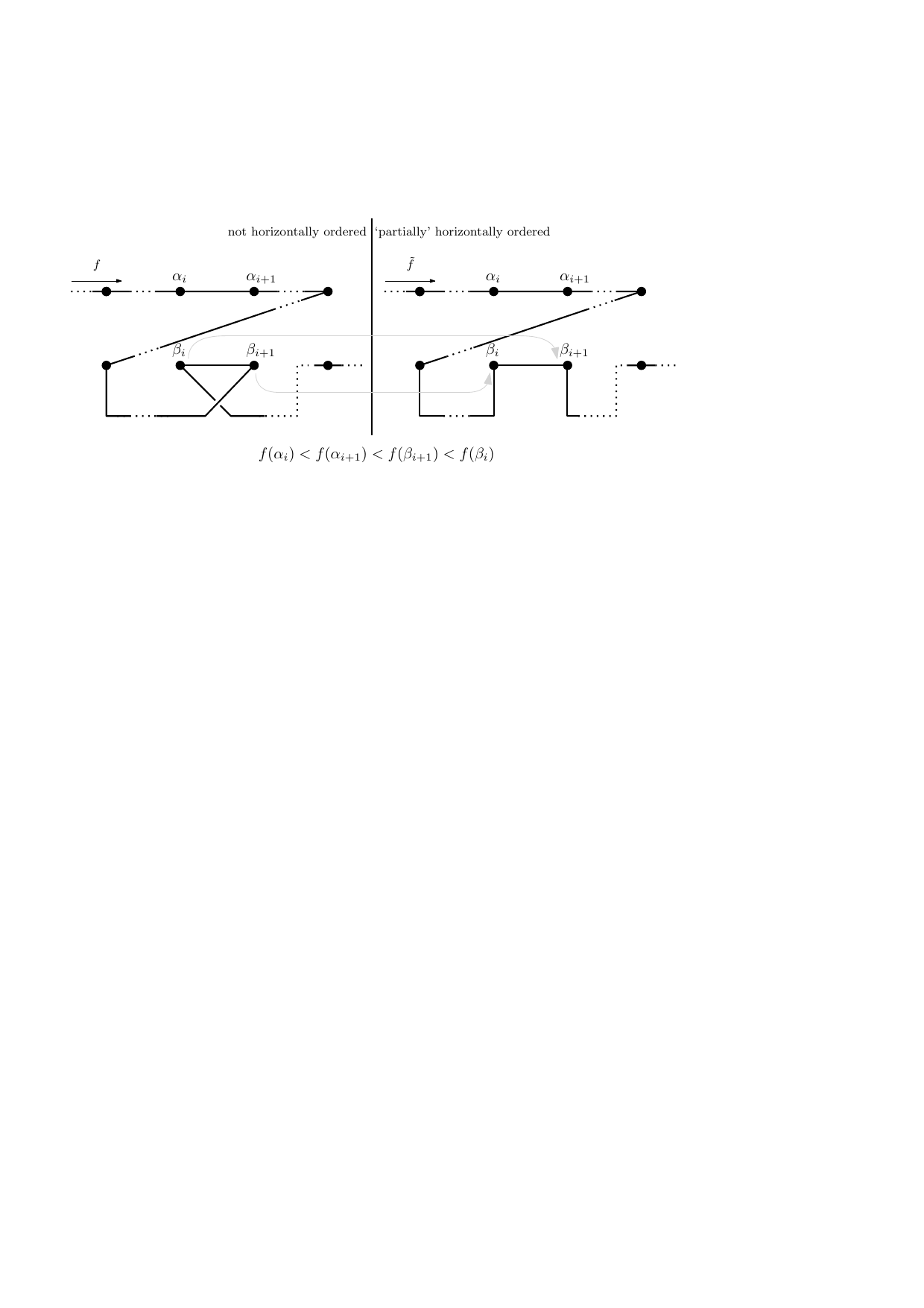}
	\caption{Swapping vertex labels within a row will affect some of the contributions to the $p$--sum of the enumeration scheme from vertically adjacent vertices.}
	\label{fig:propproof}
\end{figure}

\begin{figure}
	\centering
	\includegraphics[width=\linewidth]{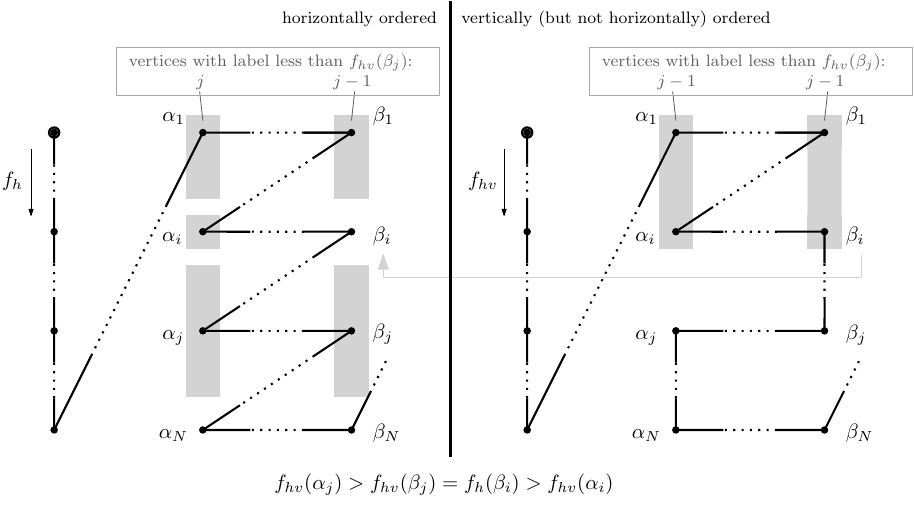}
	\caption{Illustration of the contradiction that would arise if a vertical ordering $f_{hv}$ of a horizontally ordered enumeration scheme $f_h$ were not itself horizontally ordered.}
	\label{fig:claimproof}
\end{figure}

\begin{prop}\label{prop:order}
	Let $G=(V,E)$ be the $N${}$\times${}$N$ square lattice, let $p \geq 1$, and let $f$ be an enumeration scheme $f$ for $G$. Then there exists a horizontally and vertically ordered enumeration scheme $g$ such that $C^p(g) \leq C^p(f)$.
\end{prop}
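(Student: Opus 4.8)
The plan is to transform $f$ into $g$ in two stages --- first sorting the labels within rows to obtain horizontal order, then sorting within columns to obtain vertical order --- while checking at each stage that $C^p$ does not increase, and afterwards that horizontal order is not destroyed by the column sort. For Stage~1, freeze the partition of the labels $\{0,\dots,N^2-1\}$ into rows that $f$ induces (i.e.\ which $N$ labels sit in each row) and consider all schemes obtained from $f$ by permuting labels within rows. Split $(C^p)^p$ as in Equation~\ref{eqn:psum} into a horizontal part, which is a sum over rows of $\sum_{j=1}^{N-1}|c_{j+1}-c_j|^p$ for that row's labels $c_1,\dots,c_N$ in position order, and a vertical part, which is a sum over adjacent row pairs of $\sum_j |r_j-s_j|^p$. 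Two convexity facts then show that putting every row in increasing order \emph{simultaneously} minimises the whole expression over this class: (a) for $p\ge1$ the monotone ordering of a fixed multiset minimises $\sum_{j}|c_{j+1}-c_j|^p$ over all orderings, which handles the horizontal part row by row; and (b) for $p\ge1$ and reals $a_1\le a_2$, $b_1\le b_2$ one has $|a_1-b_1|^p+|a_2-b_2|^p\le|a_1-b_2|^p+|a_2-b_1|^p$, a consequence of convexity of $t\mapsto|t|^p$ via the majorisation $\{a_1-b_1,a_2-b_2\}\prec\{a_1-b_2,a_2-b_1\}$, so by the usual exchange (bubble-sort) argument the sorted-to-sorted pairing minimises $\sum_j|r_{\tau(j)}-s_{\sigma(j)}|^p$; since the position-aligned pairing used by $f$ is one such pairing, sorting both rows can only help, handling the vertical part pair by pair. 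Call the result $f_h$; it is horizontally ordered and $C^p(f_h)\le C^p(f)$. Figures~\ref{fig:inequalities} and~\ref{fig:propproof} illustrate exactly which horizontal and vertical contributions are affected.

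Stage~2 applies the construction symmetrically to $f_h$, permuting labels within columns, producing $f_{hv}$, which is vertically ordered with $C^p(f_{hv})\le C^p(f_h)$. Stage~3 is the Claim that $f_{hv}$ is still horizontally ordered. Let columns $c$ and $c+1$ be adjacent and write $\ell_c^{(1)}<\dots<\ell_c^{(N)}$ and $\ell_{c+1}^{(1)}<\dots<\ell_{c+1}^{(N)}$ for the sorted labels of those columns. Because $f_h$ is horizontally ordered, in every row the column-$c$ label is smaller than the column-$(c+1)$ label; taking the $k$ rows that contain the $k$ smallest labels of column $c+1$, column $c$ then contains at least $k$ labels below $\ell_{c+1}^{(k)}$, so $\ell_c^{(k)}<\ell_{c+1}^{(k)}$ for every $k$. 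After the column sort, row $i$ receives $\ell_c^{(i)}$ in column $c$ and $\ell_{c+1}^{(i)}$ in column $c+1$, and $\ell_c^{(i)}<\ell_{c+1}^{(i)}$, so horizontal order is preserved; Figure~\ref{fig:claimproof} records the same fact in contrapositive form. Hence $g\coloneqq f_{hv}$ is horizontally and vertically ordered with $C^p(g)\le C^p(f)$, as required.

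\textbf{Main obstacle.} The genuine content is the two convexity lemmas: (b) is the standard Monge/rearrangement inequality for $|t|^p$ with $p\ge1$, while (a) must be argued for all $p\ge1$ and not merely for $p=1$ (where it is immediate from $\sum|c_{j+1}-c_j|\ge\max-\min$ with equality iff monotone); an induction on the largest label of the row, tracking its one or two path-neighbours, is the natural route. The other delicate point is organisational rather than computational: the rows must be sorted all at once and compared to $f$ globally, since permuting a single row in isolation need not decrease its vertical contributions to still-unsorted neighbours --- it is the ``position-aligned pairing is one pairing among many, hence dominated by the sorted-to-sorted pairing'' observation that makes the global comparison go through.
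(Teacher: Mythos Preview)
Your proof is correct and follows the same two–stage architecture as the paper: sort within rows to obtain $f_h$, sort within columns to obtain $f_{hv}$, then verify that the column sort does not destroy horizontal order. Your Stage~3 counting argument (``at least $k$ labels of column $c$ lie below $\ell_{c+1}^{(k)}$'') is the direct form of exactly the contrapositive the paper records.

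The genuine difference is in how Stage~1 is executed. The paper argues by single adjacent transpositions: swap an out-of-order pair in one row, check that the two horizontal edges incident to that pair and the two vertical edges to the (already sorted) row above do not increase. You instead sort \emph{all} rows at once and compare $f$ to $f_h$ globally, invoking two separate convexity facts: (a) that the monotone ordering of a multiset minimises $\sum_j|c_{j+1}-c_j|^p$ among all orderings, and (b) the Monge/rearrangement inequality for $|t|^p$. This buys you two things. First, you avoid the bookkeeping issue you correctly flag in your ``Main obstacle'' paragraph --- a single swap in a row can interact badly with its \emph{unsorted} neighbour below, which the paper's local argument does not explicitly handle. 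Second, your lemma~(a) is a clean stand-alone statement, whereas the paper's adjacent-swap computation for the horizontal part is somewhat informal (it tracks only one of the two boundary terms). The price is that (a) requires an actual proof for $p>1$ (the induction on the largest element that you sketch does work: if the maximum is an endpoint, remove it and induct; if interior with neighbours $c_j,c_k$, use $(c_n-c_j)^p+(c_n-c_k)^p-(c_k-c_j)^p\ge(c_n-c_{n-1})^p$, which follows from $c_n-c_k\ge c_n-c_{n-1}$ and $(a+b)^p\ge b^p$).
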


\begin{proof}
	From $f$, construct the enumeration scheme $f_h$ by permuting each row of the lattice so that vertex index values ascending from left to right. Call $f_h$ the \textit{horizontal ordering of $f$}. Similarly, construct $f_{hv}$ from $f_h$ by permuting vertex labels under $f_h$ within each column to increase from top to bottom. Call $f_{hv}$ the \textit{vertical ordering of $f_h$}. We will prove Proposition \ref{prop:order} via the statements S1 and S2 below:
	\begin{enumerate}[label={S\arabic*.}]
		\item $C^p(f_h) \leq C^p(f)$.
		\item \textit{The enumeration scheme $f_{hv}$ is horizontally ordered.}
	\end{enumerate}
	To prove S1, first consider the impact of horizontal ordering on the horizontal contributions to $C^p(f)$ in Equation \ref{eqn:psum}.
	
	\begin{claim}
		The horizontal contributions to $C^p(f_h)$ are less than or equal to the horizontal contributions to $C^p(f)$.

	\begin{claimproof}
		(See Figure \ref{fig:inequalities}.) Suppose $f$ is not horizontally ordered, and so there is a row of the lattice consisting of vertices $\alpha_0 , \alpha_1 , \dots, \alpha_{N-1}$ such that $f(\alpha_i) > f(\alpha_{i+1})$ for some $0 < i \leq N-1$. The horizontal contributions to $C^p(f)$ from this row are
		\begin{align}
			\label{eqn:contrib1} \vert f(\alpha_0) - f(\alpha_1)\vert^p  + \dots &+ \vert f(\alpha_{i-1}) - f(\alpha_{i})\vert^p \\ \nonumber &+ \vert f(\alpha_i) - f(\alpha_{i+1}) \vert^p \\ \nonumber  
			+ \dots &+ \vert f(\alpha_{N-2}) - f(\alpha_{N-1}) \vert^p\, . 
		\end{align}
		Consider the enumeration scheme $\tilde{f}$ that is identical to $f$ save that $\tilde{f}(\alpha_{i}) = f(\alpha_{i+1})$ and $\tilde{f}(\alpha_{i+1}) = f(\alpha_i)$, i.e. $\tilde{f}$ swaps the labels on vertices $\alpha_i$ and $\alpha_{i+1}$ so that $\tilde{f}(\alpha_i) < \tilde{f}(\alpha_{i+1})$. Thus, $\tilde{f}$ is at least `partially' horizontally ordered, moreso than $f$. The horizontal contributions to $C^p(\tilde{f})$ from this row are:
		\begin{align}
			\label{eqn:contrib2}\vert \tilde{f}(\alpha_0) - \tilde{f}(\alpha_1)\vert^p  + \dots &+ \vert \tilde{f}(\alpha_{i-1}) - \tilde{f}(\alpha_{i+1})\vert^p \\ \nonumber &+ \vert \tilde{f}(\alpha_{i+1}) - \tilde{f}(\alpha_{i}) \vert^p \\ \nonumber  
			+ \dots &+ \vert \tilde{f}(\alpha_{N-2}) - \tilde{f}(\alpha_{N-1}) \vert^p\, . 
		\end{align}
		The difference between the horizontal contributions to $C^p(f)$ and to $C^p(\tilde{f})$ is thus
		\begin{align}
			&\vert \tilde{f}(\alpha_{i+1}) - \tilde{f}(\alpha_i) \vert^p +  \vert \tilde{f}(\alpha_{i-1}) - \tilde{f}(\alpha_{i+1}) \vert^p \\ &\qquad  - \vert f(\alpha_{i}) - f(\alpha_{i+1}) \vert^p - \vert f(\alpha_{i-1}) - f(\alpha_i) \vert^p \nonumber \\
			&=  \vert \tilde{f}(\alpha_{i-1}) - \tilde{f}(\alpha_{i+1}) \vert^p - \vert f(\alpha_{i-1}) - f(\alpha_i) \vert^p \\
			&= \vert f(\alpha_{i-1}) - f(\alpha_i) \vert^p - \vert f(\alpha_{i-1}) - f(\alpha_i) \vert^p \\
			&<0\, , \nonumber
		\end{align}
		since $p \geq 1$ and $f(\alpha_{i-1}) < f(\alpha_{i+1}) < f(\alpha_i)$.
		
		To construct $f_h$ is to perform as many swaps of the above form (transforming $f$ into $\tilde{f}$) as is necessary in order to label all $N$ vertices in each row of the lattice in ascending order. Thus, the horizontal contributions to $C^p(f_h)$ are less than the horizontal contributions to $C^p(f)$, with equality if and only if $f=f_h$.
	\end{claimproof}
		\end{claim}
	
	To finish proving S1, consider the impact of horizontal ordering on the vertical contributions to $C^p(f)$ in Equation \ref{eqn:psum}.
	
	\begin{claim} The vertical contributions to $C^p(f_h)$ are less than or equal to the vertical contributions to $C^p(f)$.

	\begin{claimproof} (See Figure \ref{fig:propproof}.) Suppose that $f$ is not horizontally ordered, and that we are in the process of constructing $f_h$ from $f$ by re-ordering vertex labels row-by-row. Then there is a row of the lattice consisting of vertices $\beta_0, \beta_1, \dots, \beta_{N-1}$ such that $f(\beta_i)>f(\beta_{i+1})$ for some $i$. Let $\alpha_i$ and $\alpha_{i+1}$ be the vertices directly above vertices $\beta_i$ and $\beta_{i+1}$, respectively; thus $f(\alpha_i) < f(\alpha_{i+1})$.
		
		
		The vertical contributions to $C^p(f)$ from these four vertices is
		\begin{align}
			\vert f(\beta_i) - f(\alpha_i) \vert^p + \vert f(\beta_{i+1}) - f(\alpha_{i+1}) \vert^p\, . \label{eqn:contrib3}
		\end{align}
		Because $f(\alpha_i) < f(\alpha_{i+1})$ and $f(\beta_i) > f(\beta_{i+1})$, the values $|f(\alpha_i) - f(\beta_i)|^p$ and $|f(\alpha_{i+1}) - f(\beta_{i+1})|^p$ in Equation \ref{eqn:contrib3} are unequal, with the latter being less than the former.
		
		Consider the enumeration scheme $\tilde{f}$ that arises by swapping the labels of vertices $\beta_i$ and $\beta_{i+1}$, so $\tilde{f}(\beta_i) = f(\beta_{i+1})$ and $\tilde{f}(\beta_{i+1}) = f(\beta_i)$. The vertical contributions to $C^p(\tilde{f})$ from the four vertices is
		\begin{align}
			\vert f(\alpha_i) - f(\beta_{i+1}) \vert^p + \vert f(\alpha_{i+1}) - f(\beta_i) \vert ^p \, . \label{eqn:contrib4}
		\end{align}
		Note that for $p > 1$, if $a+b$ is a constant value then $a^p+b^p$ reduces in value the closer in value $a$ is to $b$. Therefore, the vertical contributions to $C^p(\tilde{f})$ in Equation \ref{eqn:contrib4} are less than the vertical contributions to $C^p(f)$ in Equation \ref{eqn:contrib3}. If $p=1$, the contributions are equal. This proves the claim, and furthermore S1. \end{claimproof}	\end{claim}
	
	To prove S2, consider the horizontally ordered enumeration scheme $f_h$, as in Figure \ref{fig:claimproof}.
	Suppose that $f_h$ is not vertically ordered, and let $f_{hv}$ be the vertical ordering of $f_h$.  It remains to show that $f_{hv}$ is horizontally ordered. Suppose that it is not horizontally ordered: thus, there exists $j \in \{0,\dots,N-1\}$ such that the $j$th row of the lattice contains vertices $\alpha_j$, $\beta_j$, with $\beta_j$ to the right of $\alpha_j$ and $f_{hv}(\alpha_j) > f_{hv} (\beta_j)$.
	
	Denote vertices in the column with $\beta_j$ by $\beta_0,\dots,\beta_{N-1}$, and similarly define the vertices  $\alpha_0 ,\dots, \alpha_{N-1}$ as those sharing the column of $\alpha_j$. Because $f_{hv}$ is vertically ordered, then $f_{hv}(\alpha_j) > f_{hv}(\alpha_k)$ and $f_{hv}(\beta_j) > f_{hv}(\beta_k)$ for all $0 \leq k \leq j-1$. As $f_{hv}(\alpha_j) > f_{hv}(\beta_j)$, then under $f_{hv}$, both the $j-1$ vertices above $\alpha_j$ and the $j-1$ vertices above $\beta_j$ have labels less than $f_{hv}(\beta_j)$. Since all that separates $f_h$ and $f_{hv}$ is a permutation of vertex labels that fixes the labels within columns of the lattice, then there should be exactly $j-1$ vertices with labels less than $f_{hv}(\beta_j)$ in both the $\alpha_0,\dots,\alpha_{N-1}$ and $\beta_0,\dots,\beta_{N-1}$ columns of the lattice under enumeration scheme $f_h$ as well.
	
	Because $f_h$ is horizontally ordered, $f_h(\alpha_j) < f_h(\beta_j)$. Therefore at least one of $\alpha_j$ and $\beta_j$ will have a different label under $f_{h}$ than under $f_{hv}$. Suppose, without loss of generality, that the vertex $\beta_j$ satisfies this: that $f_h(\beta_j) \neq f_{hv}(\beta_j)$ and thus there exists some $i \in \{0,\dots,N-1\}$ such that $f_h(\beta_i) = f_{hv}(\beta_j)$. Under $f_h$, there are $j-1$ vertices in the $\beta_0,\dots,\beta_{N-1}$ column with label less than $f_{hv}(\beta_j)$, as was the case under $f_{hv}$. However, consider the vertices in the $\alpha_0 , \dots , \alpha_{N-1}$ column. Each vertex $\alpha_k$ to the left of a vertex $\beta_k$ with $f_h(\beta_k) < f_{hv}(\beta_j)$ also satisfies $f_h(\alpha_k) < f_{hv}(\beta_j)$ due to horizontal ordering; there are $j-1$ of these vertices. Distinct from these vertices, there is also the vertex $\alpha_i$ to the left of $\beta_i$ which must also have $f_{h}(\alpha_i) < f_{h}(\beta_i) = f_{hv}(\beta_j)$. This is the contradiction.
	
	This proves S2 and hence Proposition \ref{prop:order}.
\end{proof}

Section \ref{sec:pg1} discusses the case $p >1$. Henceforth, this proof of Theorem \ref{thm:physical} only concerns the case $p=1$.

Let $G=(V,E)$ be the $N${}$\times${}$N$ square lattice. As a result of Proposition \ref{prop:order}, we need only consider horizontally and vertically ordered enumeration schemes in the search for the solution to the edgesum problem for $G$. Let $f_{hv}$ be the horizontal and vertical ordering of a vertex enumeration scheme $f$ for $G$. Let $\alpha_{i,j}$ denote the vertex of $G$ in the $i$th row and $j$th column, using matrix index notation so that the top--left vertex is $\alpha_{1,1}$. Then the edgesum of $f_{hv}$ is 
\begin{align}
	C^1(f_{hv}) &= \sum_{i,j=1}^{N-1} \big( f_{hv}(\alpha_{i+1,j}) - f_{hv}(\alpha_{i,j}) + f_{hv}(\alpha_{i,j+1}) - f_{hv}(\alpha_{i,j}) \big) \\
	&= 2f_{hv}(\alpha_{N,N}) + \sum_{i=2}^{N-1} f_{hv}(\alpha_{i,N})+ \sum_{j=2}^{N-1} f_{hv}(\alpha_{N,j}) - \sum_{i=2}^{N-1}f_{hv}(\alpha_{i,1})\\ \nonumber & \quad \quad \quad  - \sum_{j=2}^{N-1} f_{hv}(\alpha_{1,j}) - 2f_{hv}(\alpha_{1,1})\\
	&= \sum_{\alpha  \in V_\mathrm{b}} f_{hv}(\alpha) +\sum_{\alpha \in V_\mathrm{r}} f_{hv}(\alpha)   \label{eqn:lrtb}  - \bigg(\sum_{\alpha \in V_\mathrm{l}} f_{hv}(\alpha) + \sum_{\alpha \in V_\mathrm{t}} f_{hv}(\alpha) \bigg) \, ,  
\end{align}
where $V_\mathrm{l}$, $V_\mathrm{r}$, $V_\mathrm{t}$, $V_\mathrm{b}$ are the vertices in the left column, right column, top row and bottom row of the square lattice, respectively.
This result greatly simplifies the calculation of $C^1(f_{hv})$ for the square lattice.

Due to horizontal and vertical ordering, $f_{hv}(\alpha_{1,1})=0$ and $f_{hv}(\alpha_{N,N})=N^2-1$. Without loss of generality, assume that $f_{hv}(\alpha_{N,1})<f_{hv}(\alpha_{1,N})$. As detailed in Figure \ref{fig:proofsummary}, define $U_{f}$ to be the set of vertices with labels between $f_{hv}(\alpha_{1,1})=0$ and $f_{hv}(\alpha_{N,1})$, and let $V_{f}$ be the set of vertices with labels between $f_{hv}(\alpha_{N,1})$ and $f_{hv}(\alpha_{N,N})=N^2-1$. Define $S(U_{f})$ and $S(V_{f})$ to be the sums of the labels of vertices on the boundary of the lattice in the regions $U_f$ and $V_f$, respectively, and define $S(\overline{U_f \cap V_f})$ to be the difference between the sum of labels of vertices in $\overline{U_f \cap V_f}$ in the bottom and top rows of the lattice. By Equation \ref{eqn:lrtb}, these quantities completely determine the edgesum:
\begin{align}
	C^1(f_{hv}) = S(V_f) + S(\overline{U_f \cap V_f}) - S(U_f)\, .
\end{align}

The task now is to identify an improved enumeration scheme $f'$ with the least edgesum of all schemes $g$ that have $U_g=U_f$ and $V_g=V_f$. That is, find an enumeration scheme $f'$ such that
\begin{align}
	f' &=  \argmin_{\substack{\text{enumerations } g \\ U_g=U_{f},V_g=V_{f}}}  C^1(g) = \argmin_{\substack{\text{enumerations } g \\ U_g=U_{f},V_g=V_{f}}} \bigg(S(V_g) + S(\overline{U_g \cap V_g}) - S(U_g) \bigg)\, . \label{eqn:f}
\end{align}
Note that if $f$ is such that $U_f$ and $V_f$ are the same regions as those yielded by a solution to the edgesum problem, then any $f'$ satisfying Equation \ref{eqn:f} is a solution to the edgesum problem. Lemma \ref{lem:maxUV} details how to construct an $f'$ given $f$; using this result, Lemmas \ref{lem:bestallUV}--\ref{lem:mitchcost} detail how to set $f$ such that this $f'$ is a solution to the edgesum problem, completing the proof.

\begin{lemma}\label{lem:maxUV}
	Let $G=(V,E)$ be the $N${}$\times${}$N$ square lattice, with $\alpha_{i,j} \in V$ denoting the vertex in the $i$th row and $j$th column. Let $f_{hv}$ be the horizontal and vertical ordering of an enumeration scheme $f$ for $G$, dividing the lattice into regions $U_f$, $V_f$ and $\overline{U_f \cap V_f}$ as defined above. Then, the steps below construct a horizontally and vertically ordered enumeration scheme $f'$ that satisfies Equation \ref{eqn:f}. Note that while $U_{f'}=U_f$, $S(U_{f'})$ may not be equal to $S(U_f)$ and so on.
	
	\begin{enumerate}
		\item 
		To maximise $S(U_{f'})$:\\
		\textbf{Rule A:} starting at the beginning of a row $\alpha_{1,j}$ (resp.\ column $\alpha_{i,1}$), the enumeration scheme $f'$ proceeds rightwards along the row (resp.\ downwards along the column) into the interior of $U_f$ as far as possible without violating horizontal and vertical ordering. \\
		\textbf{Rule B:} once the enumeration scheme $f'$ has proceeded along a row or column as far as successive applications of Rule A permits, it must begin afresh at the start of the next-topmost row or the next-rightmost column, whichever is longest. \\ 
		Mathematically: suppose that Rule A terminates at the vertex $\alpha_{i,j}$. Let $i'>i$ and $j'>j$ denote the indices of the topmost row and rightmost columns, respectively, that are \textit{yet to be enumerated}, i.e.\ that consist entirely of vertices with labels greater than $f'(\alpha_{i,j})$. Then, let the number of vertices that are both in the $(i')$th row and also contained within $U_f$ be $n_{r}(i')$, and similarly let $n_{c}(j')$ denote the number of vertices in the $(j')$th column that are also contained within $U_f$. The following recursive rule describes the process of maximising $S(U_{f'})$ with Rule A and Rule B:
		\begin{align}
			f'(\alpha_{i,j}) + 1 = \label{eqn:rowcol} \begin{cases}
				f'(\alpha_{i,j+1}) & \text{if } \alpha_{i,j+1}\in U_f  \text{ and}  f'(\alpha_{i-1},j) < f'(\alpha_{i,j})\, , \\
				f'(\alpha_{i+1,j}) & \text{if } \alpha_{i-1,j} \in U_f \text{ and}  f'(\alpha_{i,j-1}) < f'(\alpha_{i,j})\, , \\
				f'(\alpha_{1,j'}) & \text{else if } n_c(j') \geq n_r(i')\, , \\
				f'(\alpha_{i',1}) & \text{else } n_r(i') > n_c(j')\, .
			\end{cases} 
		\end{align}
		\item To minimise $S(\overline{U_{f'} \cap V_{f'}})$: have the enumeration scheme $f'$ begin at the topmost vertex of each column in $\overline{U_f \cap V_f}$ and fill down to the lowest vertex in that column of $\overline{U_f \cap V_f}$.
		\item To minimise $S(V_{f'})$: perform the inverse procedure for maximising $S(U_{f'})$.
	\end{enumerate}
\end{lemma}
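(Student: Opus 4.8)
The plan is to use the fact that, once the vertex sets $U_f$ and $V_f$ are frozen, everything decouples. First I would record the structural observation that, because $f_{hv}$ is horizontally and vertically ordered, the sublevel set $U_f=\{\alpha: f_{hv}(\alpha)\le f_{hv}(\alpha_{N,1})\}$ is a staircase (Young-diagram-shaped) region anchored at the top-left corner and containing the entire left column, while $V_f$ is the complementary region containing the entire bottom row and right column, with $U_f\cap V_f=\{\alpha_{N,1}\}$. Any competitor $g$ with $U_g=U_f$, $V_g=V_f$ is then precisely an order-preserving labeling of the poset $U_f$ by $\{0,\dots,|U_f|-1\}$ together with an order-preserving labeling of the poset $V_f$ by $\{|U_f|-1,\dots,N^2-1\}$, the two agreeing only at the forced values $g(\alpha_{1,1})=0$, $g(\alpha_{N,1})=|U_f|-1$, $g(\alpha_{N,N})=N^2-1$. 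Using \ref{eqn:lrtb} I would split $C^1(g)=S(V_g)+S(\overline{U_g\cap V_g})-S(U_g)$ into a piece supported on the left column (which enters with a minus sign, hence is made small by \emph{maximizing} $S(U_g)$), a piece supported on the right column and bottom row (made small by \emph{minimizing} $S(V_g)$), and a piece supported on the top row (the $S(\overline{U_g\cap V_g})$ term, to be made as favourable as possible). Since the left column lies in $U_f$ and the right column and bottom row lie in $V_f$, these are optimized within the two independent labeling problems, and the only interaction is the top row, split between the two regions, which step~2 of the lemma handles column by column. Thus the lemma reduces to three sub-claims, one per step.

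The combinatorial core is step~1: among order-preserving labelings of the staircase $U_f$ by $\{0,\dots,|U_f|-1\}$ with $\alpha_{N,1}\mapsto|U_f|-1$, the scheme generated by Rules~A and~B — equivalently by the recursion \ref{eqn:rowcol} — maximizes $S(U_{f'})$, the total label over the lattice-boundary vertices inside $U_f$. I would prove this by an exchange argument. Maximizing that total is the same as \emph{minimizing} the total label over the lattice-interior vertices of $U_f$, so it suffices to show that if some labeling assigns a label $t$ to a lattice-boundary vertex of $U_f$ while an interior vertex of $U_f$ all of whose predecessors already carry labels below $t$ is still unlabeled, then transposing $t$ with the label of such an interior vertex along a saturated chain produces a valid labeling with strictly smaller interior-sum; iterating forces ``interior first whenever the partial order allows'', which is exactly Rule~A. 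The residual freedom is the choice made when Rule~A stalls and one must restart at the head of an as-yet-unlabeled boundary row or column; there a second exchange step, set up as an induction on $|U_f|$ (peel off the last boundary segment enumerated and recurse on the smaller staircase), shows that restarting at the \emph{longer} of the two candidate segments — the branch $n_c(j')\ge n_r(i')$ of \ref{eqn:rowcol} — never increases the objective. Along the way I would check that the four cases of \ref{eqn:rowcol} are exhaustive and mutually exclusive and that the resulting scheme is horizontally and vertically ordered, which is immediate since each case only proceeds rightward, downward, or to a fresh row or column lying entirely beyond everything enumerated so far.

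Step~3 is step~1 read through the involution that reverses labels ($t\mapsto N^2-1-t$) and rotates the lattice by $180^\circ$: this swaps ``$\le$'' with ``$\ge$'', carries $V_f$ to a staircase of the same shape, and turns ``maximize the boundary sum'' into ``minimize $S(V_{f'})$'', so no new argument is needed. Step~2 is a one-line rearrangement: once steps~1 and~3 fix which labels land in each column of $\overline{U_f\cap V_f}$, that label set is an interval, and assigning it in increasing order down the column simultaneously places the smallest available labels on the bottom-row vertices and the largest on the top-row vertices, which is exactly what minimizes $S(\overline{U_{f'}\cap V_{f'}})$ — the bottom-row minus top-row label sum over that region — without disturbing steps~1 and~3. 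Assembling the three pieces, the constructed $f'$ is horizontally and vertically ordered, satisfies $U_{f'}=U_f$ and $V_{f'}=V_f$, and attains the minimum of $S(V_g)+S(\overline{U_g\cap V_g})-S(U_g)$ over all such $g$, i.e.\ \ref{eqn:f}. The step I expect to be the main obstacle is the justification of Rule~B: the first exchange step is routine, but showing that preferring the longer remaining boundary segment is \emph{globally} optimal rather than merely locally plausible is where the induction must be arranged with care, handling ties $n_c(j')=n_r(i')$ and the bookkeeping of which rows and columns still count as ``yet to be enumerated''.
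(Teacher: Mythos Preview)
Your strategy is the paper's: decompose into three independent optimizations, argue that Rules~A and~B push the small labels onto interior vertices of $U_f$ so that the lattice-boundary vertices receive the largest labels, handle $V_f$ by the label-reversal/$180^\circ$ symmetry, and fill the middle region column by column. The paper's own proof is considerably sketchier than yours---it essentially asserts that the rules achieve the interior-first labeling and refers to a figure---so your exchange argument for Rule~A and the induction on $|U_f|$ for Rule~B actually supply justification the paper leaves implicit. One small correction to your setup: $U_f$ and $V_f$ are disjoint rather than complementary (the paper's notation $\overline{U_f\cap V_f}$ is unfortunate; it denotes the genuine middle region $V\setminus(U_f\cup V_f)$, with $V_f$ the superlevel set from $\alpha_{1,N}$), so your opening structural paragraph needs minor repair, though the three-step decomposition and the rest of the argument are unaffected.
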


\begin{proof}
	Begin by noting that maximising $S(U_{f'})$, minimising $S(\overline{U_{f'} \cap V_{f'}})$ and minimising $S(V_{f'})$ are three independent tasks, thus the Lemma's sequential approach is conceptually valid. Next, with $f'(\alpha_{1,1}) = 0$, it must be the case that either $f'(\alpha_{1,2})=1$ or $f'(\alpha_{2,1}) = 1$. Without loss of generality, take $f'(\alpha_{1,2})=1$.
	
	\begin{enumerate}
		\item To maximise $S(U_{f'})$: Rule A and Rule B, as described in Equation \ref{eqn:rowcol}, work in tandem to label vertices the interior of $U_{f}$ with the lowest indices possible. As illustrated in Figure \ref{fig:uvlem2}, this ensures that the vertices on the boundary of the lattice, i.e.\ the vertices contributing to $S(U_{f'})$, have as high an index as possible. 
		\item To minimise $S(\overline{U_{f'} \cap V_{f'}})$, label the vertices in the columns of $\overline{U_f \cap V_f}$ in ascending order from top-to-bottom. This places labels with the least possible value in the bottom row of the lattice, and labels with the greatest possible value in the top row of the lattice, while preserving horizontal and vertical ordering.
		\item As in step 1, a symmetric argument applies to minimising $S(V_{f'})$.\qedhere
	\end{enumerate} 
\end{proof}

By the definition in Equation \ref{eqn:f}, there must exist a particular shape for regions $U_f$ and $V_f$ such that the $f'$ described in Lemma \ref{lem:maxUV} is a solution to the edgesum problem. There is an $x${}$\times${}$x$ square region of vertices in the top-left region of $U_f$, where $x$ is the greatest integer for which the statement ``the topmost $x$ rows of $U_{f}$ contain at least $x$ vertices" is true. We will call this region the \textit{largest square of $U_{f}$}; define the \textit{largest square of $V_f$} similarly.

\begin{figure}
	\centering
	\includegraphics[width=\linewidth]{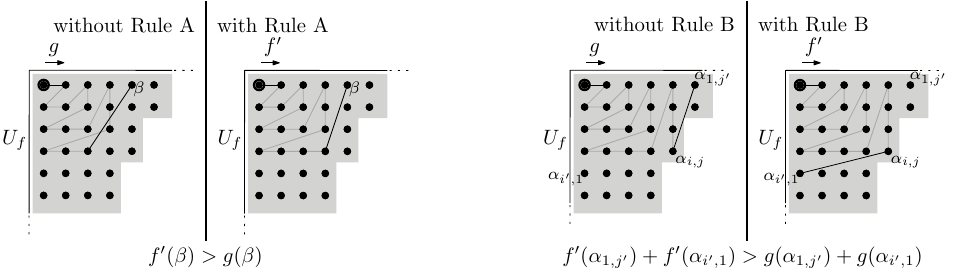}
	\caption{Rules in Lemma \ref{lem:maxUV} for constructing the enumeration scheme $f'$ so that it has the least edgesum of all enumeration schemes $g$ with $U_g=U_f$ and $V_g=V_f$.}
	\label{fig:uvlem2}
\end{figure}

\begin{figure}
	\centering
	\includegraphics[width=\linewidth]{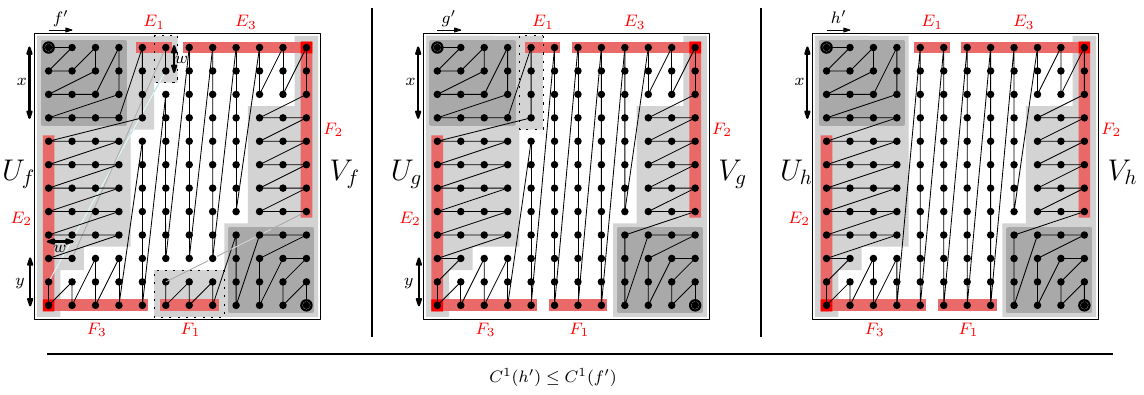}
	\caption{Illustration of Lemma \ref{lem:bestallUV}: starting with an enumeration scheme $f'$, intermediate scheme $g'$, and final scheme $h'$ demonstrating shapes of $U_h$ and $V_h$ that allow for a lower edgesum.}
	\label{fig:uvclaim2}
\end{figure}

\begin{lemma}\label{lem:bestallUV}
	Let $G=(V,E)$ be the $N${}$\times${}$N$ square lattice, and let $f$ be a horizontally and vertically ordered enumeration scheme for $G$, dividing the lattice into regions $U_f$, $V_f$ and $\overline{U_f \cap V_f}$ as defined above. Let $f'$ be the enumeration scheme resulting from applying Lemma \ref{lem:maxUV} to $f$.
	
	Now, consider an enumeration scheme $h$ that divides the lattice into regions $U_h$, $V_h$ and $\overline{U_h \cap V_h}$ where $U_h \subseteq U_f$ consists of all the vertices in $U_f$ except for those to the right of the largest square of $U_f$, and where $V_h \subseteq V_f$ consists of all the vertices in $V_f$ except for those to the left of the largest square of $V_f$. The vertices in $U_f\backslash U_h \cup V_f \backslash V_h$ are thus in $\overline{U_h \cap V_h}$. Let $h'$ be the scheme resulting from applying Lemma \ref{lem:maxUV} to $h$. Then, $C^1(h') \leq C^1(f')$.
\end{lemma}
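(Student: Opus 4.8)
The plan is to reshape $U_f$ and $V_f$ into their ``trimmed'' versions $U_h,V_h$ one block at a time and to show that re-applying Lemma \ref{lem:maxUV} after each reshaping never increases the edgesum; the scheme $g'$ in Figure \ref{fig:uvclaim2} is the halfway point. Concretely, I would first handle the $U$-side: let $g$ be the scheme with $U_g=U_h$ (that is, $U_f$ with all vertices to the right of its largest square deleted) and $V_g=V_f$, and let $g'$ be the result of applying Lemma \ref{lem:maxUV} to $g$. I would prove $C^1(g')\le C^1(f')$, and then obtain $C^1(h')\le C^1(g')$ by an entirely analogous argument with ``top-left'' and ``bottom-right'' (i.e.\ the roles of $U$ and $V$) interchanged. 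Chaining the two inequalities gives the Lemma.

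For the step $C^1(g')\le C^1(f')$ the decisive observation is that the labels handed to $V$ are untouched: since $|U_g|+|\overline{U_g\cap V_g}|=|U_f|+|\overline{U_f\cap V_f}|$ and $V_g=V_f$, the block of largest labels assigned to $V$ is the same for $f'$ and $g'$, and by step~3 of Lemma \ref{lem:maxUV} it is assigned in the same way, so $S(V_{f'})=S(V_{g'})$. Writing $Z=U_f\cup\overline{U_f\cap V_f}$ (which equals $U_g\cup\overline{U_g\cap V_g}$) and restricting Equation \ref{eqn:lrtb} to $Z$, the difference $C^1(g')-C^1(f')$ becomes $\sum_{z\in Z}c(z)\bigl(g'(z)-f'(z)\bigr)$, where $c(z)$ records the signed multiplicity with which $z$ occurs in Equation \ref{eqn:lrtb} ($-1$ on the top row and left column, $+1$ on the bottom row and right column, $0$ in the interior, with the corner of $Z$ counted twice); note that $\sum_{z\in Z}\bigl(g'(z)-f'(z)\bigr)=0$, since both schemes label $Z$ by $\{0,\dots,|Z|-1\}$. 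I would then peel the vertices $W=U_f\setminus U_h$ off the outer corner of the staircase one at a time, so that at each step exactly one vertex $w$ migrates from the $U$-block (which carries the lowest labels) into the middle block.

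The heart of the matter is the single-peel estimate. Two facts carry most of the load. First, every vertex of the right column of the lattice has, by vertical ordering, label $\ge f_{hv}(\alpha_{1,N})>f_{hv}(\alpha_{N,1})$, so no vertex of $U_f$ — in particular none of $W$ — lies in the right column; hence the migrating vertex $w$ lies on the top row or in the interior, $c(w)\le 0$, and because $w$'s label cannot decrease when it leaves the lowest block for the middle block, its own contribution $c(w)\bigl(g'(w)-f'(w)\bigr)$ is $\le 0$ — this is the favourable term, largest when $w$ is a top-row vertex far to the right. Second, among the remaining vertices the unfavourable shifts (left-column and leftmost top-row vertices of $U_h$, whose labels drop; bottom-row middle vertices, whose labels rise) must be dominated by the favourable ones (top-row middle vertices, whose labels rise; bottom-row $U$-vertices, whose labels drop), and one must verify this delicate comparison of the two Lemma-\ref{lem:maxUV} outputs comes out nonpositive. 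Establishing that balance is the step I expect to be the main obstacle: it is exactly where the definition of the \emph{largest square} of $U_f$ is essential, since that square is the maximal sub-block of $U_f$ that is forced to stay in $U$, so deleting $W$ redistributes freed-up low labels only among \emph{interior} vertices of $U_h$ (columns past the square contain no boundary vertices of the relevant sign to absorb them), while the top-row labels it releases flow — via Rules~A and~B of Lemma \ref{lem:maxUV} — into the middle region, where they are counted with a favourable sign. Everything else is bookkeeping with Equation \ref{eqn:lrtb} and the explicit recursion of Lemma \ref{lem:maxUV}.
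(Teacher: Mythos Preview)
Your high–level plan is the same as the paper's: trim $U_f$ down to the largest square on one side, trim $V_f$ symmetrically on the other, re-apply Lemma~\ref{lem:maxUV} after each trim, and use the boundary formula~\eqref{eqn:lrtb} to compare edgesums. Your observation that fixing $V_g=V_f$ leaves the labels in $V$ untouched is correct and is exactly what lets you localise the comparison to $Z=U_f\cup\overline{U_f\cap V_f}$.

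Where your argument stops is the step you yourself flag as the obstacle: the ``delicate comparison'' showing that, after a single peel and re-application of Lemma~\ref{lem:maxUV}, the signed boundary sum does not increase. That inequality is the entire content of the lemma, and you have not supplied it. The paper's device for this step is to peel \emph{whole columns} of $U_f$ rather than single vertices. If the current rightmost column of $U_f$ contains $w$ vertices and $y$ denotes the number of rows of $U_f$ with at most $w$ vertices, then removing that column lowers the left-boundary contribution (your $E_2$-type terms) by exactly $wy$, while the top-row vertex formerly heading that column jumps into the middle block and gains at least $xy$ in label, where $x$ is the side of the largest square. The whole comparison collapses to $(x-w)y\ge 0$, and this is immediate because $w\le x$: a column of $U_f$ to the right of the largest square cannot have more than $x$ vertices, or the square would not be largest. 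This is precisely where the ``largest square'' hypothesis bites, as you suspected, but the column granularity is what turns your qualitative remark into a one-line inequality. A vertex-by-vertex peel forces you to track how Rules~A and~B redistribute \emph{all} the low labels inside $U_f$ after each removal, and there is no clean monotone quantity to exhibit at that granularity.

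One smaller difference: the paper orders the two halves the other way round inside the claim, first shrinking $V_f\to V_h$ (which increases the $E_3$-type top-row sum because the enlarged middle block is filled top-to-bottom) and only then deleting columns from $U_f$; your order works too, but the paper's order cleanly separates the $E_3$ gain from the $(E_1+E_2)$ bookkeeping.
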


\begin{proof}
	As in Figure \ref{fig:uvclaim2}, define $E_1$ to be the set containing the topmost vertices in $U_f$ that are to the right of the largest square in $U_f$, $E_2$ to be the set containing the leftmost vertices in $U_f$ beneath the largest square of $U_f$, and $E_3$ to be set containing the topmost vertices of the lattice that are outside of $U_f$. Define $F_1$, $F_2$ and $F_3$ analogously.
	
	While $E_1,\dots,F_3$ refer to fixed sets of vertices defined by the enumeration scheme $f$, let $E_1(g), \dots F_3(g)$ refer to the sums of the labels of those same vertex sets under any enumeration scheme $g$. Using the shorthand $(E_1+E_2)(g)=E_1(g)+E_2(g)$, we have via Equation \ref{eqn:lrtb},
	\begin{align}
		\label{eqn:efedgesum} C^1(f') &= \big((F_1+F_2+F_3) - (E_1+E_2+E_3)\big)(f')\\ \nonumber &\qquad  + \text{labels under $f'$ of boundary vertices}
		\, \text{of largest squares of $U_f$ and $V_f$\, ,}
	\end{align}
	where $f'$ is the enumeration scheme that arises from applying Lemma \ref{lem:maxUV} to $f$.

	\begin{claim}
		With the description of $h$ and $h'$ from  Lemma \ref{lem:bestallUV},
		\begin{equation}
			(E_1+E_2+E_3)(h') \geq (E_1+E_2+E_3)(f')\, .
		\end{equation}

	\begin{claimproof}
		Consider an enumeration scheme $g$ that has $U_g=U_f$ and $V_g=V_h$, and let $g'$ be the enumeration scheme that arises from applying Lemma \ref{lem:maxUV} to $g$. Then, $E_1(g')=E_1(f')$ and $E_2(g') = E_2(f')$ as the vertex labels are unchanged. However, $E_3(g')>E_3(f')$: this is because there are more vertices in $\overline{U_g \cap V_g}$ than in $\overline{U_f \cap V_f}$.
		
		Next, redefine $g$ such that $V_g = V_h$ and $U_g$ is equal to $U_f$ with its rightmost column removed, as in Figure \ref{fig:uvclaim2}. Define $g'$ to be the scheme arising from applying Lemma \ref{lem:maxUV} to $g$.
		
		Let $x$ be the side length of the largest square in $U_f$, let $w$ be the number of vertices in the rightmost column of $U_f$ and let $y$ be the number of rows in $U_f$ with no more than $w$ vertices.
		
		The labels of the bottom-most $y$ vertices in $E_2$ under $g'$ are each $w$ less than those same vertices under $f'$, and so $E_2(g') = E_2(f') - wy$. Meanwhile, $E_1(g')>E_1(f')$ because of the new label for the topmost vertex of column that was removed from $U_f$ to make $U_g$. The label of this vertex increases by at least $xy$, and so $(E_1+E_2)(g') \geq (E_1+E_2)(f')+xy-wy \geq (E_1+E_2)(f')$, as $xy-wy=(x-w)y \geq 0$ because $x\geq w$ (with equality iff.\ $U_g=U_f$).
		
		Repeating the process of deleting the rightmost column of $U_f$ will continue to increase $(E_1+E_2+E_3)(g')$. Note that $E_3$ does not change during the course of this procedure, which terminates when $g=h$, giving
		\begin{equation}
			(E_1+E_2+E_3)(h') \geq (E_1+E_2+E_3)(f')\, .
		\end{equation} 
	\end{claimproof}
		\end{claim}
	
	By the symmetric structure of enumeration patterns that arise from Lemma \ref{lem:maxUV}, this claim also demonstrates that $(F_1+F_2+F_3)(h') \leq (E_1+E_2+E_3)(f')$. Therefore,
	\begin{align}
		 \big((F_1+F_2+F_3)-(E_1+E_2+E_3)\big)(h') \leq \big((F_1+F_2+F_3)-(E_1+E_2+E_3)\big)(f') \, ,
	\end{align}
	which, via Equation \ref{eqn:efedgesum}, implies $C^1(h') \leq C^1(f')$.
\end{proof}

At this stage, the only part of $U_h$ left to scrutinise is its bottom-left corner: Lemma \ref{lem:mitchison} details a shape for it to take (and vice-versa for the top-right corner of $V_h$) in order to produce a new enumeration scheme $\dot{h}'$ with $C^1(\dot{h}') \leq C^1(h')$.

\begin{figure}
	\centering
	\includegraphics[width=\linewidth]{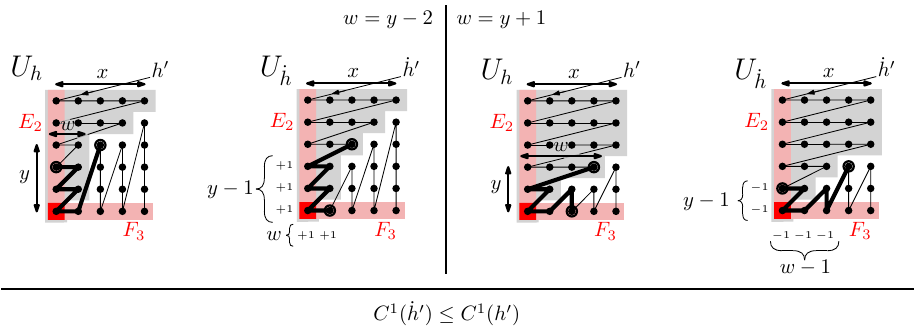}
	\caption{Finding an improvement $U_{\dot{h}}$ to the region $U_h$ in Lemma \ref{lem:mitchison}, and hence an improved enumeration scheme $\dot{h}'$ which has a lower edgesum than $h'$. As a visual aid, in both cases $w=y-2$ and $w=y+1$, the bolded vertices under scheme $h'$ share the same labels as the bolded vertices under $\dot{h}'$.}
	\label{fig:uvlem4i}
\end{figure}

\begin{lemma}\label{lem:mitchison}
	Let $G=(V,E)$ be the $N${}$\times${}$N$ square lattice, and let $h$ be a horizontally and vertically ordered enumeration scheme for $G$ arising from applying Lemma \ref{lem:bestallUV} to some enumeration scheme $f$. Let $h'$ be the enumeration scheme resulting from applying Lemma \ref{lem:bestallUV} to $h$.
	
	Let $x$ be the side length of the largest square in $U_h$. Then, consider a new enumeration scheme $\dot{h}$ such that $U_{\dot{h}}$ differs from $U_h$ by the following: modify $U_h$ by imposing a length of $x$ vertices on all rows down to a height $x$ above the bottom row, and then give the row at height $y<x$ a length of $y$ or $y-1$ for all $y=x-1,\dots,1$. Define $V_{\dot{h}}$ similarly.
	Apply Lemma \ref{lem:maxUV} to $\dot{h}$ to obtain $\dot{h}'$. Then, $C^1(\dot{h}') \leq C^1(h')$.
\end{lemma}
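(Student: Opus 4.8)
The plan is to mirror the argument of Lemma~\ref{lem:bestallUV}: I would express the passage from $h$ to $\dot h$ as a finite sequence of elementary reshaping moves on the bottom-left corner of $U_h$ (and, by symmetry, the top-right corner of $V_h$), each of which preserves horizontal and vertical ordering — so that Lemma~\ref{lem:maxUV} still applies — and each of which does not increase the edgesum of the optimally-labelled scheme. Recall that after Lemma~\ref{lem:bestallUV} the region $U_h$ is a lower set whose first $x$ rows have length exactly $x$ (the Durfee square), the remaining rows having weakly decreasing lengths at most $x$, and that by Equation~\ref{eqn:lrtb} the edgesum of the optimal scheme $h'$ equals $S(V_{h'})+S(\overline{U_{h'}\cap V_{h'}})-S(U_{h'})$, a signed sum of boundary-vertex labels. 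Since Lemma~\ref{lem:maxUV} fills the interior of $U$ with the smallest available labels, concentrating the interior mass of $U$ as low and as far left as the ordering permits is exactly what raises the labels of the boundary vertices of $U$ carrying the minus sign (those in the left column $V_\mathrm{l}$ and top row $V_\mathrm{t}$ of Equation~\ref{eqn:lrtb}) while lowering those of $\overline{U\cap V}$; hence the shape prescribed for $U_{\dot h}$ — a width-$x$ rectangle sitting on top of a triangular staircase of lengths $\approx x-1,x-2,\dots,1$ — is the extremal lower set for this subproblem, and likewise (reflected) for $V_{\dot h}$.

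Concretely I would proceed as follows. \textbf{Step 1.} Isolate the corner of $U_h$, namely the rows strictly below its maximal width-$x$ rectangular block, with lengths $a_1\ge a_2\ge\cdots<x$. \textbf{Step 2.} As in the Claim inside the proof of Lemma~\ref{lem:bestallUV}, apply a unit reshaping move that transfers the outermost vertex of the lowest row exceeding the target profile to the outermost position of the highest row deficient in it (or, when the counts force it, deletes it into $\overline{U\cap V}$, with the mirror move acting on $V_h$). Re-running Lemma~\ref{lem:maxUV}, the induced change in $S(\overline{U\cap V})-S(U)$ is non-positive by exactly the estimate $xy-wy=(x-w)y\ge 0$ used there: the boundary label that $U$ gives up is dominated by the label a formerly interior vertex picks up, because the relevant "height" quantity is at least the relevant "width" quantity, which is bounded by $x$. \textbf{Step 3.} Run the reflected procedure on $V_h$, invoking the symmetric structure of Lemma~\ref{lem:maxUV} to see that the plus-sign sums $S(V_\mathrm{b}),S(V_\mathrm{r})$ can only decrease. \textbf{Step 4.} Iterate Steps 2--3 until the corner of $U_h$ (resp.\ $V_h$) has the shape stated in the Lemma, the residual one-vertex freedom being precisely the choice of length $y$ versus $y-1$ at height $y$ that reconciles the final vertex count; composing all moves yields $C^1(\dot h')\le C^1(h')$.

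The main obstacle is \textbf{Step 2}: one must track how the optimal labelling of Lemma~\ref{lem:maxUV} reshuffles when a single vertex moves between adjacent staircase rows, and check the sign of the net change across all four families of signed boundary contributions of Equation~\ref{eqn:lrtb}. This is the content of the two cases depicted in Figure~\ref{fig:uvlem4i} (according to whether the affected vertex lies above or below the dividing height $x$), and the bookkeeping is delicate because removing one vertex from $U$ can propagate through Rule~B of Lemma~\ref{lem:maxUV} and change which row the enumeration restarts on, so the comparison is not purely local. A secondary point to verify is that the constraints inherited from Lemma~\ref{lem:bestallUV} (no overhang to the right of the Durfee square, weakly decreasing rows) guarantee the target staircase — with its $y$/$y{-}1$ slack — is always reachable by such moves without leaving the class of horizontally- and vertically-ordered lower sets, so that the construction never stalls. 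Once these are in place, the closed-form evaluation of $C^1(\dot h')$ and the optimisation over $x$ (Lemma~\ref{lem:mitchcost}, not reproduced here) pin down the Mitchison--Durbin pattern $f_\mathrm{M}$ and complete the proof of Theorem~\ref{thm:physical}.
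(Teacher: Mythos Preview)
Your plan---iteratively reshape $U_h$ toward the staircase profile, re-apply Lemma~\ref{lem:maxUV} after each step, and verify via Equation~\ref{eqn:lrtb} that the signed boundary sum does not increase---matches the paper's strategy. The execution differs in two respects worth flagging. First, the paper does not transfer vertices between rows but adjusts one row's width at a time, allowing $|U|$ to change: for the row at height $y$ from the bottom with current width $w$, it sets the new width to $y-1$ when $w<y-1$ (adding $k=y-1-w$ vertices) and to $y$ when $w>y$ (removing $k=w-y$ vertices), and computes the change in $(F_3-E_2)$ explicitly as $k\bigl(w-(y-1)\bigr)<0$ and $-k(w-y)<0$ in the two cases. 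These are \emph{not} the estimate $(x-w)y\ge 0$ from Lemma~\ref{lem:bestallUV} that you invoke---that estimate concerned removing columns to the right of the Durfee square, a geometrically different move---and your ``height dominates width'' heuristic covers only one of the two sign patterns here. Second, your worry about Rule~B cascading turns out to be unnecessary: after Lemma~\ref{lem:bestallUV} the region $U_h$ lies entirely in the first $x$ columns, and the Lemma~\ref{lem:maxUV} filling of the tail is rigid enough that inserting or deleting $k$ vertices in row $y$ shifts the labels of exactly the $y-1$ rows beneath it in $E_2$ (and a correspondingly sized block of $F_3$) by $k$ each, so the comparison in Figure~\ref{fig:uvlem4i} is genuinely local and the $y$-versus-$(y{-}1)$ slack drops out of the two cases automatically rather than needing a separate reachability argument.
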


\begin{proof}
	Using Equation \ref{eqn:efedgesum},
	\begin{align}
		C^1(\dot{h}') -C^1({h}') &= \big((F_3-E_2) + (F_2-E_3)\big)(\dot{h}')  -  \big((F_3-E_2) + (F_2-E_3)\big)({h}') \\
		&= \underbrace{(F_3 - E_2)(\dot{h}') - (F_3-E_2)({h}')}_{\text{involves regions $U_h$ and $U_{\dot{h}}$ only}} \label{eqn:ubits}  + \underbrace{(F_2 - E_3)(\dot{h}') + (F_2 - E_3)({h}')}_{\text{involves regions $V_h$ and $V_{\dot{h}}$ only}}   
	\end{align}
	as the vertices in $E_1$ and $F_1$ have the same labels under $h'$ and $\dot{h}'$.
	
	First, consider the difference in edgesum due to the difference in shape between $U_{\dot{h}}$ and $U_h$, given in the first line of Equation \ref{eqn:ubits}. As in Figure \ref{fig:uvlem4i}, for $y=1,\dots,N-x$ take the $y$th row from the bottom of $U_h$, and let $w$ be the number of vertices it contains. The enumeration scheme $\dot{h}'$ only differs from $h'$ if $w<y-1$ or $w>y$.
	
	\textit{Case $w<y-1$}: Suppose $w=y-1-k$ for $k\geq1$ (Figure \ref{fig:uvlem4i} contains an example with $y=4$, $w=2$, $k=1$). Choose the number of vertices in the $y$th row from the bottom of $U_{\dot{h}}$ to be $y-1$. Then the labels of the $y-1$ bottom-most vertices in $E_2$ under $\dot{h}'$ are each greater by $k$ than the same vertices' labels under $h'$, i.e. $E_2(\dot{h}) - E_2(h') = k(y-1)$. Similarly, the labels of the $w$ leftmost vertices in $F_3$ under $\dot{h}'$ are each greater by $k$ than the same vertices' labels under $h'$, i.e. $F_3(\dot{h}') - F_3(h') = kw$. Thus, the top line of Equation \ref{eqn:ubits} is  equal to $k(w-(y-1)) < 0$.
	
	\textit{Case $w>y$}: Suppose $w=y+k$ for $k \geq 1$ (Figure \ref{fig:uvlem4i} contains an example with $y=3$, $w=4$, $k=1$). Choose the number of vertices in the $y$th row from the bottom of $U_{\dot{h}}$ to be $y$. Then the labels of the $y-1$ bottom-most vertices $E_2$ under $\dot{h}'$ each differ by $-k$ from the same vertices' labels under $h'$, i.e.\ $E_2(\dot{h}') - E_2(h') = -k(y-1)$. Similarly, the labels of the $w-1$ leftmost vertices in $F_3$ each differ by $-k$ from the same vertices' labels under $h'$, i.e.\ $F_3(\dot{h}') - F_3(h') = -k(w-1)$. Thus, the top line of Equation \ref{eqn:ubits} is equal to $-k((w-1)-(y-1))<0$.
	
	Note that, working within the successive restrictions of Lemmas \ref{lem:maxUV} and \ref{lem:bestallUV}, the maximum number of vertices permitted in any row of $U_{\dot{h}}$ is $x$. Thus, the rows increase in width from $y=1$ by one vertex at a time before reaching the maximum width of $y=x$ at $y=x$.
	
	By symmetry, the same rules work to construct the region $V_{\dot{h}}$ such that $C^1(\dot{h}') \leq C^1(h')$. \qedhere
\end{proof}

Finally, only one degree of freedom remains: the side lengths of the largest squares in $U_{\dot{h}}$ and $V_{\dot{h}}$. Note that these values only affect the top and bottom lines of Equation \ref{eqn:ubits}, respectively, and hence if there exists an optimal value $x$ for the width of $U_{\dot{h}}$, then it will also be the optimal value for the width of $V_{\dot{h}}$. Thus, we can take the side lengths of the largest squares in $U_{\dot{h}}$ and $V_{\dot{h}}$ to be the same.

Lemma \ref{lem:mitchcost} expresses the edgesum of the pattern as a function of $N$ and $x$. Corollary \ref{cor:optx} then provides the optimal value for $x$ and hence the Mitchison--Durbin pattern $f_\mathrm{M}$, a solution to the edgesum problem.

\begin{lemma}\label{lem:mitchcost}
	Let $G=(V,E)$ be the $N${}$\times${}$N$ square lattice, and let $\dot{h}$ be a horizontally and vertically ordered enumeration scheme that arises from applying Lemma \ref{lem:mitchison} to some horizontally and vertically ordered scheme $f$. 
	
	Let $x$ be the side length of the largest squares of $U_{\dot{h}}$ and $V_{\dot{h}}$, and let $\dot{h}'$ be the enumeration scheme that results from applying Lemma \ref{lem:maxUV} to $\dot{h}$. Then, the edgesum of $\dot{h}$ is
	\begin{align} \label{eqn:mitchcost2}
		C^1(\dot{h}') &= N^3-xN^2+2x^2N-\frac{2}{3}x^3+N^2-xN -2N+\frac{2}{3}x \, . 
	\end{align}
\end{lemma}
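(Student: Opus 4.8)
The plan is to evaluate $C^1(\dot h')$ directly from the simplified edgesum identity in Equation \ref{eqn:lrtb},
\begin{align}
C^1(\dot h') = \sum_{\alpha\in V_\mathrm{b}} \dot h'(\alpha) + \sum_{\alpha\in V_\mathrm{r}} \dot h'(\alpha) - \sum_{\alpha\in V_\mathrm{l}} \dot h'(\alpha) - \sum_{\alpha\in V_\mathrm{t}} \dot h'(\alpha)\, ,
\end{align}
so that the entire task reduces to computing four sums of boundary labels, once the shapes of $U_{\dot h}$, $V_{\dot h}$, $\overline{U_{\dot h}\cap V_{\dot h}}$ and the labelling rules of Lemma \ref{lem:maxUV} are made fully explicit.

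First I would pin down the precise shape of the regions produced by Lemmas \ref{lem:bestallUV} and \ref{lem:mitchison}: $U_{\dot h}$ occupies columns $1,\dots,x$ in full for the topmost $N-x$ rows and then tapers in a unit staircase, the row at height $y\le x-1$ above the bottom having width $y$ (I would confirm the "$y$" branch, rather than "$y-1$", minimises the top line of Equation \ref{eqn:ubits}); $V_{\dot h}$ is the point-reflected mirror image of $U_{\dot h}$ in the bottom-right corner; and $\overline{U_{\dot h}\cap V_{\dot h}}$ is the remaining band of full-height columns between them. This fixes $|U_{\dot h}|$ and hence the label $\dot h'(\alpha_{N,1})$, and by symmetry $|V_{\dot h}|$ and the cumulative offsets between the three regions.

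Next I would apply Rule A and Rule B of Lemma \ref{lem:maxUV} to read off the boundary labels. Inside $U_{\dot h}$ the interior columns are enumerated before the top row and left column, so the left-edge and top-edge lattice vertices receive the largest available labels; summing these yields $\sum_{V_\mathrm{l}\cap U_{\dot h}}\dot h' + \sum_{V_\mathrm{t}\cap U_{\dot h}}\dot h'$ as an explicit expression in $N$ and $x$, involving $\sum y$ and $\sum y^2$ over the staircase and therefore cubic in $x$. The middle band contributes $\sum_{V_\mathrm b}\dot h' - \sum_{V_\mathrm t}\dot h'$ over its columns, each filled top-to-bottom, an elementary arithmetic-progression sum. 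The region $V_{\dot h}$ contributes by the symmetric computation, with all labels shifted by $|U_{\dot h}| + |\overline{U_{\dot h}\cap V_{\dot h}}|$. Assembling the four boundary sums according to the displayed identity and simplifying with the standard closed forms for $\sum y$, $\sum y^2$, $\sum y^3$ should collapse to Equation \ref{eqn:mitchcost2}.

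I expect the main obstacle to be the bookkeeping of the staircase corner: keeping the $x\times x$ square, the width-$x$ rectangular part, and the tapering rows mutually consistent (the "$y$ versus $y-1$" choice, and exactly where the maximal width $x$ is first attained), and propagating the cumulative offsets between the three regions without an off-by-one slip — an error of order $x$ or $N$ in any sub-sum would corrupt the lower-order coefficients of the final polynomial. As a consistency check I would verify the formula in the degenerate regimes ($x$ near $1$ and near $N/2$), confirm it reduces to the $p=1$ case of Mitchison and Durbin's original count, and test it against small lattices ($N=6,7$) by direct enumeration. The optimal choice of $x$ minimising Equation \ref{eqn:mitchcost2}, and hence the explicit Mitchison--Durbin pattern $f_\mathrm{M}$ of Theorem \ref{thm:physical}, then follows in Corollary \ref{cor:optx} by elementary calculus on this cubic.
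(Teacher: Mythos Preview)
Your approach is correct but takes a genuinely different route from the paper's. The paper's proof (Section \ref{sec:fmcost}) partitions the $N\times N$ lattice into seven regions $A,B,\dots,G$ as in Figure \ref{fig:mitchcount}, uses the symmetry of $\dot h'$ to reduce to the identity $C^1(\dot h') = 2(A+B+C)+D+2(AB+AD+BC+BD+CD)$, and then computes each of these nine region-internal and interface contributions separately as small closed forms in $N$ and $x$ (Equations \ref{eqn:first}--\ref{eqn:last}), applying the boundary identity \ref{eqn:lrtb} only locally inside the triangular regions $A$ and $C$. You instead propose to apply Equation \ref{eqn:lrtb} once, globally, and compute the four full boundary-edge label sums directly. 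Both are valid and both terminate in the same cubic. The paper's modular decomposition buys easier sanity-checking of each piece and reuses the regional labels in the later auxiliary-qubit calculation of Section \ref{sec:auxiliary} (where the $AD$, $BD$, $CD$ interfaces reappear explicitly); your global approach is more economical in that it avoids the interface bookkeeping altogether, at the cost of having to track labels continuously around the entire boundary through the staircase corners, which is exactly the off-by-one hazard you flag.

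One caveat: your summary of how Rules A and B of Lemma \ref{lem:maxUV} fill $U_{\dot h}$ (``interior columns enumerated before the top row and left column'') is not quite what those rules do --- they snake along rows and columns alternately, and the resulting boundary labels (visible in Figure \ref{fig:asum}) are not simply the last $|$boundary$|$ integers. You would need to trace that filling more carefully than the proposal currently does to get the left- and top-edge sums right, and also confirm that both the ``$y$'' and ``$y-1$'' staircase widths in Lemma \ref{lem:mitchison} give identical boundary-label sums (they do, but this needs a sentence).
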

\begin{proof}
	See Section \ref{sec:fmcost}.
\end{proof}
\begin{corollary}\label{cor:optx}
	For $N\geq 5$, the value of $x$ that minimises $C^1(\dot{h'})$ in Equation \ref{eqn:mitchcost2} is $x=N-\frac{1}{2}\sqrt{2 N^2-2 N+\frac{4}{3}}$. Round $x$ to the nearest integer to obtain the minimum edgesum over all enumeration schemes for the $N${}$\times${}$N$ square lattice.
\end{corollary}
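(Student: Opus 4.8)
The plan is to treat the closed form for $C^1(\dot h')$ in Equation~\ref{eqn:mitchcost2} as a smooth function $g(x)$ of a \emph{real} variable $x$, minimise it by elementary calculus over the admissible range of square side-lengths (roughly $1 \le x \le N/2$, since both $U_{\dot h}$ and $V_{\dot h}$ must contain a square of side $x$ inside the $N \times N$ lattice), and then argue that the integer prescribed by the corollary is the one that minimises the genuinely discrete edgesum.

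Collecting the $x$-dependent terms of Equation~\ref{eqn:mitchcost2} gives $g(x) = -\tfrac23 x^3 + 2N x^2 - \bigl(N^2 + N - \tfrac23\bigr)x + \mathrm{const}$, so $g'(x) = -2x^2 + 4Nx - \bigl(N^2 + N - \tfrac23\bigr)$ and the stationary points solve $2x^2 - 4Nx + N^2 + N - \tfrac23 = 0$. Its discriminant is $8N^2 - 8N + \tfrac{16}{3} = 4\bigl(2N^2 - 2N + \tfrac43\bigr)$, which is positive for every $N$ (the quadratic $2N^2 - 2N + \tfrac43$ has negative discriminant in $N$), so the two real roots are $x_\pm = N \pm \tfrac12\sqrt{2N^2 - 2N + \tfrac43}$. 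Since the square root is strictly positive, $x_+ > N$ lies outside the admissible range (a square of side exceeding $N$ cannot fit in the lattice) and is discarded, leaving $x^\ast := x_- = N - \tfrac12\sqrt{2N^2 - 2N + \tfrac43}$ as the sole candidate. The second derivative $g''(x) = 4(N - x)$ is strictly positive for $0 \le x < N$, so $g$ is strictly convex there and $x^\ast$ is its unique interior minimiser; moreover $x^\ast \approx 0.29\,N$, consistent with the value quoted for $f_\mathrm{M}$, and the hypothesis $N \ge 5$ keeps $x^\ast$ strictly inside the admissible interval (for $N = 5$, $x^\ast \approx 1.79$), the handful of smaller lattices being checkable directly.

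It remains to pass from the real minimiser $x^\ast$ to an integer. Strict convexity of $g$ on the relevant interval already forces the integer minimiser to be one of $\lfloor x^\ast \rfloor$ and $\lceil x^\ast \rceil$, so the only substantive point is that it is the \emph{nearer} of the two. I would verify this by comparing $g$ at the two candidates directly: writing $u := \lfloor x^\ast \rfloor - x^\ast \in (-1, 0]$, the finite Taylor expansion of the cubic $g$ about $x^\ast$, together with $g'(x^\ast) = 0$, yields $g(\lfloor x^\ast\rfloor + 1) - g(\lfloor x^\ast\rfloor) = \tfrac12 g''(x^\ast)(2u + 1) + \tfrac16 g'''(x^\ast)(3u^2 + 3u + 1)$. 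Here $g''(x^\ast) = 4(N - x^\ast) = 2\sqrt{2N^2 - 2N + \tfrac43} = \Theta(N)$, while $g''' \equiv -4$ and $3u^2 + 3u + 1 \in [\tfrac14, 1]$ for $u \in (-1, 0]$, so the sign of this difference coincides with that of $2u + 1$ except possibly when $x^\ast$ lies within $O(1/N)$ of a half-integer; for $N \ge 5$ one checks from the explicit formula for $x^\ast$ that this exceptional regime does not arise, so rounding $x^\ast$ to the nearest integer indeed gives the edgesum-minimising side-length. Substituting that integer back into Equation~\ref{eqn:mitchcost2} produces the claimed minimum edgesum over all enumeration schemes of the $N \times N$ lattice.

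The main obstacle is precisely this final step. Convexity narrows the integer optimum to the two neighbours of $x^\ast$, but it does not by itself license ``round to the nearest integer'': the cubic correction contributes a bounded but nonzero bias that could, in a near-tie, tip the comparison the ``wrong'' way, so one must control it quantitatively -- which is where the hypothesis $N \ge 5$ (and, if needed, a finite check of small cases) enters. By contrast, the calculus producing the closed form for $x^\ast$ and the discarding of the spurious root $x_+ > N$ is entirely routine.
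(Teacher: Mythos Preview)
Your approach is essentially the same as the paper's: treat $x$ as a continuous variable and use calculus to locate the stationary point. The paper's proof is literally the one-line sentence ``Treat $N$ and $x$ as continuous variables, and use calculus to find that minimum of $C^1(\dot{h})$ in Equation~\ref{eqn:mitchcost2} occurs when $x = N-\frac{1}{2}\sqrt{2N^2-2N+\frac{4}{3}}$'', so your explicit computation of $g'$, the discriminant, the rejection of the root $x_+>N$, and the convexity check via $g''(x)=4(N-x)>0$ all go well beyond what the paper supplies.

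Where you genuinely add something is the discussion of the passage from the real minimiser to the nearest integer. The paper asserts this with no argument at all, whereas you correctly observe that convexity only pins the integer optimum to $\{\lfloor x^\ast\rfloor,\lceil x^\ast\rceil\}$ and that a cubic correction could in principle flip the comparison near half-integers. Your Taylor-expansion estimate is the right shape of argument, and your honest flagging of the residual ``check that $x^\ast$ is not within $O(1/N)$ of a half-integer'' step is apt; the paper simply does not address this subtlety (and indeed remarks afterwards that the cruder approximation $(1-1/\sqrt{2})N$ already rounds to the same integer ``for most values of $N$''). So your treatment is both consistent with and more careful than the paper's.
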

\begin{proof}
	Treat $N$ and $x$ as continuous variables, and use calculus to find that minimum of $C^1(\dot{h})$ in Equation \ref{eqn:mitchcost2} occurs when $x = N-\frac{1}{2} \sqrt{2N^2-2N+\frac{4}{3}}$.
\end{proof}

This completes the proof.		 The optimal value for $x$ in Corollary \ref{cor:optx} is a refinement on the value $x=\left(1-\frac{1}{\sqrt{2}}\right)N$ that Mitchison and Durbin give in \cite{mitchison1986optimal}. Rounding this approximate value gives the same integer as $N-\frac{1}{2} \sqrt{2N^2-2N+\frac{4}{3}}$ for most values of $N$. 

\begin{figure}
	\centering
	\includegraphics[width=0.55\linewidth]{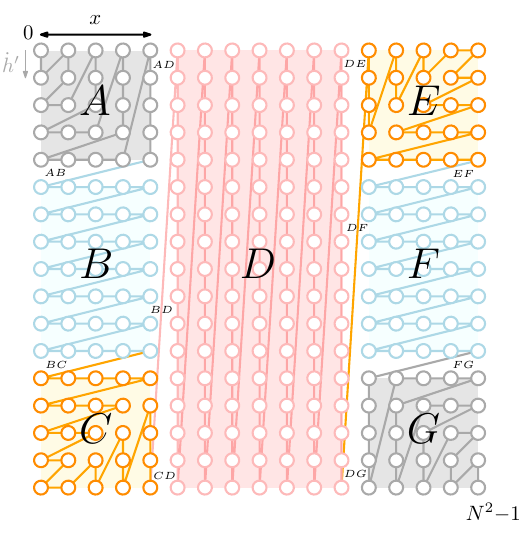}
	\caption{Sections of the $N${}$\times${}$N$ square lattice and an enumeration pattern $\dot{h}'$ that satisfies Lemma \ref{lem:mitchcost}.}
	\label{fig:mitchcount}
\end{figure}
\begin{figure}
	\centering
	\includegraphics[height=0.9\textheight]{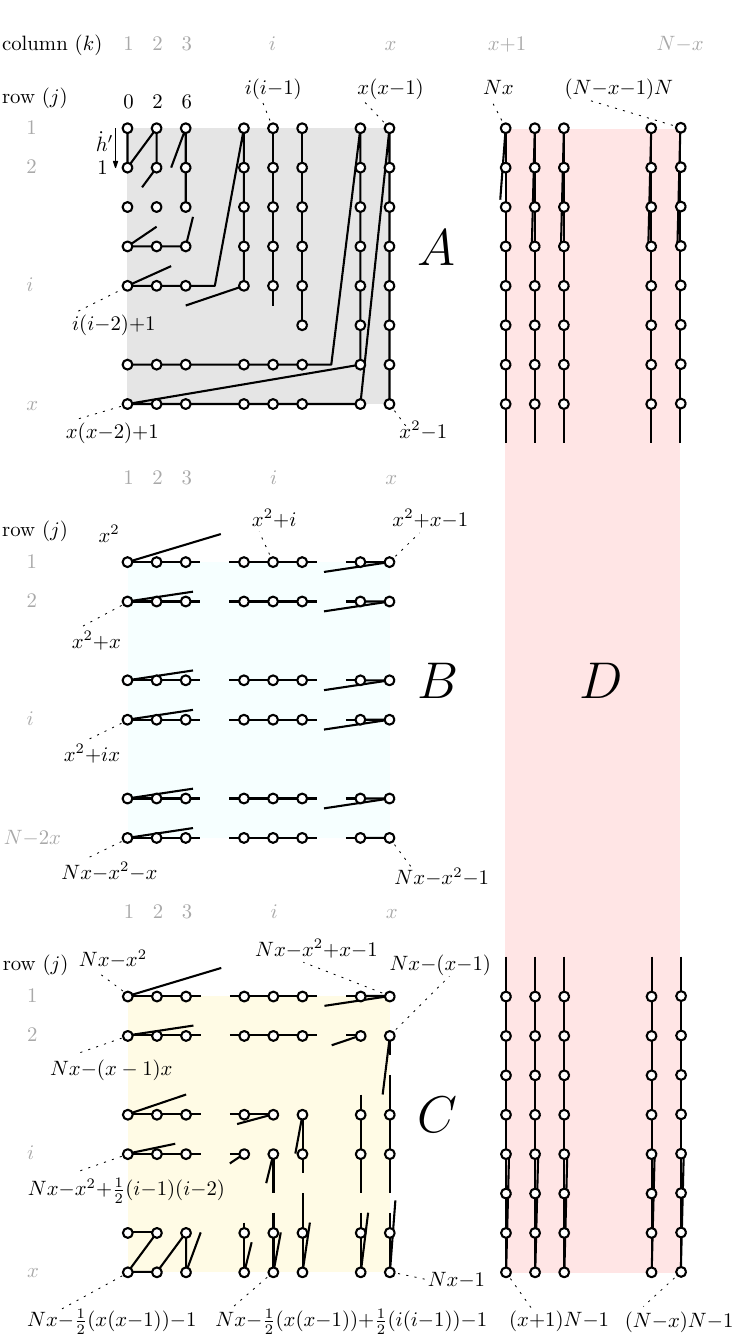}
	\caption{An in-depth look at the enumeration scheme $\dot{h}'$ that arises from Lemma \ref{lem:mitchcost}. Vertex labels indicate the value of each vertex under the enumeration scheme.}
	\label{fig:asum}
\end{figure}

\subsection{Edgesum of the Mitchison--Durbin pattern}\label{sec:fmcost}

This section calculates the edgesum of the enumeration pattern $\dot{h}'$ in Lemma \ref{lem:mitchcost}.
Divide the enumeration pattern on the $N${}$\times${}$N$ grid up into regions $A,B,\dots,G$ as in Figure \ref{fig:mitchcount}. Let the label of each region also denote its edgesum. Therefore
\begin{align} \label{eqn:regionsum}
	C^1(\dot{h}') &= A+B+C+D+E+F+G \\ \nonumber & \quad +AB + AD + BD + BC + CD \\ \nonumber & \quad + DE + EF + DF + FG + DG\, , 
\end{align}
where $AB,\dots,DG$ denote the sums of the differences between vertex labels across the interfaces between each pair of regions. Due to the symmetry of $\dot{h}'$, Equation \ref{eqn:regionsum} becomes
\begin{align}
	C^1(\dot{h}) &= \label{eqn:simplesum} 2(A+B+C)+D  + 2(AB+AD+BC+BD+CD)\, . 
\end{align}

We can derive expressions for the contributions of each region to $C^1(\dot{h}')$ by observing the patterns in the differences between vertex labels. Figure \ref{fig:asum} shows the progression of vertex labels in regions $A$--$D$ of the square lattice. Within each of these regions, the enumeration scheme is horizontally and vertically ordered. Thus, using Equation \ref{eqn:lrtb} on regions $A$ and $C$ gives:
\begin{align}
	A &=  \bigg(\sum_{k=1}^{x-1} \left( x(x-2) +  k \right) + x^2-1 + \sum_{j=1}^x \left(x(x-1)+j-1\right) \bigg) \\
	& \quad - \bigg( \sum_{k=1}^x \left(k(k-1)\right) + \sum_{j=1}^x \left(j(j-2) + 1 \right) \bigg) \nonumber \\
	&= \frac{1}{6}(x-1)(6+x(8x-1))  \label{eqn:first} \\
	&=C  \, .
\end{align}
The other regions' contributions are more straightforward to calculate:
\begin{align}
	B &= (N-2x)(x-1)+x^2(N-2x-1)\\
	D &= N^3 - 2xN^2 - 2xN + 2x - N \, .
\end{align}
The vertex labels in Figure \ref{fig:asum} also make it simple to calculate the contributions to $C^1(\dot{h}')$ from the interfaces between the regions:
\begin{align}
	AD &= \sum_{j=1}^x \bigg(\left( x(x-1)+j -1\right) - \left(Nx+j-1\right) \bigg)  =  Nx^2 - x^3 + x^2 \\
	AB &= 1-2x+2x^2 \\
	BD &= \frac{1}{2} \big(N^2 + N - 3xN + xN^2  + 2x^2 -2x -2x^2 N \big)  \\
	BC &= x^2 \\ 
	CD &= 1-2x+xN+x^2\, . \label{eqn:last}
\end{align}
Substituting Equations \ref{eqn:first}--\ref{eqn:last} into Equation \ref{eqn:simplesum} gives the result,
\begin{align}
	C^1(\dot{h}') &= N^3-xN^2+2x^2N-\frac{2}{3}x^3+N^2  -xN -2N+\frac{2}{3}x \, . 
\end{align}

\section{Constructing the enumeration pattern $f_{\mathrm{M}+2}$}\label{sec:augmentedapp}

Begin with a data register of $N^2$ qubits, and take a fermionic Hamiltonian $H_\text{fermion}$ of the form in Equation \ref{eqn:problemham}, where $G_\mathrm{F}=(V,E)$ is the $N${}$\times${}$N$ square lattice. As the quartic terms become local under any Jordan--Wigner transformation, consider only the hopping terms of the resulting qubit Hamiltonian
\begin{align}
	H_\text{q} &\coloneqq \overline{\text{JW}}_{\dot{h}'} \left( \sum_{(\alpha, \beta) \in E} a_\alpha^\dagger a_\beta + a_\alpha a_\beta^\dagger \right) =  \sum_{h \in\Lambda}c_h h \, ,
\end{align}
where $\Lambda$ is the set of terms of $H_{\text{q}}$, and $\dot{h}'$ is the Mitchison--Durbin pattern with a variable value for $x$, the side--length of the distinctive square in the corners of the pattern, as in Lemma \ref{lem:mitchison}. In $f_\text{M}$ the value $x\approx$\ $0.29 N$; however this only minimises $C^1(f_\mathrm{M})$ for Jordan--Wigner transformations with zero ancilla qubits. Since we are permitting ancillas, as in the proof of Theorem \ref{thm:physical}, treat $x \in \{1,2,\dots,N/2\}$ as a variable, its optimal value currently unknown.

Note that the edgesum $C^1(f_{\mathrm{M}})$ has two costly contributions: the $\mathcal{O}(x^2N)$--valued difference between adjacent vertices in the $AD$-- and $DG$--interfaces as labelled in Figure \ref{fig:mitchcount}. We will construct an auxiliary qubit mapping that reduces the contributions of these edges to $C^1(f_{\mathrm{M}})$, and hence reduces the average Pauli weight of the qubit Hamiltonian.

The fermion--qubit mapping we construct here uses only two ancilla qubits and takes a similar approach to the E--type auxiliary qubit mapping from \cite{steudtner2019quantum}. We will follow the formulation of auxiliary qubit mappings from Section \ref{sec:aqm}, making sure our mapping satisfies the two requirements that section lays out:

\begin{figure}
	\centering
	\includegraphics[width=\linewidth]{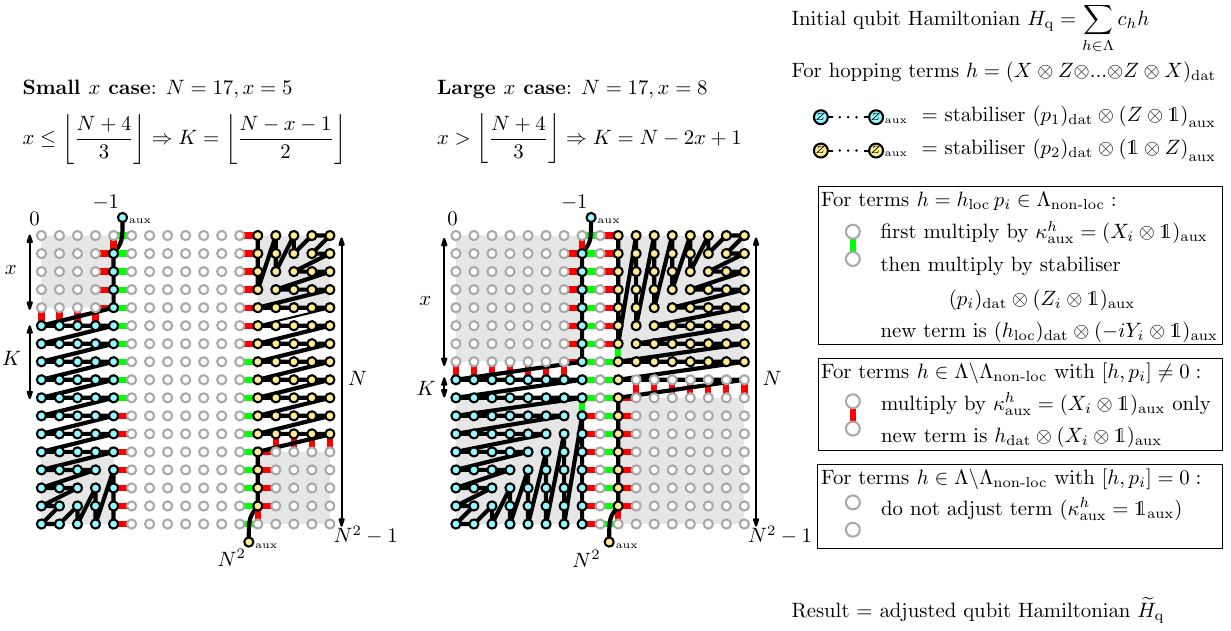}
	\caption{Stabilisers and adjusted Hamiltonian hopping terms for the auxiliary qubit mapping using our augmented Mitchison--Durbin pattern, `$f_{\dot{h}'+2}$'. Initialising the $(N^2${}$+${}$2)$--qubit system in a state stabilised by the strings $p_i \otimes \sigma_i$ nullifies the most costly contributions to the edgesum of the Mitchison--Durbin pattern, resulting in a significantly reduced average Pauli weight for the qubit Hamiltonian $\widetilde{H}_\text{q}$. As the text details, setting $x \approx 0.40N$ in the enumeration scheme $\dot{h}'$ yields the minimum average Pauli weight.}
	\label{fig:augmitchison}
\end{figure}

\begin{enumerate}
	\item Choose $p$--strings $\{p_1,p_2\}$ that correspond to the longest individual strings of Pauli $Z$ matrices of any term $h \in \Lambda$. In this case, we choose
	\begin{align}
		p_1 &=\left(\bigotimes_{i=x(x-1)+1}^{xN-1} Z_i \right) \otimes \left( \bigotimes_{\text{other } } \mathds{1}_i \right)
		\\ p_2 &= \left( \bigotimes_{i=(N-x)N}^{N^2-x^2+x-2} Z_i \right) \otimes \left(\bigotimes_{\text{other }i} \mathds{1}_i \right)\, .
	\end{align}
	These bridge the gaps between the outermost vertices on the $AD$-- and $DG$--interfaces, respectively. Figure \ref{fig:augmitchison} depicts the two $p$--strings for differing values of $x$; the indices of the first and last qubits of each $p$--string result from inspecting the label guide in Figure \ref{fig:asum}.
	
	Then, define $\Lambda_{\text{non-loc}}$ to be all of the Hamiltonian terms $h \in \Lambda$ that contain less $Z$ matrices upon multiplication by either $p_1$ or $p_2$. Note that none of the $h \in \Lambda$ would benefit from multiplication by \textit{both} $p$--strings, unless $x=N/2$ exactly as, in that case, there is no column of vertices separating the stabilisers. Figure \ref{fig:6x6aug} shows this in detail with $N=6$ and $x=3$. These $p$--strings trivially satisfy \textbf{Requirement 1}'s preliminary condition that $[p_1,p_2]=0$.
	
	\item Introduce two auxiliary qubits, labelled $-1$ and $N^2$. Define the unitary mapping $V$ to be a cascade of controlled--NOT operations, storing the net parity of the qubits appearing in $p_i$ in the phase of the $i$th auxiliary qubit.  Figure \ref{fig:Vcirc} shows a valid circuit for $V$.
	
	For any state $\ket{\psi}_\text{dat}$ of the original $N^2$--qubit system, now consider the state
	\begin{align}
		\ket{\smash{\widetilde{\psi}}}_\text{dat,aux} = V\ket{\psi}_\text{dat}\ket{0}^{\otimes 2}_\text{aux}\, .    \end{align}
	This state has two stabilisers:
	\begin{align}
		\big((p_1)_\text{dat} \otimes (Z \otimes \mathds{1})_\text{aux}\big) \ket{\smash{\widetilde{\psi}}}_\text{dat,aux}=\ket{\smash{\widetilde{\psi}}}_\text{dat,aux}\, ,\\
		\big((p_2)_\text{dat} \otimes (\mathds{1} \otimes Z)_\text{aux}\big) \ket{\smash{\widetilde{\psi}}}_\text{dat,aux}=\ket{\smash{\widetilde{\psi}}}_\text{dat,aux}\, .
	\end{align}
	The existence of the stabiliser state $\ket{\smash{\widetilde{\psi}}}_\text{dat,aux}$ satisfies \textbf{Requirement 1}.
	\item To identify which terms $h \in \Lambda$ do not commute with $p_1$ and $p_2$, recall the form of the hopping terms from Equation \ref{eqn:grp}. The only relevant feature of a term $h$ is the location of the endpoint matrices of its non--identity Pauli strings, i.e.\ its $X$ and $Y$ matrices. For brevity, and without loss of generality, abbreviate the terms to just the Pauli string $XZZ...ZX$:
	\begin{align}
		h_\text{dat}= \left( X \otimes Z \otimes \dots \otimes Z \otimes X \right) \otimes \left( \bigotimes_{\text{other}}\mathds{1} \right)\, .
	\end{align}
	As $p_1$ and $p_2$ comprise only $Z$ matrices, $[h,p_i]=0$ if and only if the Pauli string $XZZ...ZX$ of $h$ intersects completely, or not at all, with the Pauli string $ZZZ...Z$ of $p_i$, because $[X,Z]=-2iY \neq 0$. Thus the vast majority of the hopping terms already commute with both $p$--strings; it is only the hopping terms with Pauli strings that have one endpoint in $p_i$ and the other not in  $p_i$ that need to be modified so that they commute with both stabilisers. If  $[h,p_1] \neq 0$, make the adjustment
	\begin{align}
		h_\text{dat} &\longmapsto h_\text{dat} \otimes \left( X \otimes \mathds{1}\right)_\text{aux}\, ;
	\end{align}
	if  $[h,p_2] \neq 0$, make the adjustment
	\begin{align}
		h_\text{dat} &\longmapsto h_\text{dat} \otimes \left(\mathds{1} \otimes X \right)_\text{aux}\, .
	\end{align}
	If $x=N/2$ exactly, then some of the horizontal hopping terms across the $BD$-- and $DF$--interfaces will not commute with either $p_1$ or $p_2$, and need both adjustments (as depicted in Figure \ref{fig:6x6aug}).
	Abbreviating each correction by $\kappa^h$, adjusted Hamiltonian terms then satisfy
	\begin{align}
		[h_\text{dat} \otimes \kappa^h_{\text{aux}},(p_i)_\text{dat} \otimes (\sigma_i)_{\text{aux}}]=0
	\end{align}
	for $i=1,2$ because $[X_\text{dat} \otimes X_\text{aux}, Z_\text{dat} \otimes Z_\text{aux}]=0$. Figure \ref{fig:augmitchison} shows which hopping terms need adjustment in this way.

	\begin{figure}
		\centering
		\includegraphics[width=0.8\linewidth]{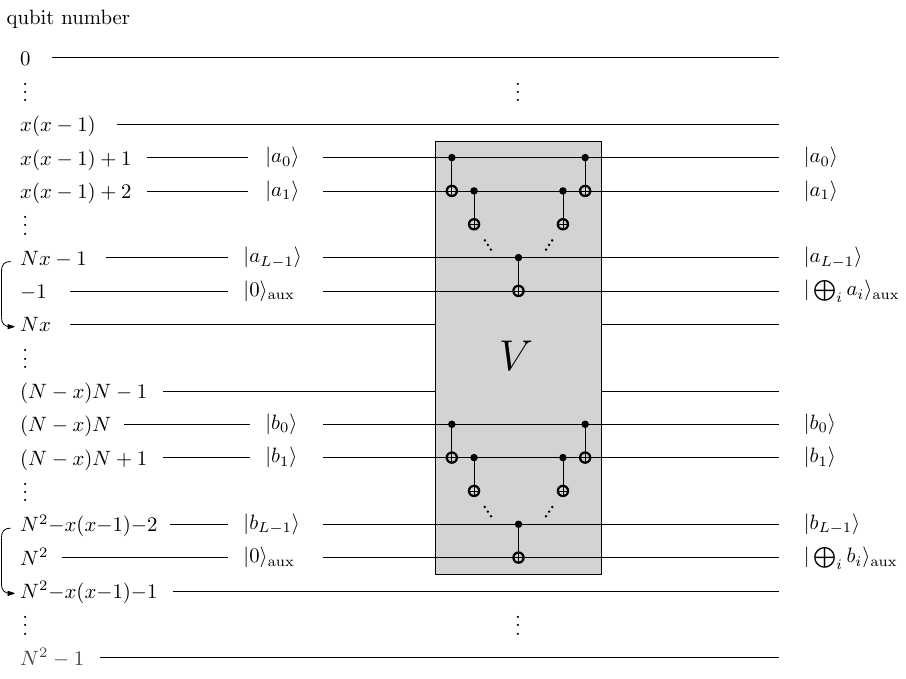}
		\caption{A circuit for $V$, which initialises the stabiliser state $\ket{\smash{\widetilde{\psi}}}_\text{dat,aux}$ from $\ket{\psi}_\text{dat} \ket{0}^{\otimes 2}_\text{aux}$ while preserving the time--evolution of the underlying data system.}
		\label{fig:Vcirc}
	\end{figure}
	
	Finally, check that $V$ and $\kappa^h$ satisfy \textbf{Requirement 2}. Let $h$ be a hopping term that does not commute with $p_1$.
	Consider an arbitrary state for the quantum register
	\begin{align}\ket{\psi}_\text{dat} = \sum_{\vec{a},\smash{\vec{b}}, \vec c} \gamma \ket{a_0 a_1  \dots ,\, b_0 b_1 \dots, \, c_0 c_1 \dots }_\text{dat}\, ,\end{align}
	where $a_j,b_j,c_j\in \{0,1\}$ such that $\vec{a}$ contains the parities of all the qubits involved in $p_1$, $\vec{b}$ contains the parities of the qubits not involved in either $p$--string, and $\vec{c}$ contains the parities of the qubits involved in $p_2$. The coefficients $\gamma \in \mathds{C}$ depend on the bit strings $\vec{a}$, $\vec{b}$ and $\vec{c}$.
	
	As $h$ does not commute with $p_1$, one of the $X$ matrices in its Pauli string must act on a qubit in $\vec{a}$ and the other $X$ matrix most act on a qubit in either $\vec{b}$ (if $x < N/2$) or $\vec{c}$ (if $x = N/2$ exactly). Consider the case $x<N/2$ and let the endpoint qubits of the Pauli string of $h$ be the $j$th and $k$th qubits in each string, respectively. Therefore, the output of $h\ket{\psi}_\text{dat}$ is
	\begin{align}
		h\ket{\psi}_{\text{dat}} =  \sum_{\vec{a},\smash{\vec{b}}, \vec c} \gamma' \ket{a_1  ... \, \overline{a_j} ... , \, b_1  ...  \, \overline{b_k} ..., \, c_1 c_2 ... }_\text{dat}\, ,
	\end{align}
	where each $\gamma'$ is equal to $\gamma$ up to a sign depending on the effect of the $Z$ matrices in $h$.
	Meanwhile, our prescription for $\kappa^h$ is to apply an $X$ on the first auxiliary qubit. Noting that $V=V^\dagger$, observe that
		\begin{align}
			V (h \otimes \kappa^h) \ket{\smash{\widetilde{\psi}}} &= V(h \otimes \kappa^h) \sum_{\vec{a},\smash{\vec{b}} , \vec c} \gamma \ket{a_0 a_1  ... ,\, b_0 b_1 ..., \, c_0 c_1 ... }_\text{dat} \otimes \left(\ket{\oplus_l \, a_l} \ket{\oplus_l \, c_l}\right)_\text{aux}\\
			&= V \sum_{\vec{a},\smash{\vec{b}}, \vec c} \gamma' \ket{a_0... \overline{a_j}..., \, b_0 ... \overline{b_k} ..., \, c_0c_1...}_\text{dat} \otimes \left(|\overline{\oplus_l \, a_l}\rangle \ket{\oplus_l \, c_l }\right)_\text{aux} \\
			&= V \sum_{\vec{a},\smash{\vec{b}}, \vec c} \gamma' \ket{a_0... \overline{a_j}..., \,  b_0 ... \overline{b_k} ... , \, c_0 c_1 ...}_\text{dat} \otimes \left(| a_0 \oplus ... \oplus \overline{a_j} \oplus ...\rangle \ket{\oplus_l \, c_l }\right)_\text{aux} \label{eqn:vunuse}\\
			&= \sum_{\vec{a},\smash{\vec{b}}, \vec c} \gamma' \ket{a_0... \overline{a_j}..., \,  b_0 ... \overline{b_k} ... , \, c_0c_1...}_\text{dat} \otimes \left( \ket{0} \ket{0} \right)_\text{aux} \label{eqn:vuse}\\
			&= h\ket{\psi}_\text{dat} \otimes \ket{0}^{\otimes 2}_\text{aux}\, ,
		\end{align}
	as required. Progressing from Equation \ref{eqn:vunuse} to Equation \ref{eqn:vuse} uses fact that $V$ is a cascade of controlled--NOT gates. A symmetric argument follows for all $h \in \Lambda$ which do not commute for $p_2$. For the case $x=N/2$, there will also be Hamiltonian terms that flip $a_j$ and $c_k$, and a similar argument applies.
\end{enumerate}

This establishes a modified qubit Hamiltonian $\widetilde{H}_\text{q}$ via Equation \ref{eqn:morelocal} and hence an auxiliary qubit mapping with two ancilla qubits, which we denote by $\overline{\text{JW}}_{\dot{h}'+2}$:
\begin{align}
	\overline{\text{JW}}_{\dot{h}'+2} : \big(H_\text{fermion} \text{ hopping terms}\big) \longmapsto \widetilde{H}_\text{q}\, .
\end{align}
By construction, this mapping follows the rules set out in Section \ref{sec:aqmintro}. The mapping has one degree of freedom: the value of $x$ in the enumeration scheme $\dot{h}'$, which relates the $n$ fermionic modes to the $n$ data qubits. 

As established in Section \ref{sec:enumeration} for $n$--mode to $n$--qubit mappings, there are many possible cost functions for the qubit Hamiltonian. In extending this concept to our new $n$--mode to $(n+2)$--qubit mapping $\overline{\text{JW}}_{\dot{h}'+2}$, we will determine the optimal value for $x$ in order to minimise the average Pauli weight of $\widetilde{H}_\text{q}$.

The \textit{total} Pauli weight of all terms in the modified Hamiltonian $\widetilde{H}_\text{q}$ is (1) the total Pauli weight of the original $H_\text{q}$, plus (2) the difference between the new and old $Z$--matrix counts of the terms in $\Lambda_{\text{non-loc}}$ after multiplication by stabilisers, plus (3) the cost of extra operations on the auxiliary qubits from making the $\kappa^h$ adjustments to terms in $\Lambda \backslash \Lambda_\text{non-loc}$. That is,
\begin{align}
	& \text{Total Pauli weight of $\widetilde{H}_\text{q}$} \label{eqn:words}\\ 
	& \qquad = \quad  \text{(1) Total Pauli weight of $H_\text{q}$} \nonumber \\
	& \qquad \quad + \text{(2) difference in $Z$--matrix count of $\widetilde{H}_\text{q}$ and $H_\text{q}$} \nonumber \\
	& \qquad \quad + \text{(3) number of adjusted terms in $\widetilde{H}_\text{q}$ that are not multiplied by stabilisers} \nonumber
\end{align}

Begin by only considering the effects of the first stabiliser, $(p_1)_\text{dat} \otimes \left(Z \otimes \mathds{1}\right)_\text{aux}$. For (1), recall from Equation \ref{eqn:avgpauli} that the total Pauli weight of the square lattice qubit Hamiltonian $H_\text{q}$ is $C^1(f_\mathrm{M}) + 2N(N-1)$.

For (2): First determine which hopping terms should be multiplied by stabilisers. In the original Hamiltonian $H_\text{q}$, a term $h \in \Lambda$ between vertices $\alpha$ and $\beta$ of $G_\mathrm{F}$ has Pauli weight $|f_\mathrm{M}(\alpha)-f_\mathrm{M}(\beta)|+1$; as hopping terms consist of a string of $Z$ matrices between a pair of $X$ or $Y$ matrices, the total number of $Z$ matrices in such a term is then 
\begin{align}
	\# (Z\text{ matrices in $h$}) = |f_\mathrm{M}(\alpha)-f_\mathrm{M}(\beta)|-1\, . \label{eqn:zcount}
\end{align}
Referring to the region labels $A,B,...,D$ in Figure \ref{fig:asum}, it is clear that all of the lengthy hopping terms between vertices on the $AD$--interface will have less $Z$ matrices after multiplication by the stabiliser $(p_1)_\text{dat} \otimes \left(Z \otimes \mathds{1}\right)_\text{aux}$. Using the labels from Figure \ref{fig:asum} in Equation \ref{eqn:zcount}, the original number of $Z$ matrices in \textit{each} of these terms is then
\begin{align}
	&\#(Z\text{ matrices in each row of $AD$, {before}}) = Nx-x(x-1)-1\, . 
\end{align}
By inspection, the topmost hopping term across the $AD$--interface after multiplication by $(p_1)_\text{dat} \otimes \left(Z \otimes \mathds{1}\right)_\text{aux}$ will consist of a single $Z$ matrix on the first auxiliary qubit. For the hopping terms across the $AD$--interface in the $j$th row beneath the top row, there will be $2j$ $Z$ matrices \textit{after} multiplication by $(p_1)_\text{dat} \otimes \left(Z \otimes \mathds{1}\right)_\text{aux}$, including the one on the auxiliary qubit. Thus, the correction term for the $AD$--interface is
\begin{align}
	&\Delta [\#(\text{$Z$ matrices in $AD$})] \\ \nonumber & \qquad \qquad \qquad = 1-\big(Nx-(x(x-1))\big)  + \sum_{j=1}^{x-1} \bigg( 2j-\big(Nx-(x(x-1))\big) \bigg)\, . 
\end{align}

Hopping terms across the $BD$--interface and the first row of the $CD$--interface may benefit from multiplication by $(p_1)_{\text{dat}} \otimes (Z \otimes \mathds{1})_{\text{aux}}$. Call the collection of these hopping terms the $(BC)D$--interface, and consider its $i$th row: once again, the labels from Figure \ref{fig:asum} in Equation \ref{eqn:zcount} yield
\begin{align}
	& \#(Z\text{ matrices in $i$th row of $(BC)D$, {before}})  = Nx +x -1 + i - (x^2 + ix - 1) - 1\, .
\end{align}
By inspection, these terms will have $(2+i)x+i-2$ $Z$ matrices \textit{after} multiplication by the stabiliser $(p_1)_\text{dat} \otimes \left(Z \otimes \mathds{1}\right)_\text{aux}$, including the $Z$ matrix on the auxiliary qubit.

As Figure \ref{fig:augmitchison} shows, if $x$ is large enough then it is possible that a small number of vertical hopping terms within the region $C$ will benefit from stabiliser multiplication as well. Although this reduces the average Pauli weight of the qubit Hamiltonian even further, its effect is negligible compared to the savings across the $(BC)D$--interface, and we omit it for simplicity. The correction term for the $(BC)D$--interface is thus
\begin{align}
	&\Delta [\#(\text{$Z$ matrices in $(BC)D$})] \label{eqn:bcdinter} \\ \nonumber & \qquad \qquad \qquad  = \sum_{i=1}^{K} \bigg( (2+i)x+i-2  - \big(Nx + x -1+ i - (x^2+ix-1)-1\big) \bigg)\, . 
\end{align}
The value $K$ depends on the size of $x$, as Figure \ref{fig:augmitchison} illustrates. From a naïve inspection of Equation \ref{eqn:bcdinter}, the $i$th row of the $(BC)D$--interface hopping terms will observe a reduction in $Z$--matrix count after stabiliser multiplication as long as
\begin{align}
	(2+i)x+i-2 &< Nx+x+i-(x^2+ix+1)\\
	i &< \frac{1}{2x} + \frac{N-x-1}{2}\\
	&\leq \frac{N-x-1}{2}\, .
\end{align}
This might tempt us to set $K=\left\lfloor (N-x-1)/2\right\rfloor$. However, there are $N-2x+1$ rows in the $(BC)D$--interface, so the expression in Equation \ref{eqn:bcdinter} is only valid if
\begin{align}
	K = \min\left\{\left\lfloor \frac{N-x-1}{2} \right \rfloor,\, N-2x+1\right\}\, .
\end{align}
These two values are equal if $x=\lfloor(N+4)/3\rfloor$. Figure \ref{fig:augmitchison} shows the two scenarios where $x$ is above or below this threshold value.

For (3), note that we must adjust all of the hopping terms $h$ that do not commute with $p_1$ by multiplication by $\kappa^h$, which is a single $X$ matrix targeting the first auxiliary qubit. However, in (2) we have already accounted for the cost of an extra auxiliary $Z$ matrix on each of the terms in the $AD$-- and $BD$--interfaces that benefited from stabiliser multiplication. This $Z$ matrix will become $(-iY)$ upon prior adjustment by $\kappa^h$ (premultiplication by $X$). We therefore need only count hopping terms that partially overlap with $p_1$: by inspection of Figure \ref{fig:augmitchison}, the total number of such terms is
\begin{align}
	& \#(\text{$X$ matrices from correction to terms}) = 2x-1+(N-x-K-1) \\ &\quad =N+x-K-1\, . \nonumber
\end{align}

As the Mitchison--Durbin pattern is symmetric, the same rules apply when considering the effects of the second stabiliser, $(p_2)_\text{dat} \otimes \left(\mathds{1} \otimes Z \right)_\text{aux}$. We can now finally substitute these values into Equation \ref{eqn:words} and express the total Pauli weight of $\widetilde{H}_\text{q}$. Note that the factor of 2 in Equation \ref{eqn:tpv} takes into account both stabilisers:

\begin{align}
	\text{Total Pauli weight of } \widetilde{H}_\text{q}  \label{eqn:tpv} &= C^1(f_\mathrm{M}) + 2N(N-1)  \\ &  \quad + 2 \bigg( \left(1-(Nx -x(x-1)-1\right)  +\sum_{j=1}^{x-1} \big(2j-(Nx-x(x-1))\big) \nonumber\\ &   \qquad  +\sum_{i=1}^{K} \left( \left( (2+i)x+i-2\right)   - \left(Nx + x + i - (x^2+ix+1) \right) \right) \nonumber \\ &  \qquad  +N+x-K-1 \bigg) \nonumber \, .
\end{align}

For example, using the value for $x \approx 0.29N$ from the original Mitchison--Durbin pattern $f_\text{M}$ and dividing the total Pauli weight by the number of hopping terms, $2N(N-1)$, the average Pauli weight of our auxiliary qubit mapping, here nicknamed `$f_{\dot{h}'+2}$', produces a dramatic improvement:
\begin{align}
	\text{APV}\left(f_{\dot{h}'+2}\big|_{x\approx 0.29N}\right) &\approx 0.33N+1.77
\end{align}
However, in the scenario $x \leq \lfloor(N+4)/3\rfloor$ with $K = \lfloor(N-x-1)/2 \rfloor$, the extremum $x=\lfloor (N+4)/3 \rfloor$ actually minimises the expression in Equation \ref{eqn:tpv}. The average Pauli weight for large $N$ in this instance is:
\begin{align}
	\text{APV}\left(f_{\dot{h}'+2}\Big|_{x=\left\lfloor\frac{N+4}{3}\right\rfloor}\right) &\approx \frac{26}{81}N+\frac{289}{162} = 0.32N+1.78\, .
\end{align}

But there is a better value yet for $x$: in the other scenario, where $x > \lfloor (N+4)/3 \rfloor$, the value of $K$ is $N-2x+1$. An approximately optimal value for $x$ to minimise the total Pauli weight in this instance is
\begin{align}
	x &= \frac{1}{8}\left(7+N+\sqrt{\frac{15N^2-18N-53}{3}}\right) \label{eqn:actualbestx2}\\
	&\approx \frac{1}{8}\left(1+\sqrt{5}\right)N\\
	&= 0.40 N\, , \label{eqn:approxx2}
\end{align}
which is the result of replacing floor brackets with normal parentheses in Equation \ref{eqn:tpv}. It is for this reason that we define our new pattern $f_{\mathrm{M}+2}$ to have $x$ equal to the rounded value of that in Equation \ref{eqn:actualbestx2}, i.e.\ $x \approx 0.40N$. For large $N$, this yields an average Pauli weight of
\begin{align}
	\text{APV}\left(f_{\mathrm{M}+2}\right) &= \text{APV}\left(f_{\dot{h}'+2} \big|_{x\approx 0.40N}\right) \approx 0.31N+1.68\, .
\end{align}

This analysis does not include the cost of initialising the stabiliser state $\ket{\smash{\widetilde{\psi}}}_{\text{dat,aux}}$ via the unitary operation $V$. It is possible to implement $V$ using fewer gates than the circuit in Figure \ref{fig:Vcirc} at the cost of more ancilla qubits \cite{jiang2020optimal}. However, even adding the cost of the circuit in Figure \ref{fig:Vcirc}, which is $4(L-1)+2 = 4\big((Nx-1)-(x(x-1)+1)-1\big)+2$ $= 4Nx -x^2 + \mathcal{O}(x)$ CNOTs, to Equation \ref{eqn:tpv} has negligible effects on the calculations in this section. The value $x \approx 0.40N$ in Equation \ref{eqn:approxx2} still holds.

\bibliographystyle{quantum}
\bibliography{refs}

\end{document}